\begin{document}

\pagenumbering{Roman}

\hypersetup{pageanchor=false}
\title{Dynamic Treewidth in Logarithmic Time\thanks{The research leading to these results was funded by the European Union under Marie Skłodowska-Curie Actions (MSCA), project no. 101206430, and by VILLUM Foundation, Grant Number 16582, Basic Algorithms Research Copenhagen (BARC).}}

\author{Tuukka Korhonen\thanks{Department of Computer Science, University of Copenhagen, Denmark. \texttt{tuko@di.ku.dk}}}


\maketitle

\thispagestyle{empty}

\begin{abstract}
We present a dynamic data structure that maintains a tree decomposition of width at most $9k+8$ of a dynamic graph with treewidth at most $k$, which is updated by edge insertions and deletions.
The amortized update time of our data structure is $2^{\OO(k)} \log n$, where $n$ is the number of vertices.
The data structure also supports maintaining any ``dynamic programming scheme'' on the tree decomposition, providing, for example, a dynamic version of Courcelle's theorem with $\OO_{k}(\log n)$ amortized update time; the $\OO_{k}(\cdot)$ notation hides factors that depend on $k$.
This improves upon a result of Korhonen, Majewski, Nadara, Pilipczuk, and Sokołowski~[FOCS~2023], who gave a similar data structure but with amortized update time $2^{k^{\OO(1)}} n^{o(1)}$.
Furthermore, our data structure is arguably simpler.

Our main novel idea is to maintain a tree decomposition that is ``downwards well-linked'', which allows us to implement local rotations and analysis similar to those for splay trees.
\end{abstract}

 \begin{textblock}{20}(-0.5, 7.6)
 \includegraphics[width=100px]{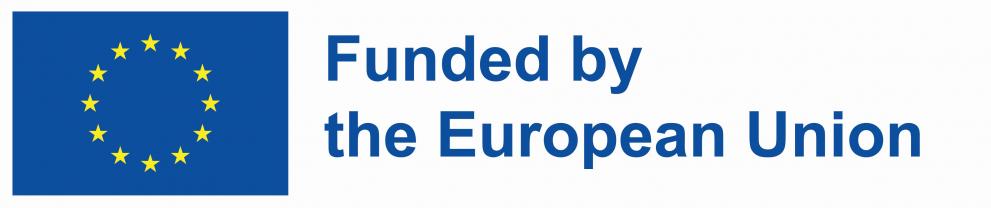}%
 \end{textblock}

\thispagestyle{empty}

\newpage

\pagenumbering{roman}

\setcounter{page}{1}
\setcounter{tocdepth}{2}

\tableofcontents

\newpage

\pagenumbering{arabic}

\hypersetup{pageanchor=true}

\clearpage
\setcounter{page}{1}

\section{Introduction}
\label{sec:intro}
Treewidth is one of the most well-studied graph parameters in computer science and graph theory.
Graphs of bounded treewidth generalize trees in the sense that while trees can be decomposed by separators of size $1$, graphs of treewidth $k$ can be decomposed by separators of size $k$.
Treewidth was introduced independently by multiple authors~\cite{DBLP:journals/dam/ArnborgP89,bertele1973non,Halin:1976it,DBLP:journals/jal/RobertsonS86} under various equivalent definitions.
The now-standard definition via tree decompositions was introduced by Robertson and Seymour in their graph minors series~\cite{DBLP:journals/jal/RobertsonS86}.

A \emph{tree decomposition} of a graph $G$ is a pair $(T,\bag)$, where $T$ is a tree and $\bag \colon V(T) \to 2^{V(G)}$ associates each node of $T$ with a \emph{bag} containing vertices of $G$, so that (1) for every edge $uv \in E(G)$, there is a bag containing $u$ and $v$, and (2) for every vertex $v \in V(G)$, the set of bags containing $v$ forms a non-empty connected subtree of $T$.
The width of a tree decomposition is the maximum size of a bag minus one.
The treewidth of a graph, denoted by $\tw(G)$, is the minimum width of a tree decomposition of it.
Graphs of treewidth $\le 1$ are exactly the forests, while the $n$-clique has treewidth $n-1$.

Treewidth is useful in algorithms because many graph problems that are hard in general become efficiently solvable on graphs of bounded treewidth via dynamic programming on a tree decomposition.
For example, there are algorithms with running time $2^{\OO(k)} n$, where $k$ is the width of a tree decomposition and $n$ is the number of vertices, for problems such as $3$-coloring, maximum independent set, minimum dominating set, and Hamiltonicity~\cite{DBLP:journals/dam/ArnborgP89,DBLP:conf/icalp/Bodlaender88,BodlaenderCK12,TelleP97}.
Moreover, the celebrated Courcelle's theorem~\cite{DBLP:journals/iandc/Courcelle90} (see also~\cite{ArnborgLS91,BoriePT92,Courcelle:2012book}) gives $\OO_k(n)$ time\footnote{The $\OO_k(\cdot)$-notation hides factors depending on $k$.} algorithms for all graph problems expressible in the counting monadic second-order logic~($\CMSO_2$).
Furthermore, treewidth is frequently used as a tool for solving problems even on graphs of large treewidth~\cite{Baker94,DemaineFHT05jacm,RobertsonS-GMXIII}, and the applications of treewidth are not limited only to graph problems~\cite{DBLP:journals/tcs/ChekuriR00,DBLP:conf/stoc/DongLY21,lauritzen1988local,DBLP:journals/siamcomp/MarkovS08}.

\paragraph{Computing treewidth.}
The algorithmic applications of treewidth require not only the input graph to have small treewidth, but also a tree decomposition of small width to be given.
This makes the problem of computing a tree decomposition of small width, if one exists, the central algorithmic problem in this context.
Let us mention a few of the over 20 publications on this problem.
While computing treewidth is NP-complete in general~\cite{ArnborgCP87}, there are algorithms for computing a tree decomposition of optimal width with running times $\OO(n^{k+2})$~\cite{ArnborgCP87}, $2^{\OO(k^3)} n$~\cite{DBLP:journals/siamcomp/Bodlaender96}, and $2^{\OO(k^2)} n^4$~\cite{DBLP:conf/stoc/KorhonenL23}.
No $2^{o(n)}$ time algorithms exist assuming the Exponential Time Hypothesis~(ETH)~\cite{DBLP:journals/corr/abs-2406-11628}.

As for constant-approximation, the classic Robertson-Seymour algorithm finds a $4$-ap\-prox\-i\-mate\-ly optimal tree decomposition in time $\OO(3^{3k} \cdot k^2 \cdot n^2)$~\cite{RobertsonS-GMXIII}, and the recent algorithm of Korhonen a $2$-approximation in time $2^{\OO(k)} n$~\cite{Korhonen21} (see also~\cite{BodlanderDDFLP13}).
Assuming the Small Set Expansion (SSE) hypothesis~\cite{DBLP:conf/stoc/RaghavendraS10}, no polynomial-time constant-approximation algorithms exist~\cite{WuAPL14}.
The best known approximation ratio in polynomial-time is $\OO(\sqrt{\log k})$~\cite{FeigeHL08}, and in time $k^{\OO(1)} n \polylog n$ one can achieve approximation ratios of $\OO(k)$~\cite{DBLP:journals/talg/FominLSPW18} and $\OO(\log n)$~\cite{DBLP:conf/esa/DongY24}.

\paragraph{Dynamic treewidth.}
In this work we consider the problem of computing treewidth in the dynamic setting.
In particular, the goal is to have a data structure that maintains a dynamic graph $G$ under edge insertions and deletions, and an approximately optimal tree decomposition of $G$.
Furthermore, we would like to simultaneously maintain any dynamic programming scheme on the tree decomposition, to lift the numerous applications of treewidth to the dynamic setting.
This question of dynamic data structures for treewidth can be regarded as the common generalization of two active research areas: algorithms for computing treewidth, and data structures for dynamic forests.
In particular, dynamic forests are the case of treewidth~$1$.

\paragraph{Dynamic treewidth $1$.}
Sleator and Tarjan~\cite{SleatorT83} gave a data structure for dynamic forests with $\OO(\log n)$ worst-case update time, called the \emph{link-cut tree}.
Among other applications, they used link-cut trees to obtain a faster algorithm for maximum flow.
Frederickson~\cite{DBLP:journals/siamcomp/Frederickson85,DBLP:journals/siamcomp/Frederickson97} introduced \emph{topology} trees, which have $\OO(\log n)$ worst-case update time, and applied them, for example, to dynamic minimum spanning trees.
Alstrup, Holm, de Lichtenberg, and Thorup~\cite{DBLP:journals/talg/AlstrupHLT05} introduced \emph{top trees}, which also have $\OO(\log n)$ worst-case update time, and used them for improved algorithms for problems such as dynamic connectivity~\cite{DBLP:journals/jacm/HolmLT01}.
One can interpret the top tree as maintaining a tree decomposition of width at most $2$ and depth at most $\OO(\log n)$, while providing an interface for maintaining any dynamic programming scheme on the tree decomposition.\footnote{In fact, top trees correspond to \emph{branch decompositions}~\cite{RobertsonS91} of width $2$, which can be interpreted as tree decompositions of width $2$.}
P{\u{a}}tra{\c{s}}cu and Demaine~\cite{DBLP:journals/siamcomp/PatrascuD06} showed that dynamic connectivity on forests requires $\Omega(\log n)$ update time, even when amortization is allowed, establishing the optimality of the aforementioned data structures.
Other works on dynamic forests include~\cite{DBLP:journals/jal/Frederickson97,DBLP:conf/sosa/HolmRR23,DBLP:conf/soda/TarjanW05}.

\paragraph{Dynamic treewidth $2$ and $3$.}
The first work on the dynamic treewidth problem was by Bodlaender~\cite{DBLP:conf/wg/Bodlaender93a}, who gave a dynamic data structure to maintain a tree decomposition of width at most~$11$ of a dynamic graph of treewidth at most $2$.
His data structure has $\OO(\log n)$ worst-case update time and supports maintaining arbitrary dynamic programming schemes.
It is based on Frederickson's approach~\cite{DBLP:journals/jal/Frederickson97} for dynamic trees.
Bodlaender also observed that in the decremental setting, i.e., without edge insertions, achieving $\OO_k(\log n)$ update time for treewidth $k$ is rather trivial.
Concurrently with Bodlaender, Cohen, Sairam, Tamassia, and Vitter~\cite{CohenSTV93}\footnote{We were not able to access~\cite{CohenSTV93}, so therefore our description of it is based on that of Bodlaender~\cite{DBLP:conf/wg/Bodlaender93a}.} gave an $\OO(\log^2 n)$ update time algorithm for the treewidth $2$ case and an $\OO(\log n)$ update time algorithm for the treewidth $3$ case in the incremental setting, i.e., without edge deletions.

\paragraph{Dynamic treewidth $k$.}
Bodlaender~\cite{DBLP:conf/wg/Bodlaender93a} asked whether an $\OO_k(\log n)$ update time dynamic treewidth data structure could be obtained for graphs of treewidth at most $k$.
There was little to no progress on this question for almost 30 years.
During this time, authors considered dynamic data structures for graph parameters larger than treewidth, such as treedepth and feedback vertex number~\cite{AlmanMW20,ChenCDFHNPPSWZ21,DvorakKT14,DBLP:conf/stacs/MajewskiPS23}, and models of dynamic treewidth where the tree decomposition does not change or the changes are directly specified as input~\cite{Frederickson98,DBLP:journals/algorithmica/Hagerup00}.
The question of dynamic treewidth was repeatedly re-stated~\cite{AlmanMW20,ChenCDFHNPPSWZ21,DBLP:conf/stacs/MajewskiPS23}.

The first dynamic treewidth data structure that works for any treewidth bound $k$ was obtained by Goranci, R{\"{a}}cke, Saranurak, and Tan~\cite{GoranciRST21} as an application of their dynamic expander hierarchy data structure.
It has a subpolynomial $n^{o(1)}$ update time, and maintains an $n^{o(1)}$-factor approximately optimal tree decomposition, but works only for graphs with maximum degree $n^{o(1)}$.
As the width of the decomposition maintained can be superlogarithmic in $n$ even for graphs of constant treewidth, this data structure is not suitable for most applications of treewidth, which use dynamic programming with running time exponential in the width.

Recently, Korhonen, Majewski, Nadara, Pilipczuk, and Sokołowski~\cite{DBLP:conf/focs/KorhonenMNP023} gave the first dynamic treewidth data structure that maintains a tree decomposition whose width is bounded by a function of only $k$, and that has amortized update time sublinear in $n$ for every fixed $k$.
In particular, their data structure maintains a tree decomposition of width at most $6k+5$, with amortized update time $2^{k^{\OO(1)} \sqrt{\log n \log \log n}} = 2^{k^{\OO(1)}} n^{o(1)}$.
Furthermore, dynamic programming schemes can be maintained on the tree decomposition with a similar running time, with an overhead depending only on the per-node running time of the scheme, for example, $2^{\OO(k)}$ for $3$-coloring and maximum independent set.

The data structure of~\cite{DBLP:conf/focs/KorhonenMNP023} has been already applied by Korhonen, Pilipczuk, and Stamoulis~\cite{DBLP:conf/focs/KorhonenPS24} to obtain an almost-linear $\OO_H(n^{1+o(1)})$ time algorithm for $H$-minor containment, improving upon an $\OO_H(n^2)$ time algorithm of~\cite{DBLP:journals/jct/KawarabayashiKR12}.
It was also generalized by Korhonen and Sokołowski~\cite{DBLP:conf/stoc/Korhonen024} to the setting of rankwidth, yielding also an improved algorithm for computing rankwidth in the static setting.

\paragraph{Our contribution.}
In this work, we resolve the question of Bodlaender~\cite{DBLP:conf/wg/Bodlaender93a} by giving a dynamic treewidth data structure with arguably optimal amortized update time.
Our data structure maintains a tree decomposition of width at most $9 \cdot \tw(G) + 8$ and has amortized update time $2^{\OO(k)} \log n$, where $k$ is an upper bound for the treewidth of the dynamic graph $G$, given at the initialization.
Furthermore, it supports the maintenance of arbitrary dynamic programming schemes, similarly to the data structure of~\cite{DBLP:conf/focs/KorhonenMNP023}.
The update time $2^{\OO(k)} \log n$ is arguably optimal in the sense that dynamic forests require $\Omega(\log n)$ time~\cite{DBLP:journals/siamcomp/PatrascuD06}, and all known constant-approximation algorithms for treewidth have a factor of $2^{\OO(k)}$ in their running time.

To more formally state our main result, let us introduce some notation.
A \emph{tree decomposition automaton} is, informally speaking, an automaton that implements bottom-up dynamic programming on rooted tree decompositions.
For example, there exists a tree decomposition automaton for deciding whether a graph is $3$-colorable, whose \emph{evaluation time}, i.e., time spent per node, is $\evaltime(k) = 2^{\OO(k)}$, where $k$ is the width of the tree decomposition.
A \emph{rooted tree decomposition} is a tree decomposition $(T,\bag)$ where $T$ is a rooted tree.
The \emph{depth} of $(T,\bag)$ is the maximum length of a root-leaf path, and $(T,\bag)$ is \emph{binary} if each node of $T$ has at most two children.

\begin{restatable}{theorem}{maintheorem}
\label{the:maintheorem}
There is a data structure that is initialized with an edgeless $n$-vertex graph $G$ and an integer $k$, supports updating $G$ via edge insertions and deletions under the promise that $\tw(G) \le k$ at all times, and maintains a rooted tree decomposition of $G$ of width at most $9 \cdot \tw(G) + 8$.
The amortized running time of the initialization is $2^{\OO(k)} n$, and the amortized running time of each update is $2^{\OO(k)} \log n$.

Moreover, if at the initialization the data structure is provided a tree decomposition automaton $\autom$ with evaluation time $\evaltime$, then a run of $\autom$ on the tree decomposition is maintained, incurring an additional $\evaltime(9k+8)$ factor on the running times.

Furthermore, the tree decomposition is binary and its depth is bounded by $2^{\OO(k)} \log n$.
\end{restatable}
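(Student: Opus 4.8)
The plan is to maintain, as the core invariant, a rooted binary tree decomposition of $G$ that is simultaneously (i) of width at most $9\tw(G)+8$; (ii) \emph{weight-balanced}, i.e., every node's subtree contains at most a fixed constant fraction of the vertices in its parent's subtree, which forces depth $2^{\OO(k)}\log n$; and (iii) \emph{downwards well-linked}: for every non-root node $t$, the adhesion $\adh(t)$ between $\bag(t)$ and the bag of its parent is well-linked in (the torso of) the part of $G$ captured by the subtree rooted at $t$. Property (iii) is the crucial new ingredient: it makes the family of admissible decompositions ``rigid'' in the way the search-tree property makes rotations meaningful, and it is exactly what lets a local restructuring be carried out without blowing up the width.

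First I would set up the two local operations. A \textbf{rotation} takes a node and one of its children (or a shallow descendant) and pivots the tree so that the descendant becomes an ancestor; the new bags are bounded unions of old bags, and downwards well-linkedness of the moved pieces is what keeps the new adhesions of size $\OO(k)$, so that after re-running the static $2$-approximation of \cite{Korhonen21} on the $\OO(1)$ bags involved the width stays $\le 9\tw(G)+8$ and the pieces remain downwards well-linked. A \textbf{refinement} takes a subtree (or a contracted ``blob'') of total size $m$, recomputes from scratch in time $2^{\OO(k)}m$, again via \cite{Korhonen21} plus a well-linkedness-enforcing post-processing, a downwards-well-linked, weight-balanced decomposition of that region of width $\le 9\tw(G)+8$ and depth $2^{\OO(k)}\log m$, and splices it back in. The technical heart here is verifying that both operations preserve property (iii) while holding the width at $9\tw(G)+8$; this is where the constant rises from the $6k+5$ of \cite{DBLP:conf/focs/KorhonenMNP023} to $9k+8$.

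Next I would implement the updates. Initialization on an edgeless graph is a balanced binary tree of singleton bags, built in $\OO(n)$ time. To insert an edge $uv$ I perform a ``double splay'' — rotations that bring $u$ and $v$ into a common bag — after which adding $uv$ changes nothing; to delete an edge $uv$ I note that this can only spoil downwards well-linkedness inside the subtrees along the former $u$–$v$ path (a tight separator may have become loose), so I repair each such region by a refinement. In both cases I finish with a bounded sequence of rebalancing rotations that restore (ii), and crucially every node that is touched lies on a single affected root-to-leaf path of length $2^{\OO(k)}\log n$. For the automaton, after the dust settles I recompute the automaton states bottom-up along that path (and its $\OO(1)$-sized side branches), each node costing $\evaltime(9k+8)$, which gives exactly the claimed multiplicative $\evaltime(9k+8)$ overhead because only $2^{\OO(k)}\log n$ states need recomputing.

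Finally, the amortized guarantee comes from a potential argument in the spirit of the analyses of splay and weight-balanced trees. I would use a potential $\Phi$ that measures the total ``imbalance'' of the decomposition — a sum over nodes of a function of subtree sizes, normalized so that the initial balanced decomposition of the edgeless graph has $\Phi=\OO(n)$ (whence amortized initialization $2^{\OO(k)}n$), so that each edge update raises $\Phi$ by at most $2^{\OO(k)}\log n$, and so that a refinement of a region of size $m$ lowers $\Phi$ by at least $m/2^{\OO(k)}$, paying for its cost of $2^{\OO(k)}m$; the rebalancing rotations along a single path are then absorbed by a telescoping argument identical in spirit to the splay-tree Access Lemma. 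The main obstacles I anticipate are (a) proving that rotations and refinements preserve downwards well-linkedness while keeping the width at $9\tw(G)+8$, and (b) designing $\Phi$ so that the splay-style telescoping for the rotations and the linear accounting for the structural rebuilds hold simultaneously.
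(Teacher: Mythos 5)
Your sketch reproduces the high-level ideas correctly (downwards well-linkedness as the rigidity invariant, splay-style local restructuring, a potential combining balance and telescoping), but it leaves unresolved exactly the step on which the whole construction hinges, and the way you propose to fill it would not work as stated. The paper does not realize rotations by ``pivoting a descendant to become an ancestor'' and re-running a static $2$-approximation on the $\OO(1)$ bags involved; there is no reason such a recomputation would restore downwards well-linkedness, and the width bound $9\tw(G)+8$ does not come from this step at all. Instead, the paper works with a \emph{superbranch decomposition} of the support hypergraph $\su{G}$ (well-linkedness is imposed on the edge sets $\lmap[t]$ below each node, not on adhesions between bags), and the only two primitive operations are contracting a tree edge and \emph{splitting} a node. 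The enabling fact --- which you flag as ``obstacle (a)'' but do not resolve --- is a transitivity lemma: a set of hyperedges of the torso of a node not containing the parent hyperedge is well-linked in the torso iff the corresponding edge set is well-linked in $G$. Combined with an algorithm that partitions any hyperedge set $X$ into at most $2^{\bdc(X)}$ well-linked parts, this is what lets one split a high-degree node while preserving the invariant and a $2^{\OO(k)}$ degree bound; the factor $9$ then arises from a separate post-processing that replaces each torso by a tree decomposition of its primal graph ($3$ from relating well-linkedness to treewidth, times $3$ from the torso-to-tree-decomposition conversion), not from the rotations.

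Two further points where your plan diverges in a way that would cause trouble. For deletions you propose to detect and ``refine'' regions along the former $u$--$v$ path where well-linkedness may have broken; it is unclear how to find or certify such regions efficiently. The paper avoids the issue entirely by first rotating all three hyperedges $e_u$, $e_v$, $e_{uv}$ of $\su{G}$ to depth $2$, after which deleting the leaf for $e_{uv}$ provably cannot change the boundary of any remaining subtree, so well-linkedness is preserved for free. Second, your claim that every touched node lies on a single root-to-leaf path plus $\OO(1)$-sized side branches is too strong: the rebalancing step can restructure a prefix whose size is only bounded in an amortized sense, and the automaton maintenance must therefore be organized as prefix-rebuilding updates over the whole trace of the rotation sequence rather than a recomputation along one path.
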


We note that the statement of \Cref{the:maintheorem} could be strengthened in various ways, but we prefer to not overload this paper with technical extensions of it.
We will discuss the possible strengthenings of \Cref{the:maintheorem} in the Conclusions section (\Cref{sec:conclusion}).

\paragraph{Applications.}
By combining \Cref{the:maintheorem} with well-known dynamic programming procedures for graphs of bounded treewidth~\cite{DBLP:journals/dam/ArnborgP89,ArnborgLS91,DBLP:conf/icalp/Bodlaender88,BoriePT92,DBLP:journals/iandc/Courcelle90,TelleP97}, we obtain the following corollary.

\begin{restatable}{corollary}{cordp}
\label{cor:twdp}
On fully dynamic $n$-vertex graphs of treewidth at most $k$, there are
\begin{itemize}
\item $2^{\OO(k)} \log n$ amortized update time dynamic algorithms for maintaining the size of a maximum independent set, the size of a minimum dominating set, $q$-colorability for constant $q$, etc., and
\item $\OO_k(\log n)$ amortized update time dynamic algorithms for maintaining any graph property expressible in the counting monadic second-order logic.
\end{itemize}
\end{restatable}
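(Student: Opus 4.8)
The plan is to obtain \Cref{cor:twdp} as a direct corollary of \Cref{the:maintheorem}: for each problem we exhibit a tree decomposition automaton $\autom$ computing it with a suitable evaluation time $\evaltime$, feed $\autom$ to the data structure of \Cref{the:maintheorem}, and after every update read the answer off the state that the maintained run of $\autom$ assigns to the root of the decomposition. Thus the only thing to check is that the classical bounded-treewidth dynamic programs for these problems fit the tree decomposition automaton interface with the claimed per-node running times; this is well known~\cite{DBLP:journals/dam/ArnborgP89,DBLP:conf/icalp/Bodlaender88,TelleP97}, and I would only sketch it.

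First, for the size of a maximum independent set, the size of a minimum dominating set, and $q$-colorability with $q$ constant, I would use the textbook dynamic programming over a width-$w$ tree decomposition. The state at a node records, for each ``boundary type'' on its bag --- a subset of the bag (independent set), a map from the bag to $\{\text{in},\text{dominated},\text{undominated}\}$ (dominating set), or a proper $q$-coloring of the bag (colorability) --- the optimum value, respectively feasibility bit, over partial solutions on the part of $G$ below the node that agree with that type. There are $2^{\OO(w)}$ boundary types, and each transition of the automaton --- which combines the states of the at most two children with the bag changes at the node --- is computable in time $2^{\OO(w)}$ on a word RAM, treating arithmetic on the $\OO(\log n)$-bit optimum values as unit cost. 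Thus $\evaltime(w) = 2^{\OO(w)}$, and the same pattern covers the other problems hidden in the ``etc.''. Since \Cref{the:maintheorem} maintains a decomposition of width at most $9k+8$ and charges an extra $\evaltime(9k+8) = 2^{\OO(k)}$ factor, the amortized update time is $2^{\OO(k)} \log n$ (the base cost $2^{\OO(k)}\log n$ times the factor $2^{\OO(k)}$), and the sought quantity is the root state of the run, read in constant time.

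Second, for graph properties expressible in $\CMSO_2$, I would invoke the automata-theoretic form of Courcelle's theorem~\cite{DBLP:journals/iandc/Courcelle90,ArnborgLS91,BoriePT92}: every fixed $\CMSO_2$ sentence $\varphi$ is decided by a tree decomposition automaton whose number of states, and hence whose evaluation time $\evaltime(w)$, depends only on $\varphi$ and $w$. In particular $\evaltime(9k+8) = \OO_k(1)$ for a fixed sentence, so \Cref{the:maintheorem} gives a data structure with $\OO_k(\log n)$ amortized update time that keeps ``$G \models \varphi$'' in the root state of the maintained run.

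I do not expect a genuine obstacle here; the only points needing care are bookkeeping ones. One is that the tree decomposition automaton formalism must be allowed to carry numeric payloads of $\OO(\log n)$ bits in its states (for the optimization variants) and that these are manipulated in unit time on the word RAM; another is that the automata must be phrased for arbitrary rooted binary tree decompositions, so that they apply to the specific decompositions produced by \Cref{the:maintheorem} --- both are standard and cause no loss.
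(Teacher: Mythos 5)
Your proposal is correct and follows essentially the same route as the paper: both deduce the corollary from \Cref{the:maintheorem} by interpreting the classical bounded-treewidth dynamic programs as tree decomposition automata with evaluation time $2^{\OO(k)}$ for the first bullet, and invoking the automata form of Courcelle's theorem with evaluation time $\OO_k(1)$ for the second. The extra detail you give on the DP state spaces is fine but not needed beyond what the paper cites.
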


Furthermore, while the data structure of \Cref{the:maintheorem} requires a pre-set treewidth bound $k$, it does at all points maintain a $9$-approximately optimal tree decomposition of the current graph.
By using a tree decomposition automaton for exact computing of treewidth based on the work of Bodlaender and Kloks~\cite{BodlaenderK96}, as was done in~\cite{DBLP:conf/focs/KorhonenMNP023}, it could also maintain the exact value of treewidth with the cost of a $2^{\OO(k^3)}$ factor in the update time.

By exploiting the known graph-theoretical properties of treewidth, \Cref{the:maintheorem} yields several direct consequences to the growing area of dynamic parameterized algorithms.
For example, treewidth can be applied via the grid minor theorem~\cite{ChekuriC13,DBLP:journals/jct/RobertsonS86} and its versions for planar and minor-free graphs~\cite{DemaineH08,RobSeymT94}.
As an example application, we observe that we obtain dynamic subexponential parameterized algorithms on planar graphs via the framework of bidimensionality~\cite{DemaineFHT05jacm}.

\begin{restatable}{corollary}{corsubexp}
\label{cor:corsubexp}
For fully dynamic planar $n$-vertex graphs, there is a dynamic data structure that is given a parameter $k$ at the initialization, has $2^{\OO(\sqrt{k})} \log n$ update time, and maintains
\begin{itemize}
\item whether the graph has a dominating set of size at most $k$, and
\item whether the graph contains a path of length at least $k$.
\end{itemize}
\end{restatable}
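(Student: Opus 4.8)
The plan is to combine \Cref{the:maintheorem} with the theory of bidimensionality~\cite{DemaineFHT05jacm}. Both questions are of the win/win type on planar graphs: either $\tw(G)=\OO(\sqrt{k})$, so that the answer can be read off from a dynamic programming scheme maintained on a tree decomposition, or $\tw(G)$ is large, in which case the planar excluded-grid theorems force the answer. Concretely, I would fix a threshold $k'=\Theta(\sqrt{k})$, large enough that (by the planar excluded-grid theorem~\cite{RobSeymT94} and its contraction version for planar graphs~\cite{DemaineH08,DemaineFHT05jacm}) $\tw(G)>k'$ implies that $G$ has a $t\times t$ grid minor for some $t$ with $t^2-1\ge k$, and simultaneously that $G$ can be contracted onto a partially triangulated $t'\times t'$ grid for some $t'$ with $\gamma(\Gamma')=\Omega((t')^2)>k$ for every partially triangulated $t'\times t'$ grid $\Gamma'$ (writing $\gamma$ for the domination number). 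Since a $t\times t$ grid minor yields a path on $t^2$ vertices, and contracting cannot increase the domination number, $\tw(G)>k'$ then certifies that $G$ has a path of length $\ge k$ and that $G$ has no dominating set of size $\le k$.

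I would then run the data structure of \Cref{the:maintheorem} with parameter $k'$, equipped with the tree decomposition automata for the two problems (or their product). Minimum dominating set has an automaton with evaluation time $\evaltime(w)=3^{\OO(w)}\cdot\mathrm{poly}(k)=2^{\OO(\sqrt{k})}$ (the table at a node records, for each assignment of the bag to ``selected / dominated / not yet dominated'', the minimum number of selected vertices in the subtree, capped at $k+1$), and ``does $G$ have a path of length $\ge k$'' has one with evaluation time $2^{\OO(w)}\cdot\mathrm{poly}(k)=2^{\OO(\sqrt{k})}$ via the rank-based (representative-sets) method for connectivity problems; the $\mathrm{poly}(k)$ factors are absorbed since $k\le 2^{\OO(\sqrt{k})}$. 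By \Cref{the:maintheorem}, as long as $\tw(G)\le k'$, a run of each automaton on the maintained tree decomposition of $G$ is kept up to date at amortized cost $2^{\OO(\sqrt{k})}\log n$ per update, and the state at the root reports the answer for $G$.

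The one technical point is that \Cref{the:maintheorem} assumes the promise $\tw(G)\le k'$, which a dynamic planar graph need not satisfy. I would get around this by letting the data structure detect when inserting an edge would push the treewidth above $k'$ --- that is, emit a \ttl\ signal, using the same detection mechanism that the data structure uses internally --- and, when this occurs, placing the offending edge into an auxiliary buffer $L$ instead of inside the data structure; on every edge deletion, buffered edges are replayed into the data structure one at a time, stopping at the first one that again triggers \ttl. This preserves the invariant that the data structure always holds the subgraph $(V(G),E(G)\setminus L)$, of treewidth $\le k'$, together with its width-$(9k'+8)$ decomposition and its automaton runs, and that $L\neq\emptyset$ only when some $e^{\star}\in L$ witnesses $\tw\bigl((V(G),(E(G)\setminus L)\cup\{e^{\star}\})\bigr)>k'$, hence $\tw(G)>k'$. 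Thus when $L=\emptyset$ the answer is read from the automaton runs on the decomposition of $G$, and when $L\neq\emptyset$ it is the value forced by the excluded-grid dichotomy above.

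For the running time, each insertion triggers $\OO(1)$ data-structure operations, and each deletion triggers $\OO(1)$ operations plus the replay loop; charging every successful replay to the original insertion of that edge (paid only once, since the edge then stays inside the data structure until it is deleted from $G$) and every unsuccessful replay to the deletion that caused it, the total number of data-structure operations over any sequence of $m$ updates is $\OO(m)$, each of amortized cost $2^{\OO(\sqrt{k})}\log n$ by \Cref{the:maintheorem}; together with the $\OO(1)$-time lookup of the current answer, this gives amortized update time $2^{\OO(\sqrt{k})}\log n$. I expect the bookkeeping around the buffer $L$ --- in particular arguing that repeatedly crossing the threshold $k'$ does not spoil the amortized bound --- to be the main obstacle; the bidimensionality ingredients are classical, and the single-exponential automata for these two problems are standard.
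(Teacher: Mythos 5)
Your proposal matches the paper's proof: the same bidimensionality threshold $\OO(\sqrt{k})$, the same single-exponential tree decomposition automata plugged into \Cref{the:maintheorem}, and the same ``delaying invariant-breaking updates'' trick of~\cite{EppsteinGIS96}; your buffer $L$ with replay-on-deletion is an immaterial variant of the paper's queue $Q$ with replay-on-every-update. The one calibration to fix is that rejection must be triggered by the maintained decomposition's width exceeding $9k'+8$ (exactly detecting $\tw(G)>k'$ would need the Bodlaender--Kloks automaton at cost $2^{\OO(k^{3/2})}$), so the promise parameter handed to \Cref{the:maintheorem} must be taken as $9k'+9$ rather than $k'$, and your invariant that the held subgraph has treewidth $\le k'$ should accordingly read $\le 9k'+9$.
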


We defer further discussions about the applications of dynamic treewidth and future directions to the Conclusions section (\Cref{sec:conclusion}).
We suggest an interested reader to also take a look at the Introduction and Conclusions sections of~\cite{DBLP:conf/focs/KorhonenMNP023} for possible applications of dynamic treewidth.

\paragraph{Our techniques.}
Our main novel insight is to maintain a rooted tree decomposition $\Tc = (T,\bag)$ that is ``downwards well-linked''.
This means that for any node $t$ of $\Tc$ and its parent $p$, if we consider the \emph{adhesion} $\adh(tp) = \bag(t) \cap \bag(p)$ of the edge $tp$, and take two subsets $A,B \subseteq \adh(tp)$ of the same size, we can route $|A|=|B|$ vertex-disjoint paths from $A$ to $B$, using only vertices in the subtree of $\Tc$ rooted at $t$.
Technically, this will be formulated through what we call ``downwards well-linked superbranch decompositions'' instead of tree decompositions.
However, we use tree decompositions here for simplicity.

This condition of downwards well-linkedness is useful in that it directly guarantees that the size of every adhesion $\adh(tp)$ is at most $\OO(\tw(G))$.
Furthermore, it allows us to lift local properties in the bags of the tree decomposition to global properties in the graph.
In particular, it allows us to conclude that whenever a bag is too large, particularly, larger than some bound in $2^{\OO(k)}$, we can locally split it into two bags, while maintaining downwards well-linkedness.
Moreover, this splitting cleanly partitions the children of the bag as the children of the two resulting nodes, and allows us to control which children are pushed downwards in the tree and which stay at the current depth.

In addition to this splitting operation, we can also contract two adjacent nodes of the tree decomposition into one.
With these splitting and contraction operations, we arrive at an abstract dynamic tree maintenance problem in which we manipulate a tree by either splitting nodes of high degree or contracting edges.
Our goal is to maintain a tree of depth $2^{\OO(k)} \log n$ and maximum degree $2^{\OO(k)}$.
It turns out that our operations suffice to implement manipulations similar to those of splay trees~\cite{DBLP:journals/jacm/SleatorT85}, and we indeed manage to solve this tree maintenance problem using an analysis similar to that for splay trees.


Our approach is completely different compared to the approach of~\cite{DBLP:conf/focs/KorhonenMNP023}, but we use the framework of ``prefix-rebuilding updates'' introduced in~\cite{DBLP:conf/focs/KorhonenMNP023} for formalizing updates to dynamic tree decompositions.
The key concept of downwards well-linkedness is directly from the recent work of Korhonen~\cite{DBLP:journals/corr/abs-2411-02658}.
In hindsight, it can be regarded as a generalization of invariants used for topology trees and top trees~\cite{DBLP:journals/talg/AlstrupHLT05,DBLP:journals/siamcomp/Frederickson85}.
Similar concepts have also been used in the context of mimicking networks~\cite{DBLP:conf/soda/ChalermsookDKLL21} and expander decompositions~\cite{GoranciRST21}.

\paragraph{Organization.}
In \Cref{sec:overview} we present an informal sketch of the proof of \Cref{the:maintheorem}.
The proof is presented in detail through \Cref{sec:preli,sec:downwl,sec:mani,sec:struct,sec:balancing,sec:insdeledge,sec:mainlemmaproof,sec:prds}.
In \Cref{sec:preli} we present definitions and preliminary results.
Then, in \Cref{sec:downwl,sec:mani,sec:struct,sec:balancing,sec:insdeledge,sec:mainlemmaproof} we present the core part of our data structure, which is about maintaining a so-called ``downwards well-linked superbranch decomposition''.
The main graph-theoretical properties of these decompositions are discussed in \Cref{sec:downwl} and subroutines for manipulating them in \Cref{sec:mani}.
In \Cref{sec:struct} we introduce the invariants of our dynamic data structure and state the main lemma about it, which is then proven in \Cref{sec:balancing,sec:insdeledge,sec:mainlemmaproof}.
In \Cref{sec:prds} we lift our data structure from the setting of downwards well-linked superbranch decompositions to that of treewidth.
We discuss conclusions and future directions in \Cref{sec:conclusion}.

\section{Outline}
\label{sec:overview}
We present a sketch of the proof of \Cref{the:maintheorem}.
The proof will be presented in full detail in \Cref{sec:preli,sec:downwl,sec:mani,sec:struct,sec:balancing,sec:insdeledge,sec:mainlemmaproof,sec:prds}.

\paragraph{Downwards well-linked superbranch decompositions.}
The core idea of this work is to maintain a so-called ``downwards well-linked superbranch decomposition'' of the dynamic graph $G$, so let us start by defining it.
A \emph{superbranch decomposition} of a graph $G$ is a pair $\Tc = (T,\lmap)$, where $T$ is a rooted tree in which every non-leaf node has at least two children and $\lmap$ is a bijection from the leaves of $T$ to $E(G)$.\footnote{The term ``superbranch decomposition'' was introduced by~\cite{DBLP:journals/corr/abs-2411-02658}. Superbranch decompositions are like branch decompositions of Robertson and Seymour~\cite{RobertsonS91}, but allow nodes with more than two children.}
For a node $t \in V(T)$, let us denote by $\lmap[t] \subseteq E(G)$ the edges of $G$ that are mapped to the leaves of $T$ that are in the subtree rooted at $t$.

The \emph{boundary} of a set $A \subseteq E(G)$ of edges of $G$ is the set $\bd(A) \subseteq V(G)$ consisting of the vertices that are incident to edges in both $A$ and $E(G) \setminus A$.
We denote the boundary size by $\bdc(A) = |\bd(A)|$.
We say that a set $A \subseteq E(G)$ of edges of $G$ is \emph{well-linked} if there is no bipartition $(C_1,C_2)$ of $A$ so that $\bdc(C_i) < \bdc(A)$ for both $i \in \{1,2\}$.
Equivalently, $A$ is well-linked if for any two subsets $B_1,B_2 \subseteq \bd(A)$ of the same size, possibly overlapping, we can connect $B_1$ to $B_2$ by $|B_1|=|B_2|$ vertex-disjoint paths that use only edges in $A$.
Now, a superbranch decomposition is \emph{downwards well-linked} if for every node $t$, the set $\lmap[t]$ is well-linked.

Our main goal is to maintain a superbranch decomposition of the dynamic graph $G$, that is downwards well-linked and has maximum degree $2^{\OO(k)}$, where $k$ is the upper bound on the treewidth of $G$.
But how are superbranch decompositions and downwards well-linkedness related to treewidth?
For each edge $tp \in E(T)$ of the superbranch decomposition, where $p$ is the parent of $t$, we define that the \emph{adhesion} at $tp$ is the set $\adh(tp) = \bd(\lmap[t])$.
It follows from well-known connections between treewidth and well-linkedness~\cite{RobertsonS-GMXIII} that if $A \subseteq E(G)$ is a well-linked set, then $\bdc(A) \le 3 \cdot \tw(G) + 3$.
Therefore, if $\Tc$ is downwards well-linked, then each of its adhesions has size at most $3 \cdot \tw(G) + 3$.
We can view $\Tc$ as a tree decomposition of $G$ by associating each node $t \in V(T)$ with a bag $\bag(t)$ consisting of the union of the adhesions at edges incident to $t$.
It turns out that the resulting pair $(T,\bag)$ is indeed a tree decomposition of $G$.\footnote{Here, we assume for simplicity that each vertex of $G$ has at least one incident edge. This assumption can be removed in various ways.}

It follows that if $\Tc$ has degree at most $\Delta$, then it corresponds to a tree decomposition of width at most $\OO(\Delta \cdot \tw(G))$.
Therefore, by maintaining a downwards well-linked superbranch decomposition of degree at most $2^{\OO(k)}$, we manage to maintain a tree decomposition of width at most $2^{\OO(k)}$.
This falls short of the goal of maintaining a tree decomposition of width at most $9 \tw(G) + 8$, but we can apply a local ``post-processing'' step on top of a downwards well-linked superbranch decomposition to convert it into a tree decomposition of width at most $9k+8$.
Let us return to this post-processing step at the end of this proof sketch, and for now just focus on the goal of maintaining a downwards well-linked superbranch decomposition with degree at most $2^{\OO(k)}$, with amortized update time $2^{\OO(k)} \log n$.

\paragraph{Operations on downwards well-linked superbranch decompositions.}
The main utility of downwards well-linkedness is that it allows us to translate local properties in the nodes of the superbranch decomposition to global properties in the graph, enabling us to implement local rotations.
Let us introduce some notation to state more clearly what we mean by this.

A \emph{hypergraph} is a graph that has \emph{hyperedges} instead of edges, where hyperedges correspond to arbitrary subsets of vertices instead of pairs of vertices.
For a hypergraph $G$ and a hyperedge $e \in E(G)$, we denote by $V(e) \subseteq V(G)$ the set of vertices of $e$.
We allow a hypergraph to contain multiple hyperedges corresponding to the same set of vertices, in particular, there can be $e_1 \neq e_2$ with $V(e_1) = V(e_2)$.
We define the boundary $\bd(A)$ of a set $A \subseteq E(G)$ of hyperedges in the similar way as we defined it for edges, i.e., as $\bd(A) = \left(\cup_{e \in A} V(e)\right) \cap \left(\cup_{e \in E(G) \setminus A} V(e)\right)$.
We denote $\bdc(A) = |\bd(A)|$ also in this context.
We also define well-linkedness in the same way, i.e., $A$ is well-linked if there is no bipartition $(C_1,C_2)$ of $A$ so that $\bdc(C_i) < \bdc(A)$ for both $i \in \{1,2\}$.

One more definition we need is that of a \emph{torso} of a node of a superbranch decomposition.
The torso $\torso(t)$ of a node $t \in V(T)$ is the hypergraph that has a hyperedge $e_s$ for each neighbor $s$ of $t$ in the tree $T$.
The vertex set of the hyperedge $e_s$ is the adhesion at the edge $st$, i.e., $V(e_s) = \adh(st)$.
The vertex set of $\torso(t)$ is the union of the vertex sets of its hyperedges, i.e., the union of the adhesions at the edges incident to $t$.

Now, let $A \subseteq E(\torso(t))$ be a set of hyperedges in $\torso(t)$, and assume that $A$ does not contain the hyperedge $e_p$ corresponding to the edge between $t$ and its parent $p$.
Now, $A$ corresponds to a set of children $\mathcal{C}_A = \{c \mid e_c \in A\}$ of $t$, which in turn corresponds to a set of edges $\bigcup_{c \in \mathcal{C}_A} \lmap[c]$ of $G$.
We denote this set of edges of $G$ corresponding to $A$ by $A \orescliqs \Tc$.
The following is the key lemma that enables us to lift well-linkedness in torsos to well-linkedness in $G$.

\begin{lemma}[Informal version of \Cref{lem:wltransindecomp}]
\label{lem:ovwllem}
If $\Tc$ is downwards well-linked, and $A \subseteq E(\torso(t))$ does not contain the hyperedge $e_p$ corresponding to the hyperedge between $t$ and its parent $p$, then $A$ is well-linked in $\torso(t)$ if and only if $A \orescliqs \Tc$ is well-linked in $G$.
\end{lemma}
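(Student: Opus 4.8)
The plan is to prove the equivalence by understanding precisely how bipartitions of $A$ inside $\torso(t)$ correspond to bipartitions of $A \orescliqs \Tc$ inside $G$, and checking that the boundary sizes are preserved under this correspondence. First I would set up notation: let $A \subseteq E(\torso(t))$ with $e_p \notin A$, let $\mathcal{C}_A$ be the corresponding set of children of $t$, and write $X = A \orescliqs \Tc = \bigcup_{c \in \mathcal{C}_A} \lmap[c] \subseteq E(G)$. The first key observation I would establish is that the boundary of $X$ in $G$ equals the boundary of $A$ in $\torso(t)$, i.e. $\bd_G(X) = \bd_{\torso(t)}(A)$. For the direction $\bd_{\torso(t)}(A) \subseteq \bd_G(X)$: a vertex $v$ in the torso-boundary lies in some adhesion $\adh(ct)$ with $c \in \mathcal{C}_A$ and in some adhesion $\adh(st)$ with $e_s \notin A$; since adhesions are boundaries of the sets $\lmap[\cdot]$, $v$ is incident to an edge of $\lmap[c] \subseteq X$ and (tracing through $\lmap[s]$ if $s$ is a child not in $\mathcal{C}_A$, or through $\lmap[t]$ and hence $E(G) \setminus \lmap[t]$ if $s = p$, using $e_p \notin A$) to an edge of $E(G) \setminus X$. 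The reverse inclusion is similar: every edge of $G$ incident to the subtree below $t$ is accounted for by $\lmap[t]$, and the only vertices of $V(\torso(t))$ that can separate $X$ from its complement are the ones appearing in the adhesions. The technical content of this step is bookkeeping with the laminar family of edge-sets $\{\lmap[s]\}_{s \in V(T)}$ and the defining property of adhesions, $\adh(st) = \bd(\lmap[s])$.

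With the boundary-preservation claim in hand, the equivalence would follow from applying the same claim to every bipartition. Concretely, suppose $A$ is \emph{not} well-linked in $\torso(t)$, witnessed by a bipartition $(C_1,C_2)$ of $A$ with $\bdc_{\torso(t)}(C_i) < \bdc_{\torso(t)}(A)$ for both $i$. Since $e_p \notin A$, neither $C_1$ nor $C_2$ contains $e_p$, so the boundary-preservation claim applies to each $C_i$ as well as to $A$: the bipartition $(C_1 \orescliqs \Tc, C_2 \orescliqs \Tc)$ of $X$ then satisfies $\bdc_G(C_i \orescliqs \Tc) = \bdc_{\torso(t)}(C_i) < \bdc_{\torso(t)}(A) = \bdc_G(X)$, so $X$ is not well-linked in $G$. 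This gives one direction; I must check that $(C_1 \orescliqs \Tc, C_2 \orescliqs \Tc)$ is genuinely a bipartition of $X$, i.e. that both parts are nonempty and the $\lmap[c]$ are pairwise disjoint and nonempty — the latter holds because each child $c$ has at least one leaf below it.

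For the converse direction — if $X$ is not well-linked in $G$, then $A$ is not well-linked in $\torso(t)$ — the naive approach of "transport the witnessing bipartition back up" fails, because a bipartition of $X$ need not respect the grouping of edges by children $c \in \mathcal{C}_A$; a single set $\lmap[c]$ could be split between the two sides. This is the main obstacle. The plan to overcome it is to use downwards well-linkedness of $\Tc$ itself: for each child $c \in \mathcal{C}_A$, the set $\lmap[c]$ is well-linked in $G$, which is exactly the statement that $\lmap[c]$ cannot be split by any bipartition without one side having boundary at least $\bdc(\lmap[c]) = |\adh(ct)|$. Using an uncrossing / submodularity argument on the boundary function $\bdc_G$ (submodularity of $\bd$ for edge-sets is standard and presumably available from \Cref{sec:preli}), I would argue that from any bipartition $(D_1,D_2)$ of $X$ with both boundaries $< \bdc_G(X)$ one can produce a \emph{refined} bipartition that does not split any $\lmap[c]$, without increasing either boundary: whenever $\lmap[c]$ is split, moving all of $\lmap[c]$ to the side of $D_i$ that currently holds the larger part of its boundary does not increase $\bdc_G(D_i)$, by well-linkedness of $\lmap[c]$ together with submodularity. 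Iterating over all $c \in \mathcal{C}_A$ yields a bipartition of $X$ that is "aligned" with $\mathcal{C}_A$, hence descends to a bipartition $(C_1,C_2)$ of $A$; applying boundary-preservation to $C_1$, $C_2$, and $A$ (noting again $e_p \notin C_i$) shows $\bdc_{\torso(t)}(C_i) < \bdc_{\torso(t)}(A)$, so $A$ is not well-linked in $\torso(t)$. I would need to be careful that the refinement does not accidentally empty one side; if it does, that itself would contradict $\bdc_G(D_j) < \bdc_G(X)$ for the nonempty side, so this case does not arise.
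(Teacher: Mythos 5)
Your proposal is correct and is in substance the same argument as the paper's: the paper factors the reasoning through a single-contraction ``transitivity'' lemma (\Cref{lem:wltransitive}), whose proof is exactly your submodular uncrossing of a witnessing bipartition against a well-linked set, and then iterates that lemma over the children of $t$ (handling the parent-side contraction trivially since $e_p \notin A$), whereas you establish boundary preservation for the whole torso at once and run the uncrossing as one refinement loop over the children. The only imprecision is the phrase ``the side that currently holds the larger part of its boundary'': the correct criterion is to move $\lmap[c]$ to the side $D_i$ with $\bdc_G(D_i \cap \lmap[c]) \ge \bdc_G(\lmap[c])$, whose existence is exactly what well-linkedness of $\lmap[c]$ guarantees.
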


The proof of \Cref{lem:ovwllem} follows from a lemma proven in~\cite{DBLP:journals/corr/abs-2411-02658}, but it is not very hard to prove from scratch by using the submodularity of the boundary size function $\bdc$.

\begin{figure}[!b]
\begin{center}
\includegraphics[width=0.75\textwidth]{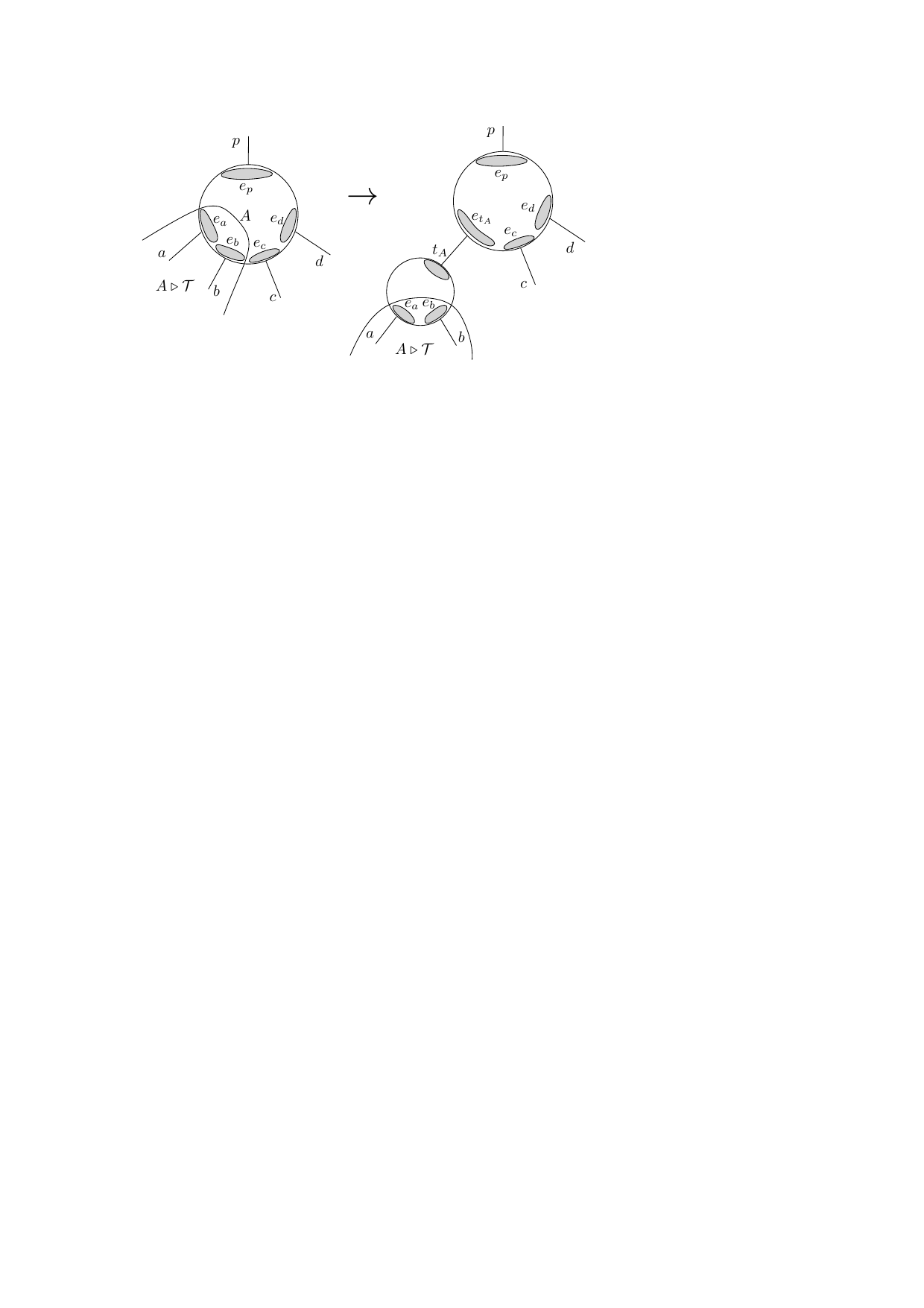}
\caption{Splitting a node $t$ of a superbranch decomposition $\Tc$ using a well-linked set $A = \{e_a, e_b\}$ of $\torso(t)$, which corresponds to a well-linked set $A \orescliqs \Tc$ of $G$.\label{fig:splitting}}
\end{center}
\end{figure}

\Cref{lem:ovwllem} enables us to implement the operation of splitting nodes of $\Tc$.
In particular, suppose that $A \subseteq E(\torso(t))$ does not contain $e_p$, is well-linked in $\torso(t)$, has size at least $|A| \ge 2$, and does not contain all children of $t$.
Then, we can use $A$ to split $t$ into two nodes, $t_A$ and $t'$, so that the children of $t_A$ will be children $\mathcal{C}_A = \{c \mid e_c \in A\}$ of $t$, and the children of $t'$ will be $t_A$ and the other children of $t$.
This results in a downwards well-linked superbranch decomposition because $A \orescliqs \Tc$ is downwards well-linked by \Cref{lem:ovwllem}.
See \Cref{fig:splitting} for an illustration.

Now, all we need to do to reduce the degree of a node $t$ is to find a well-linked set $A$ in $\torso(t)$, so that $e_p \notin A$ and $2 \le |A| < \ddeg(t)$, where $\ddeg(t)$ denotes the number of children of $t$.
For this, we use the following lemma.

\begin{lemma}[Corresponds to \Cref{lem:partitiontowlalg}]
\label{lem:ovwlfind}
Given a hypergraph $G$ and a set $X \subseteq E(G)$, we can in time $2^{\OO(\bdc(X))} \cdot \|G\|^{\OO(1)}$ find a partition $\compset$ of $X$ into at most $2^{\bdc(X)}$ well-linked sets.
\end{lemma}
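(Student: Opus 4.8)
\emph{Setup.} The plan is a recursive refinement. I would maintain a partition $\compset$ of $X$, initialized to $\compset = \{X\}$, and as long as some part $Y \in \compset$ is not well-linked, pick a bipartition $(C_1, C_2)$ of $Y$ witnessing this — one with $\bdc(C_1) < \bdc(Y)$ and $\bdc(C_2) < \bdc(Y)$ — and replace $Y$ in $\compset$ by $C_1$ and $C_2$. Since a bipartition splits $Y$ into two nonempty sets each with strictly fewer edges, the recursion has depth at most $|X|$ and always terminates; moreover it terminates with a partition into well-linked sets, because a part that cannot be split (one with $|Y| \le 1$) is vacuously well-linked.

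\emph{Bounding the number of parts.} To bound $|\compset|$ by $2^{\bdc(X)}$, I would track the potential $\Phi(\compset) = \sum_{Y \in \compset} 2^{\bdc(Y)}$. Initially $\Phi(\compset) = 2^{\bdc(X)}$. A single split replaces the term $2^{\bdc(Y)}$ by $2^{\bdc(C_1)} + 2^{\bdc(C_2)} \le 2\cdot 2^{\bdc(Y)-1} = 2^{\bdc(Y)}$, so $\Phi$ never increases and always satisfies $\Phi(\compset) \le 2^{\bdc(X)}$. As every part contributes at least $2^0 = 1$, the final partition has at most $2^{\bdc(X)}$ parts. The same estimate shows that every set $Y$ produced along the way has $\bdc(Y) \le \bdc(X)$ (only the root $X$ attaining equality), which I will need for the running time, and that the recursion tree — a binary tree whose leaves are the final parts — has $\OO(2^{\bdc(X)})$ nodes in total.

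\emph{The well-linkedness subroutine.} The remaining ingredient is a procedure that, given $Y \subseteq E(G)$, decides whether $Y$ is well-linked and otherwise outputs a witnessing bipartition, in time $2^{\OO(\bdc(Y))} \|G\|^{\OO(1)}$. For a bipartition $(C_1, C_2)$ of $Y$, writing $T_i$ for the set of vertices incident to an edge of $C_i$, a short computation with the definition of $\bd$ gives $\bd(C_i) = (T_1 \cap T_2) \cup (T_i \cap \bd(Y))$; since $\bd(Y) = Q_1 \sqcup Q_2 \sqcup (T_1 \cap T_2 \cap \bd(Y))$ with $Q_i = (T_i \setminus T_{3-i}) \cap \bd(Y)$, one deduces that $\bdc(C_1), \bdc(C_2) < \bdc(Y)$ holds if and only if $Q_1, Q_2 \neq \emptyset$ and the number of \emph{internal} vertices (those in $V(Y) \setminus \bd(Y)$) lying in $T_1 \cap T_2$ is strictly smaller than both $|Q_1|$ and $|Q_2|$. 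So I would enumerate all $\OO(3^{\bdc(Y)})$ choices of disjoint nonempty $Q_1, Q_2 \subseteq \bd(Y)$ with no common incident $Y$-edge; force every $Y$-edge incident to $Q_i$ into $C_i$; and compute, by one maximum-flow computation on the vertex-split incidence graph of $(V(Y),Y)$ after deleting the free vertices $\bd(Y) \setminus (Q_1 \cup Q_2)$, the minimum possible number of internal vertices of $T_1 \cap T_2$ over completions to a full bipartition. If for some choice this minimum falls below $\min(|Q_1|,|Q_2|)$ we return the corresponding bipartition; otherwise $Y$ is well-linked, since any witnessing bipartition would, for the choice $Q_i^\ast = (T_i \setminus T_{3-i})\cap\bd(Y)$, satisfy the forcing constraints and hence make the flow value at most its own count of shared internal vertices, which is below $\min(|Q_1^\ast|,|Q_2^\ast|)$. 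Each invocation costs $2^{\OO(\bdc(Y))}\|G\|^{\OO(1)}$, so over the $\OO(2^{\bdc(X)})$ recursive calls, each on a set of boundary size at most $\bdc(X)$, the total time is $2^{\OO(\bdc(X))}\|G\|^{\OO(1)}$.

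\emph{Main obstacle.} The counting argument and the recursion are routine; the real work is the well-linkedness subroutine — in particular verifying the identity for $\bd(C_i)$, modelling ``an internal vertex shared by both sides'' exactly as a cut vertex so that the minimum is delivered by max-flow, and checking that enumerating the $3^{\bdc(Y)}$ boundary patterns is simultaneously complete (captures every potential witness) and sound (never reports a false witness). The degenerate cases — singletons, parallel hyperedges on the same vertex set, and components with empty boundary — only ever yield well-linked parts and are therefore harmless.
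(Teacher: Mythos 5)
Your proposal is correct and follows essentially the same route as the paper: the same iterative refinement of an initially trivial partition, with the same potential $\sum_{Y\in\compset}2^{\bdc(Y)}$ showing the part count stays below $2^{\bdc(X)}$. The only difference is that the paper delegates the well-linkedness test with witnessing bipartition to a separately stated lemma (\Cref{lem:testwl}, attributed to Robertson--Seymour and proved in the appendix), whose proof there uses the same boundary-signature enumeration plus a minimum vertex cut computation that you reconstruct.
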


The idea of the proof of \Cref{lem:ovwlfind} is that if $X$ is not already well-linked, then by definition we can partition $X$ into $(C_1,C_2)$ so that $\bdc(C_i) < \bdc(X)$.
We iteratively continue partitioning the parts $C_i$ until they are all well-linked, noting that the measure $\sum_i 2^{\bdc(C_i)}$ does not increase in this process.
Therefore, we end up with a partition into at most $2^{\bdc(X)}$ well-linked sets.

Now, to find a desired well-linked set $A$, assuming $t$ has high enough degree, it suffices to simply take $X = E(\torso(t)) \setminus \{e_p, e_i\}$, where $e_i$ is an arbitrary hyperedge of $\torso(t)$ other than $e_p$, and apply \Cref{lem:ovwlfind} with $X$.
This is guaranteed to find at least one part of size $\ge 2$ if $|X| > 2^{\bdc(X)}$.
We can bound 
\begin{equation}
\label{equ:ovsplitting}
\bdc(X) = \bdc(\{e_p, e_i\}) \le |V(e_p)| + |V(e_i)| \le 2 \cdot \adhsize(\Tc) \le 6 \cdot \tw(G) + 6,
\end{equation}
so this is successful whenever $|X| > 2^{6 \cdot \tw(G) + 6}$, i.e., whenever $\ddeg(t) \ge 2 + 2^{6 \cdot \tw(G) + 6}$.

This splitting strategy forms the core of how the maximum degree $2^{\OO(k)}$ is maintained.
Additionally, it gives some control on how the superbranch decomposition changes.
In particular, by the choice of the hyperedge $e_i$, we can pick a child of $t$ that is guaranteed to not be pushed deeper down in the tree by the splitting operation.
The argument of \Cref{equ:ovsplitting} in fact generalizes to picking multiple children, with the cost of a higher constant factor.
In our algorithm we will use it with at most $3$ children.

In addition to the splitting operation, the other basic operation we use for manipulating downwards well-linked superbranch decompositions is the contraction operation.
This simply means contracting an edge (that is not adjacent to a leaf) of the superbranch decomposition.
It is straightforward to see that contraction always preserves downwards well-linkedness.

\paragraph{Balancing.}
Before explaining how we implement the operations of adding and deleting edges, let us focus on how we keep the superbranch decomposition balanced.
We will maintain that the superbranch decomposition always has depth at most $2^{\OO(k)} \log n$, and analyze the work used for balancing the decomposition by using a potential function similar to the potential function of splay trees~\cite{DBLP:journals/jacm/SleatorT85}.
We could have also taken the splay tree approach of allowing an unbalanced tree and analyzing all the work via potential, but to us the approach of maintaining a depth upper bound felt more natural.
Furthermore, in some applications of treewidth (e.g.~\cite{DBLP:conf/icalp/Lampis14}) logarithmic-depth decompositions are required, so it could be useful that our data structure directly provides them.

For a parameter $d$, we call a node $t$ \emph{$d$-unbalanced} if it has a descendant $s$ at distance $d$ so that $|\lmap[s]| \ge \frac{2}{3} |\lmap[t]|$.
We will maintain that for some $d = 2^{\OO(k)}$, our superbranch decomposition contains no $d$-unbalanced nodes.
It is easy to see that this implies that the depth is at most $\OO(d \log n) = 2^{\OO(k)} \log n$.

The main idea is to introduce a potential function, so that whenever the decomposition contains a $d$-unbalanced node, we can improve it by applying splitting and contraction operations, decreasing the potential, and pay for the work done through this decrease.
The potential function we use is
\[\pot(\Tc) = \sum_{t \in \vint(T)} (\ddeg(t)-1) \cdot \log(|\lmap[t]|).\]
Here, $\vint(T)$ denotes the internal (i.e., non-leaf) nodes of $T$, $\ddeg(t)$ the number of children of $t$, and $\lmap[t]$ the set of leaf-descendants of $t$.

\begin{figure}[!t]
\begin{center}
\includegraphics[width=0.8\textwidth]{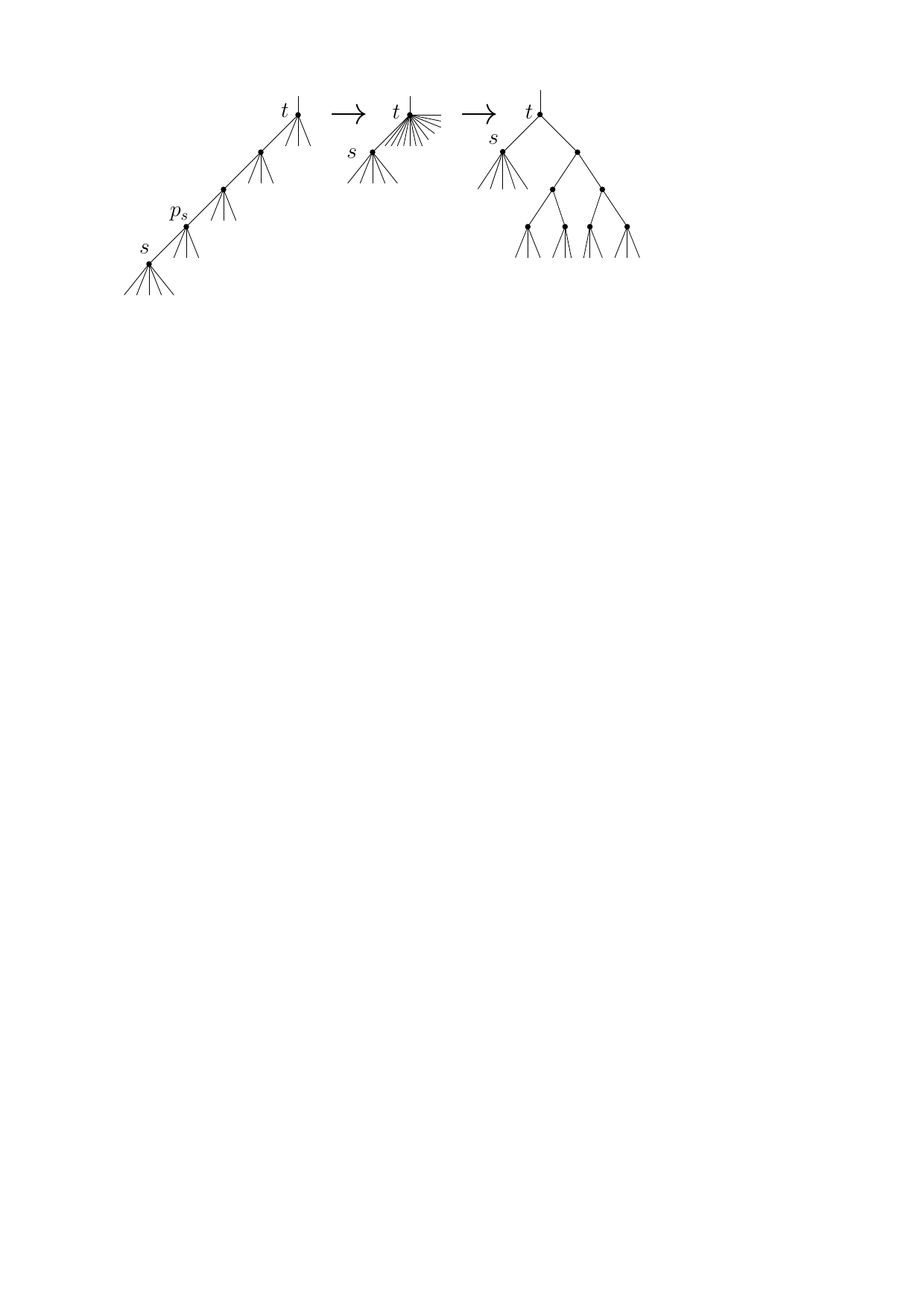}
\caption{The balancing subroutine.\label{fig:ovbalancing}}
\end{center}
\end{figure}

Now, if $t$ is a $d$-unbalanced node with a descendant $s$ at distance $d$ so that $|\lmap[s]| \ge \frac{2}{3} |\lmap[t]|$, we edit $\Tc$ as follows.
Let $p_s$ be the parent of $s$.
We contract the path from $t$ to $p_s$ into one node.
Now, $s$ becomes a child of $t$, and $t$ will have at least $d$ children.
Here, we choose $d$ large enough depending on the bound for the number of children in the splitting operation.
We then apply the splitting operation iteratively to again decrease the degree of $t$ (and also the degree of the just created descendants of $t$), but so that $s$ always remains a child of $t$ and is not pushed further down in the tree.
This process re-builds the subtree that consisted of the long $t$-$s$-path into a subtree in which $s$ is a child of $t$.
This keeps the sum of the terms $(\ddeg(t)-1)$ in the subtree unchanged, but decreases the quantity $\log(|\lmap[t]|)$ by a constant for a significant factor of the nodes.
See \Cref{fig:ovbalancing}.
By choosing $d = 2^{\OO(k)}$ large enough, the potential $\pot(\Tc)$ decreases by a constant, and the operation can be implemented in time $2^{\OO(k)}$.

The overall blueprint of how our data structure works is that in each edge insertion and deletion, the superbranch decomposition is edited for a part consisting of only $2^{\OO(k)} \log n$ nodes, so that the potential increases by at most $2^{\OO(k)} \log n$.
After these edits, which do not necessarily maintain balance, we apply the balancing subroutine as long as there are $d$-unbalanced nodes.
For this, we need to also efficiently find $d$-unbalanced nodes in a changing superbranch decomposition, but this is not hard to do by maintaining a queue storing nodes that have changed, and for each node $t$ the quantity $|\lmap[t]|$.

\paragraph{Inserting and deleting edges.}
Finally, let us explain how our data structure supports updating the graph $G$ by edge insertions and deletions.
For this, we have to admit that we do not actually maintain a superbranch decomposition of $G$, but a superbranch decomposition of the \emph{support hypergraph} of $G$, denoted by $\su{G}$.
The hypergraph $\su{G}$ has a hyperedge $e_v$ for each vertex $v \in V(G)$, with $V(e_v) = \{v\}$.
Naturally, it also has a hyperedge $e_{uv}$ with $V(e_{uv}) = \{u,v\}$ for each edge $uv \in E(G)$.\footnote{In our actual definition of the support hypergraph, it also has a hyperedge $e_{\bot}$ with $V(e_{\bot}) = \emptyset$, but this is only for resolving technicalities that do not show up in this proof sketch, so we do not include $e_{\bot}$ here.}
A superbranch decomposition of a hypergraph is defined in the exactly same way as a superbranch decomposition of a graph, but having hyperedges instead of edges mapped to the leaves of the tree.
Using the support hypergraph of $G$ instead of $G$ itself has the advantage that for each vertex $v$, there is a fixed hyperedge $e_v$ containing $v$.
Furthermore, it makes no difference to the relations between treewidth and well-linkedness.

The plan for edge insertions and deletions is the following:
For inserting an edge $uv$, we first rotate the leaves $\lmap^{-1}(e_u)$ and $\lmap^{-1}(e_v)$ corresponding to $e_u$ and $e_v$ up in the tree so that they become children of the root.
We then add a leaf corresponding to the new edge $uv$ as another child of the root.
For deleting an edge $uv$, we rotate each of the leaves $\lmap^{-1}(e_u)$, $\lmap^{-1}(e_v)$, and $\lmap^{-1}(e_{uv})$ up to become children of the root, and then delete the leaf $\lmap^{-1}(e_{uv})$.
The advantage of this process is that once these leaves are rotated to become children of the root, the edge insertion and deletion operations become trivial: Nothing needs to be recomputed in the decomposition outside of the root and its children, and downwards well-linkedness is maintained.

It remains to design a subroutine to rotate a set of at most $3$ given leaves so that they become children of the root node.
To decrease the depth of a leaf $\ell$, we only need to contract the edge between the parent $p$ and the grandparent $g$ of $\ell$.
This may result in a node of too large degree, but if it does, we simply apply the splitting operation, and in such a way that it does not again increase the depth of the leaf $\ell$.
In this way, we manage to decrease the depth of $\ell$ while maintaining the invariants, except that we may create unbalanced nodes which we will handle afterwards.
A single step like this can be implemented with one contraction operation and at most $2^{\OO(k)}$ split operations, running in time $2^{\OO(k)}$.
We also need to take into account that we may be rotating up to $3$ leaves towards the root simultaneously, and thus must avoid pushing other leaves downward while rotating one upward.
This can be resolved by always rotating the deepest leaf up, and in the splitting phase specifying these leaves as the special children, if they are children of the node we are splitting.

In this way, if we start with a decomposition of depth $2^{\OO(k)} \log n$, we manage to rotate the at most $3$ specified leaves to become children of the root with at most $2^{\OO(k)} \log n$ contraction and splitting operations, taking in total $2^{\OO(k)} \log n$ time.
It can also be shown, by using the ``telescoping'' of the potentials on the paths from these leaves to the root, that this increases the potential $\pot(\Tc)$ by at most $2^{\OO(k)} \log n$.
The implementation of the edge insertion/deletion operation then finishes by adding or deleting a child of the root, and then running the balancing procedure with a queue initialized to contain all of the at most $2^{\OO(k)} \log n$ nodes affected by this process.
This finishes the description of how we maintain a downwards well-linked superbranch decomposition of $G$ of degree at most $2^{\OO(k)}$ and depth at most $2^{\OO(k)} \log n$ with $2^{\OO(k)} \log n$ amortized update time.

\paragraph{Constant-approximation of treewidth.}
As discussed earlier, the superbranch decomposition $\Tc$ that we are maintaining can be directly translated to a tree decomposition by associating each node $t$ with a bag $\bag(t) = V(\torso(t))$.
While the adhesions are guaranteed to have size $\OO(\tw(G))$, the degree of $\Tc$ can be $2^{\OO(k)}$, resulting in a tree decomposition of width $2^{\OO(k)}$.
To decrease the resulting width to $\OO(\tw(G))$, we add a wrapper around all manipulations to the superbranch decomposition $\Tc$, which always replaces each node $t$ by a tree decomposition of the \emph{primal graph} of $\torso(t)$.
The primal graph of $\torso(t)$ is the graph $\primal(\torso(t))$, having vertex set $V(\torso(t))$, and an edge between two vertices $u$ and $v$ whenever there is a hyperedge $e \in E(\torso(t))$ containing both $u$ and $v$.
In particular, for each $e \in E(\torso(t))$ the set $V(e)$ is a clique in $\primal(\torso(t))$.

By using \Cref{lem:ovwllem}, i.e., the correspondence between well-linked sets in $\torso(t)$ and $G$, we obtain an algorithm that computes a tree decomposition $\Tc_t$ of $\primal(\torso(t))$ of width at most $9 \cdot \tw(G) + 8$, in time $2^{\OO(k)} \|\torso(t)\|^{\OO(1)} = 2^{\OO(k)}$.
We maintain such tree decompositions for all nodes $t$ of $\Tc$.
Because each adhesion $\adh(st)$ of an edge $st$ incident to $t$ is a clique in $\primal(\torso(t))$, these tree decompositions of torsos can be glued together by following the structure of $\Tc$, to obtain a tree decomposition $\tilde{\Tc}$ of $G$ of width at most $9 \cdot \tw(G) + 8$.
Furthermore, as these tree decompositions have size at most $2^{\OO(k)}$ each, the depth of $\tilde{\Tc}$ is $2^{\OO(k)} \log n$.
Also, it is not hard to simultaneously make $\tilde{\Tc}$ a binary tree decomposition by making each of the decompositions $\Tc_t$ a binary tree with distinct leaves for each child.

\paragraph{Maintaining dynamic programming schemes.}
The applications of dynamic treewidth require maintaining bottom-up dynamic programming schemes on the tree decomposition.
We follow the approach of~\cite{DBLP:conf/focs/KorhonenMNP023} for formalizing this via \emph{tree decomposition automata} and \emph{prefix-rebuilding updates}.
A tree decomposition automaton is an automaton operating on a rooted binary tree decomposition, where the state of a node is computed based on the states of its children, the bag of the node, the edges in the bag of the node, and the bags of its children.
The \emph{evaluation time} of such an automaton is the running time for computing a state of a node based on this information.
Most of the typical dynamic programming schemes on tree decompositions, with overall running time $\evaltime(k) \cdot n$ for some $\evaltime(k)$, can be interpreted as tree decomposition automata with evaluation time $\evaltime(k)$.

A \emph{prefix} of a rooted tree is a connected subtree that contains the root.
A prefix-rebuilding update of a tree decomposition updates it by deleting a prefix of the tree decomposition, and replacing it by a different prefix without changing the subtrees below.
It is easy to see that if a tree decomposition is updated by a prefix-rebuilding update, then we need to re-compute the tree automata states only for the new prefix.
In particular, this can be done in time linear in the size of the new prefix times the evaluation time $\evaltime(k)$ of the automaton.

It remains to argue that the manipulations to the tree decomposition made by our algorithm can be phrased in terms of prefix-rebuilding updates.
This would be trivial if we allowed an extra $2^{\OO(k)} \log n$ factor overhead from the depth of the decomposition, but it can also be done without any significant overhead.
The key idea is to just group all of the local changes to the tree decomposition that are caused by a single update to the graph $G$ into a single prefix-rebuilding update.
It turns out that our algorithm already has the property that the sequence of updates to the tree decomposition caused by a single update to $G$ touches essentially all nodes in some prefix of the tree decomposition, and therefore can be phrased as a prefix-rebuilding operation without significant overhead.
It follows that we can maintain the states of a tree decomposition automaton with evaluation time $\evaltime(k)$ within amortized update time $\evaltime(k) \cdot 2^{\OO(k)} \log n$.

\section{Preliminaries}
\label{sec:preli}
We introduce our definitions and state some preliminary results.
None of the definitions here are new, most of them are standard and some of them are from~\cite{DBLP:journals/corr/abs-2411-02658}, which are in turn based on the work of Robertson and Seymour (e.g.~\cite{RobertsonS91}).

\paragraph{Miscellaneous.}
For a function $f \colon X \to Y$ and a set $Z \subseteq X$, we denote by $\funrestriction{f}{Z} \colon Z \to Y$ the restriction of $f$ to $Z$.
For a set $S$, we denote by $\binom{S}{2}$ the set of all unordered pairs of elements from $S$.
For integers $a$ and $b$, we denote by $[a,b]$ the set of integers $i$ with $a \le i \le b$.
We use $[n]$ as a shorthand for $[1,n]$.
Logarithms are base-$2$ unless stated otherwise.
In the context of graph-theoretical notation, we may include the graph $G$ in the subscript to clarify which graph we are talking about.

We assume the standard model of computation in the context of graph algorithms, i.e., the word RAM model with $\Theta(\log n)$-bit words, but we do not abuse this in any way, i.e., we do not use any bit-tricks.

\paragraph{Graphs.}
A graph $G$ consists of a set of vertices $V(G)$ and a set of edges $E(G) \subseteq \binom{V(G)}{2}$.
The \emph{size} of a graph $G$ is $\|G\| = |V(G)| + |E(G)|$.
\emph{Contracting} an edge $uv$ in a graph $G$ is the operation of replacing the two vertices $u$ and $v$ by a single vertex $w_{uv}$ that is adjacent to all vertices that were adjacent to at least one of $u$ or $v$.

\paragraph{Trees.}
A tree is an acyclic connected graph.
To better distinguish trees from graphs, we sometimes call the vertices of a tree \emph{nodes}.
We define that a node of a tree is \emph{leaf} if its degree is $\le 1$.
The set of leaves of a tree $T$ is denoted by $\leaves(T)$.
A node that is not a leaf is an \emph{internal node}.
The set of internal nodes is denoted by $\vint(T) = V(T) \setminus \leaves(T)$.

A rooted tree is a tree $T$ where one node has been chosen as the root.
We remark that the root may be a leaf, and this turns out to be technically convenient for us and we will extensively use rooted trees where the root is a leaf in this paper.
The \emph{parent} of a non-root node $t$ in a rooted tree is the unique neighbor of $t$ on the unique path from $t$ to the root.
The \emph{grandparent} of $t$ (if one exists) is the parent of the parent.

If $T$ is a rooted tree and $t \in V(T)$, we denote by $\ddeg(t)$ the number of children of $t$.
For a set $X \subseteq V(T)$, we denote by $\ddeg(X) = \max_{t \in X} \ddeg(t)$ the maximum number of children of any node in $X$.
We also use $\ddeg(T) = \ddeg(V(T))$.
A rooted tree $T$ is \emph{binary} if $\ddeg(T) \le 2$.
For a node $t \in V(T)$, we denote by $\chd(t)$ the set of children of $t$.
In particular, $\ddeg(t) = |\chd(t)|$.
For a set $X \subseteq V(T)$, we denote $\chd(X) = \bigcup_{t \in X} \chd(t) \setminus X$.

A node $t$ is an \emph{ancestor} of a node $s$ if $t$ is on the unique path from $s$ to the root, and if $t$ is an ancestor of $s$, then $s$ is a \emph{descendant} of $t$.
In particular, every node is both an ancestor and a descendant of itself.
The set of ancestors of a node $t$ is denoted by $\anc(t)$, and for a set $X \subseteq V(T)$ we denote $\anc(X) = \bigcup_{t \in X} \anc(t)$. 
The set of descendants of a node $t$ is denoted by $\desc(t)$.
A \emph{prefix} of a rooted tree $T$ is a set $P \subseteq V(T)$ so that $P = \anc(P)$.
In other words, a prefix is a connected set of nodes that contains the root.
For a node $t$, we denote by $\leaves[t] = \desc(t) \cap \leaves(T)$ the descendants of $t$ that are leaves.

The \emph{depth} of a node $t$ of a rooted tree, denoted by $\depth(t)$ is the number of edges on the unique path between $t$ and the root.
In particular, the depth of the root is $0$.
The \emph{depth} of a rooted tree $T$, denoted by $\depth(T)$, is the maximum depth of its nodes.

\paragraph{Tree decompositions.}
A \emph{tree decomposition} of a graph $G$ is a pair $\Tc = (T,\bag)$, where $T$ is a tree and $\bag \colon V(T) \to 2^{V(G)}$ is a function mapping each node of $T$ to a \emph{bag} of vertices, that satisfies
\begin{enumerate}
\item $V(G) = \bigcup_{t \in V(T)} \bag(t)$,
\item $E(G) \subseteq \bigcup_{t \in V(T)} \binom{\bag(t)}{2}$, and
\item for each $v \in V(G)$, the set $\{t \in V(T) \mid v \in \bag(t)\}$ induces a connected subtree of $T$.
\end{enumerate}
The \emph{width} of a tree decomposition $\Tc$ is the maximum size of a bag minus one, and is denoted by $\width(\Tc)$.
The \emph{treewidth} of a graph $G$ is the minimum width of a tree decomposition of $G$ and is denoted by $\tw(G)$.
A \emph{rooted tree decomposition} is a tree decomposition where the tree $T$ is a rooted tree.
The \emph{size} of a tree decomposition $\Tc$ is $\|\Tc\| = \|T\|+\sum_{t \in V(T)} |\bag(t)|$.

\paragraph{Hypergraphs.}
Instead of graphs, in most of the technical sections of this paper we work with hypergraphs.
A hypergraph $G$ consists of a set of vertices $V(G)$, a set of hyperedges $E(G)$, and a mapping $V \colon E(G) \to 2^{V(G)}$ that associates each hyperedge with a set of vertices.
There may be distinct hyperedges $e_1, e_2 \in E(G)$ so that $V(e_1) = V(e_2)$.
For a set of hyperedges $A \subseteq E(G)$, we denote by $V(A) = \bigcup_{e \in A} V(e)$ the union of the vertices in the hyperedges.
We require that all hypergraphs satisfy $V(G) = V(E(G)) = \bigcup_{e \in E(G)} V(e)$.
The \emph{size} of a hypergraph is $\|G\| = |V(G)| + \sum_{e \in E(G)} (|V(e)|+1)$.

For a hypergraph $G$, the \emph{primal graph} of $G$ is the graph $\primal(G)$ with $V(\primal(G)) = V(G)$ and $E(\primal(G))$ containing an edge $uv$ whenever there is $e \in E(G)$ with $u,v \in V(e)$.
For a graph $G$, the \emph{support hypergraph} of $G$ is the hypergraph $\su{G}$ with $V(\su{G}) = V(G)$, and $E(\su{G})$ containing
\begin{itemize}
\item the hyperedge $e_{\bot}$ with $V(e_{\bot}) = \emptyset$,
\item for each $v \in V(G)$, the hyperedge $e_v$ with $V(e_v) = \{v\}$, and
\item for each $uv \in E(G)$, the hyperedge $e_{uv}$ with $V(e_{uv}) = \{u,v\}$.
\end{itemize}

Note that the size of $\su{G}$ is up to a constant factor the same as the size of $G$.
Also, note that for every graph $G$, $\primal(\su{G}) = G$.

A \emph{separation} of a hypergraph $G$ is a bipartition $(A,B)$ of $E(G)$, i.e., a pair of subsets $A,B \subseteq E(G)$ so that $A \cap B = \emptyset$ and $A \cup B = E(G)$.
The \emph{order} of a separation $(A,B)$ is $|V(A) \cap V(B)|$.
For a set $A \subseteq E(G)$, we denote by $\co{A}$ the \emph{complement} of $A$, i.e., $\co{A} = E(G) \setminus A$.
We denote by $\bd(A) = V(A) \cap V(\co{A})$ the \emph{boundary} of $A$, and by $\bdc(A) = |\bd(A)|$ the size of the boundary of $A$.
Note that the order of a separation $(A,B) = (A,\co{A})$ is $\bdc(A) = \bdc(B) = \bdc(\co{A})$.

Now the \emph{submodularity} of separators can be expressed in a clean way, in particular, the function $\bdc \colon 2^{E(G)} \to \mathbb{Z}_{\ge 0}$ is a \emph{symmetric submodular function}, meaning
\begin{itemize}
\item $\bdc(A \cup B) + \bdc(A \cap B) \le \bdc(A) + \bdc(B)$ for all $A,B \subseteq E(G)$ (submodularity), and
\item $\bdc(A) = \bdc(\co{A})$ for all $A \subseteq E(G)$ (symmetry).
\end{itemize}
The proof of this can be found for example in~\cite{RobertsonS91}.
The symmetry and submodularity of $\bdc$ is our main graph-theoretical tool.

A hypergraph is \emph{normal} if each of its vertices appears in at least two hyperedges.
In particular, for a normal hypergraph $G$ and any $e \in E(G)$, it holds that $\bd(\{e\}) = V(e)$.

For a hypergraph $G$ and a set $A \subseteq E(G)$, we define $G \rescliqs A$ to be the hypergraph with vertex set $V(G \rescliqs A) = V(\co{A})$ and edge set $E(G \rescliqs A) = \co{A} \cup \{e_A\}$, where $V_{G \rescliqs A}(e) = V_G(e)$ for all $e \in \co{A}$, and $V_{G \rescliqs A}(e_A) = \bd(A)$.
In particular, this replaces the set of hyperedges $A$ with a single hyperedge $e_A$ consisting of the boundary of $A$.
We observe that sets of hyperedges in $G \rescliqs A$ can be mapped to sets of hyperedges in $G$.
In particular, for a set $B \subseteq E(G \rescliqs A)$, we denote

\[B \orescliqs A =\begin{cases}
B & \text{ if $e_A \notin B$}\\
B \setminus \{e_A\} \cup A & \text{ if $e_A \in B$}.
\end{cases}\]
This mapping has the nice property that $\bd_G(B \orescliqs A) = \bd_{G \rescliqs A}(B)$.
It follows that for a separation $(B,\co{B})$ of $G \rescliqs A$, $(B \orescliqs A, \co{B} \orescliqs A)$ is a separation of $G$ of the same order.

\paragraph{Well-linked sets.}
Let $G$ be a hypergraph.
A set $A \subseteq E(G)$ is \emph{well-linked} if for all bipartitions $(C_1, C_2)$ of $A$, it holds that either $\bdc(C_1) \ge \bdc(A)$ or $\bdc(C_2) \ge \bdc(A)$.
Well-linkedness will be the core graph-theoretical concept in this paper.
The \emph{well-linked-number} of a hypergraph $G$, denoted by $\wl(G)$, is the largest integer $k$ so that there is a well-linked set $A \subseteq E(G)$ with $\bdc(A) = k$.
For a hyperedge $e \in E(G)$, we also denote by $\wl_e(G)$ the largest integer $k$ so that there is a well-linked set $A \subseteq E(G) \setminus \{e\}$ with $\bdc(A) = k$.

Note that every set $A \subseteq E(G)$ with $|A| \le 1$ is well-linked.
Furthermore, it can be shown that every set $A$ with $\bdc(A) \le 1$ is well-linked.
In particular, $E(G)$ is always well-linked.

We will need an algorithm that tests if a set of hyperedges in a hypergraph is well-linked, and if not, outputs a bipartition witnessing it.
Such an algorithm follows from well-known techniques~\cite{RobertsonS-GMXIII} (see also~\cite[Section~7.6]{DBLP:books/sp/CyganFKLMPPS15}), but we also present a proof in \Cref{sec:app:missingproofs} using our notation.\footnote{We mark by (\inapp) the lemmas whose proofs are presented in \Cref{sec:app:missingproofs}.}

\begin{restatable}[{\cite{RobertsonS-GMXIII}}, \inapp]{lemma}{lemtestwl}
\label{lem:testwl}
There is an algorithm that, given a hypergraph $G$ and a set $A \subseteq E(G)$, in time $2^{\OO(\bdc(A))} \cdot \|G\|^{\OO(1)}$ either
\begin{itemize}
\item returns a bipartition $(C_1, C_2)$ of $A$ so that $\bdc(C_i) < \bdc(A)$ for both $i \in [2]$, or
\item concludes that $A$ is well-linked.
\end{itemize}
\end{restatable}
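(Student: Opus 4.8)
The plan is to reduce the problem to a bounded number of minimum vertex-cut computations, one for each way of guessing how the sought bipartition interacts with $\bd(A)$. Write $k = \bdc(A)$ and $X = \bd(A)$, so $|X| = k$; by definition $A$ fails to be well-linked precisely when there is a bipartition $(C_1, C_2)$ of $A$ with $\bdc(C_1) \le k-1$ and $\bdc(C_2) \le k-1$, and the task is to either exhibit such a bipartition or certify that none exists. The starting observation is elementary: writing $B = V(C_1) \cap V(C_2)$, one has $\bd(C_i) = (B \setminus X) \cup (V(C_i) \cap X)$ as a disjoint union (this uses only $V(A) \cap V(\co{A}) = X$ together with the fact that every vertex of $V(A) \setminus X$ lies in no hyperedge of $\co{A}$), and hence $\bdc(C_i) = |B \setminus X| + |V(C_i) \cap X|$. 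So to search for a good bipartition it suffices to control, separately, the two ``boundary-on-$X$'' sets $V(C_i) \cap X$ and the ``internal'' set $B \setminus X$.

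Accordingly, I would iterate over all pairs $(P_1, P_2)$ of subsets of $X$ with $P_1 \cup P_2 = X$ (there are $3^{k}$ such pairs) and $|P_1|, |P_2| \le k-1$, with the intended meaning $P_i = V(C_i) \cap X$. Given such a guess, a hyperedge $e \in A$ meeting $X \setminus P_2$ is forced into $C_1$, and symmetrically one meeting $X \setminus P_1$ is forced into $C_2$; if some hyperedge is forced both ways, discard the guess. Otherwise let $F_1, F_2 \subseteq A$ be the (disjoint) forced parts, and note that by the non-discard condition every hyperedge that can ever land in $C_1$ meets $X$ only inside $P_1$, so automatically $|V(C_1) \cap X| \le |P_1|$, and likewise for $C_2$. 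It then remains to distribute the unforced hyperedges $R = A \setminus (F_1 \cup F_2)$ so as to minimize $|B \setminus X|$, and this is exactly a minimum $s$--$t$ vertex cut: build an auxiliary graph whose nodes are the hyperedges of $A$ and the vertices of $V(A) \setminus X$, with an edge between a hyperedge $e$ and each $v \in V(e) \setminus X$, with $s$ joined to all of $F_1$ and $t$ joined to all of $F_2$, and with unit capacity on each $V(A) \setminus X$ node (all other capacities infinite). A minimum vertex cut $Z$ then yields a bipartition --- put a hyperedge of $R$ into $C_1$ iff it is $s$-reachable after deleting $Z$, else into $C_2$ --- with $(V(C_1) \cap V(C_2)) \setminus X \subseteq Z$, and conversely any valid bipartition extending $(F_1,F_2)$ induces an $s$--$t$ vertex cut of size $|B \setminus X|$. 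Solving this max-flow computes the optimal $|B \setminus X|$; we then check whether the resulting $(C_1, C_2)$ has $\bdc(C_1), \bdc(C_2) \le k-1$ and, if so, output it. If no guess succeeds, report that $A$ is well-linked.

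For correctness, one direction is immediate since the algorithm outputs only genuine witnesses. For the other, if $A$ is not well-linked, fix a witness $(C_1^\star, C_2^\star)$ and consider the iteration with $P_i = V(C_i^\star) \cap X$: this pair satisfies $P_1 \cup P_2 = X$ (as $V(C_1^\star)\cup V(C_2^\star) = V(A) \supseteq X$) and $|P_i| \le \bdc(C_i^\star) \le k-1$, it is not discarded, and the witness itself is a valid distribution of $R$ extending $(F_1, F_2)$, so the min cut returns a bipartition with $|B \setminus X| \le |B^\star \setminus X|$ and $|V(C_i) \cap X| \le |P_i| = |V(C_i^\star) \cap X|$; by the identity $\bdc(C_i) = |B \setminus X| + |V(C_i) \cap X|$ this gives $\bdc(C_i) \le \bdc(C_i^\star) \le k-1$, so the algorithm outputs a witness. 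The running time is $3^{k}$ iterations, each dominated by one max-flow computation and two boundary evaluations, i.e.\ $\|G\|^{\OO(1)}$ each, for a total of $2^{\OO(\bdc(A))} \cdot \|G\|^{\OO(1)}$.

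The step I expect to need the most care is the design and analysis of the auxiliary flow network: one must check both that the cut $Z$ dominates $(V(C_1) \cap V(C_2)) \setminus X$ for the bipartition read off from $s$-reachability --- the subtle point being the hyperedges of $R$ reachable from neither $s$ nor $t$ after deleting $Z$ --- and that, conversely, every admissible bipartition yields an $s$--$t$ vertex cut of size $|B \setminus X|$, so that the min cut is neither an over- nor an under-estimate of the best achievable internal boundary. Everything else is routine: handling vertex capacities by the standard node-splitting trick, and the degenerate cases $k \le 1$, where no guess can produce two boundaries of size $\le k-1$ with both parts nonempty, so the algorithm correctly reports well-linkedness.
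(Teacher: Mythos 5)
Your proposal is correct and follows essentially the same route as the paper's proof: enumerate the $2^{\OO(\bdc(A))}$ possible traces of the two boundaries on $\bd(A)$ and reduce each guess to a single minimum vertex cut computation (the paper phrases the cut in the primal graph of $A$ with terminal sets $S_i = \bd(A)\cap\bd(C_i)$, you phrase it in the hyperedge–vertex incidence graph with forced hyperedges, but these are the same reduction). The bookkeeping identity $\bdc(C_i) = |B\setminus X| + |V(C_i)\cap X|$ and the reachability argument for reading off the bipartition are both sound.
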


\paragraph{Superbranch decompositions.}
A superbranch decomposition of a hypergraph $G$ is a pair $\Tc = (T,\lmap)$, where $T$ is a tree whose every internal node has degree $\ge 3$ and $\lmap \colon \leaves(T) \to E(G)$ is a bijection from the leaves of $T$ to $E(G)$.

For an edge $uv \in E(T)$, we denote by $\lmap(\vec{uv}) \subseteq E(G)$ the hyperedges of $G$ that correspond to leaves that closer to $u$ than $v$.
In particular, $\lmap(\vec{uv})$ consists of the hyperedges $e$ so that the unique path from $\lmap^{-1}(e)$ to $v$ contains $u$.
Note that $(\lmap(\vec{uv}), \lmap(\vec{vu}))$ is a separation of $G$, and we say that such a separation is a separation of $\Tc$.
The \emph{adhesion} at an edge $uv \in E(T)$ is the set $\adh(uv) = \bd(\lmap(\vec{uv})) = \bd(\lmap(\vec{vu}))$.
We denote the maximum size of an adhesion of $\Tc$ by $\adhsize(\Tc)$.

The \emph{torso} of an internal node $t \in \vint(T)$ of a superbranch decomposition $\Tc = (T,\lmap)$ is the hypergraph $\torso(t)$ with
\begin{itemize}
\item $E(\torso(t)) = \{e_s \mid st \in E(T)\}$,
\item $V(e_s) = \adh(st)$ for each $e_s \in E(\torso(t))$, and
\item $V(\torso(t)) = \bigcup_{e_s \in E(\torso(t))} V(e_s)$.
\end{itemize}
In particular, $\torso(t)$ is the hypergraph obtained by repeatedly applying the $\rescliqs$ operation as
\begin{align}
\label{ali:torsodef}
\torso(t) = G \rescliqs \lmap(\vec{s_1 t}) \rescliqs \lmap(\vec{s_2 t}) \rescliqs \ldots \rescliqs \lmap(\vec{s_\ell t}),
\end{align}
where $s_1, s_2, \ldots, s_\ell$ are the neighbors of $t$ in $T$.
Note that the number of hyperedges of $\torso(t)$ is the number of neighbors of $t$, and the sizes of the hyperedges of $\torso(t)$ are bounded by $\adhsize(\Tc)$.
With \Cref{ali:torsodef} in mind, for a set $A \subseteq E(\torso(t))$, we denote $A \orescliqs \Tc = \bigcup_{e_s \in A} \lmap(\vec{s t})$.
We observe the following connection between tree decompositions and superbranch decompositions.

\begin{observation}
\label{obs:superbdtotd}
If $\Tc = (T,\lmap)$ is a superbranch decomposition of a hypergraph $G$, then $(T,\bag)$, where $\bag(\ell) = V(\lmap(\ell))$ for $\ell \in \leaves(T)$ and $\bag(t) = V(\torso(t))$ for $t \in \vint(T)$, is a tree decomposition of $\primal(G)$.
\end{observation}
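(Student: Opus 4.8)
The plan is to verify the three axioms of a tree decomposition directly, using the definitions of $\lmap(\vec{uv})$, $\adh$, and $\torso$. Throughout, let $H = \primal(G)$, so $V(H) = V(G)$ and $uv \in E(H)$ iff some hyperedge $e \in E(G)$ contains both $u$ and $v$.

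First I would establish the \emph{vertex coverage} axiom (1). For any vertex $v \in V(G)$, since $V(G) = V(E(G))$, there is some hyperedge $e \in E(G)$ with $v \in V(e)$. Let $\ell = \lmap^{-1}(e)$ be the corresponding leaf; then $v \in V(\lmap(\ell)) = \bag(\ell)$, so $v$ is covered. (One should note the degenerate case: if $T$ consists of a single leaf, then $G$ has exactly one hyperedge and the statement is immediate; if $T$ has an internal node $t$, the leaves' bags already suffice.) For the \emph{edge coverage} axiom (2), take $uv \in E(H)$; by definition of the primal graph there is $e \in E(G)$ with $u,v \in V(e)$, and again the leaf $\ell = \lmap^{-1}(e)$ has $u,v \in \bag(\ell)$.

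The main work, and the step I expect to be the real obstacle, is the \emph{connectivity} axiom (3): for each $v \in V(G)$, the set $T_v = \{t \in V(T) \mid v \in \bag(t)\}$ induces a connected subtree. The key structural fact to extract is: for an edge $st \in E(T)$, we have $v \in \adh(st) = \bd(\lmap(\vec{st}))$ iff $v$ lies in a hyperedge on the $s$-side and also in a hyperedge on the $t$-side of the separation induced by $st$. I would argue that $T_v$ is exactly characterized as follows. Call a leaf $\ell$ \emph{$v$-relevant} if $v \in V(\lmap(\ell))$; by axiom (1)'s argument these exist. I claim $t \in T_v$ iff $t$ lies on the unique minimal subtree $T_v^{\mathrm{sub}}$ of $T$ spanning all $v$-relevant leaves. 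For a leaf $\ell$ this is clear from the definitions. For an internal node $t$: if $t \in T_v^{\mathrm{sub}}$, then either $t$ is itself $v$-relevant-adjacent to two different branches containing $v$-relevant leaves, in which case $v$ belongs to hyperedges in at least two of the component-sets $\lmap(\vec{s_i t})$, so $v \in \adh(s_i t) \subseteq V(e_{s_i}) \subseteq V(\torso(t)) = \bag(t)$; conversely if $t \notin T_v^{\mathrm{sub}}$, then all $v$-relevant leaves lie in a single subtree hanging off one neighbor $s$ of $t$, so $v$ appears in hyperedges of $G$ only within $\lmap(\vec{st})$, hence $v \notin \bd(\lmap(\vec{s' t}))$ for every neighbor $s'$ of $t$, giving $v \notin V(\torso(t)) = \bag(t)$. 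Since $T_v^{\mathrm{sub}}$ is by construction a subtree (connected), this proves axiom (3).

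The delicate point in the argument above is the "conversely" direction for internal nodes: one must be careful that $v$ not appearing in the boundary at \emph{any} edge incident to $t$ really follows from all $v$-relevant leaves being on one side. This is where one uses that $\torso(t)$ is built from $G$ by iterated $\rescliqs$ operations (as in \eqref{ali:torsodef}), and that the $\rescliqs$ operation on a set $A$ introduces a new hyperedge with vertex set exactly $\bd(A)$; so a vertex $v$ enters $V(\torso(t))$ precisely when it survives into some adhesion, which happens exactly when $v$ is "split" by the corresponding separation — i.e., appears on both sides. Having phrased the characterization of $T_v$ via the spanning subtree $T_v^{\mathrm{sub}}$, connectivity is then automatic, and the whole proof is a routine but careful unwinding of definitions with no heavy machinery required.
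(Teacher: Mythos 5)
Your proof is correct. The paper states this as an observation and gives no proof at all, so there is nothing to compare against; your verification of the three axioms, with the key step being the characterization of $\{t \mid v \in \bag(t)\}$ as the minimal subtree spanning the leaves whose hyperedges contain $v$, is exactly the standard complete argument one would write out.
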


A \emph{rooted superbranch decomposition} is a superbranch decomposition $(T,\lmap)$ where $T$ is a rooted tree.
For a hypergraph $G$ and $e \in E(G)$, an $e$-rooted superbranch decomposition of $G$ is a rooted superbranch decomposition where the root is the node $\lmap^{-1}(e)$.
In this paper, we will maintain an $e_{\bot}$-rooted superbranch decomposition of the support hypergraph $\su{G}$ of the dynamic graph $G$.
Using a superbranch decomposition rooted at a leaf has the technical advantage that every internal node has a parent, reducing the number of cases to consider.

In a rooted superbranch decomposition, for a node $t \in V(T)$ we denote by $\lmap[t] \subseteq E(G)$ the set of hyperedges of $G$ that are mapped to leaves in $\leaves[t]$.

\paragraph{Representation of objects.}
We assume that graphs are represented in the adjacency list format, where edges can be inserted in $\OO(1)$ time, and deleted, given a pointer to the edge, in $\OO(1)$ time.
Note that this does not allow querying if there is an edge between $u$ and $v$ in $\OO(1)$ time.
Hypergraphs are represented as bipartite graphs, where one side of the bipartition is $V(G)$ and other is $E(G)$, and there is an edge between $v \in V(G)$ and $e \in E(G)$ if $v \in V(e)$.
A tree is represented as a graph, and a rooted tree as a tree that contains an additional global pointer pointing to the root node, and for each non-root node $t$ a pointer pointing to the edge $tp$ between $t$ and its parent $p$.
A representation of a tree decomposition $(T,\bag)$ consists of a representation of $T$ and a representation of $\bag$ where each $\bag(t)$ is represented as a linked list to which $t$ contains a pointer to.

A representation of a superbranch decomposition $\Tc = (T,\lmap)$ consists of a representation of $T$, and additionally,
\begin{itemize}
\item $\lmap \colon \leaves(T) \to E(G)$ represented as each leaf storing a pointer to the corresponding node,
\item the inverse $\lmap^{-1} \colon E(G) \to \leaves(T)$ represented as each $e \in E(G)$ containing a pointer to the corresponding leaf,
\item for each edge $st \in E(T)$, the set $\adh(st)$,
\item for each internal node $t \in \vint(T)$, the hypergraph $\torso(t)$,
\item for each hyperedge $e_s \in \torso(t)$, a pointer to the corresponding edge $st$ of $T$, and from each edge $st \in E(T)$, pointers to the corresponding hyperedges $e_t$ of $\torso(s)$ and $e_s$ of $\torso(t)$, and
\item for each node $t \in V(T)$, the number $|\leaves[t]|$ of leaf descendants of it.
\end{itemize}

\section{Downwards well-linked superbranch decompositions}
\label{sec:downwl}
In this section we define \emph{downwards well-linked superbranch decompositions} and discuss their properties.
In our algorithm, we will maintain a downwards well-linked superbranch decomposition of the hypergraph $\su{G}$, where $G$ is the input dynamic graph.

We define that a rooted superbranch decomposition $\Tc = (T,\lmap)$ of a hypergraph $G$ is \emph{downwards well-linked} if for every node $t \in V(T)$, the set $\lmap[t] \subseteq E(G)$ is well-linked in $G$.
As $\adh(tp) = \bd(\lmap[t])$ for each node $t$ with parent $p$, this implies that $\adhsize(\Tc) \le \wl(G)$.
This connects to treewidth via the following well-known lemma.

\begin{restatable}[{\cite{RobertsonS-GMXIII}}, \inapp]{lemma}{lemwltotwlink}
\label{lem:wltotwlink}
For every graph $G$, $\wl(\su{G}) \le 3 \cdot (\tw(G) + 1)$.
\end{restatable}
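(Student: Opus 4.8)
The plan is to unfold the definition of $\wl$: it suffices to prove that every well-linked $A \subseteq E(\su G)$ satisfies $\bdc(A) \le 3(\tw(G)+1)$, and I would do this by contrapositive, exhibiting — whenever $\bdc(A)$ is too large — a bipartition $(C_1,C_2)$ of $A$ with $\bdc(C_i) < \bdc(A)$ for both $i$, which contradicts well-linkedness.

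First I would fix an optimal tree decomposition $(T,\bag)$ of $G$. Since $\primal(\su G) = G$, it is also a tree decomposition of $\su G$ in the sense that $V(e)$ lies in some bag for each $e \in E(\su G)$ — this is where the singleton hyperedges $e_v$ and the empty hyperedge $e_\bot$ are harmless, as $V(e_{uv})=\{u,v\}$ is contained in a common bag because $uv\in E(G)$. Assign to each $e\in A$ a node $\beta(e)$ with $V(e)\subseteq\bag(\beta(e))$, write $Z=\bd(A)$, orient each edge of $T$ towards the side containing more vertices $v\in Z$ all of whose bags lie on that side, and let $t$ be a sink. For each subtree $U$ hanging off $t$ put $A_U=\{e\in A:\beta(e)\in U\}$, and let $Z_U$ be the set of $v\in Z\setminus\bag(t)$ all of whose bags lie in $U$; the $Z_U$ partition $Z\setminus\bag(t)$, and the sink property gives $|Z_U|\le\tfrac12|Z|$ for every $U$. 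A short computation with the tree-decomposition axioms then shows that for any collection $I$ of these subtrees, setting $C_1=\bigcup_{U\in I}A_U$ and $C_2=A\setminus C_1$ (the hyperedges of $A$ placed at $t$ itself contribute only vertices of $\bag(t)$ and may go to either side), one has $\bd(C_1)\subseteq\bag(t)\cup\bigcup_{U\in I}Z_U$ and $\bd(C_2)\subseteq\bag(t)\cup\bigcup_{U\notin I}Z_U$. Hence it suffices to choose $I$ so that $\sum_{U\in I}|Z_U|$ and $\sum_{U\notin I}|Z_U|$ are both at most $\bdc(A)-|\bag(t)|-1\ge\bdc(A)-\tw(G)-2$; since the blocks $|Z_U|$ each have size at most $\tfrac12\bdc(A)$ and sum to at most $\bdc(A)$, a greedy grouping (isolate any block exceeding the target, otherwise accumulate) achieves this once $\bdc(A)$ is sufficiently large relative to $\tw(G)$, yielding the contradiction.

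The main obstacle is getting the constant exactly $3$: carried out naively the argument above only gives $\bdc(A)\le 3\tw(G)+O(1)$ (and even $4\tw(G)+O(1)$ if one is careless about how the subtrees are grouped), and squeezing it down to $3(\tw(G)+1)$ requires a careful version of the balancing/separator analysis — this is precisely the technical content of the classical statement cited from~\cite{RobertsonS-GMXIII}.

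An equivalent, more textbook route goes through brambles. Using the disjoint-paths reformulation of well-linkedness (a Menger-type fact, as recalled in the overview), $Z=\bd(A)$ is a well-linked vertex set in the subgraph $G_A\subseteq G$ formed by the hyperedges of $A$ (note $G_A$ is a subgraph of $\primal(\su G)=G$, so $\tw(G_A)\le\tw(G)$, and well-linkedness forces $Z$ into one component of $G_A$). A standard construction turns a well-linked vertex set of size $z$ into a bramble of order $\ge\lceil z/3\rceil$ — its elements are the connected subgraphs meeting $Z$ in more than $\tfrac23 z$ vertices, together with short connectors, and they pairwise touch because their $Z$-parts overlap — and the bound then follows from the easy direction of bramble/treewidth duality, $\tw\ge\mathrm{order}(\text{bramble})-1$. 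Here too the delicate point is the ``$1/3$'', i.e. that no vertex set of size $<z/3$ can hit every bramble element, which is exactly where the well-linkedness of $Z$ is used.
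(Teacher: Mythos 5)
Your first route is essentially the paper's proof. The paper does not re-derive the sink-node/orientation argument: it invokes as a black box the standard fact (\cite[Lemma~7.20]{DBLP:books/sp/CyganFKLMPPS15}, stated as \Cref{lem:twbalsep}) that for any $X \subseteq V(G)$ there is a separation $(A,B)$ of order exactly $\tw(G)+1$ with $|(A\setminus B)\cap X|, |(B\setminus A)\cap X| \le \frac{2}{3}|X|$, applies it to $X = \bd(W)$, splits the hyperedges of $W$ according to whether $V(e) \subseteq A$ or $V(e)\subseteq B$, and computes $\bdc(C_A) \le (\tw(G)+1) + \frac{2}{3}\bdc(W) < \bdc(W)$ whenever $\bdc(W) > 3(\tw(G)+1)$. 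This resolves the worry in your third paragraph: no extra squeezing is needed, because the target is $3\tw(G)+3$, not $3\tw(G)$, and the additive constant falls out immediately once the separator order is $\tw(G)+1$ and the balance is $\frac{2}{3}$ (the latter obtained from the $\frac{1}{2}$-sized blocks by the grouping you describe). Your alternative bramble route is a genuinely different and also classical argument; it would work, but it needs the Menger-type reformulation of well-linkedness and the bramble--treewidth inequality, neither of which the paper sets up, so the separator route is the more economical one here.
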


Moreover, a converse $\tw(G) + 1 \le \wl(\su{G})$ also holds~\cite{Reed97}, but we will not directly use that statement in this paper.

An important property of downwards well-linkedness is that it can be certified in a ``local'' manner.
This will be made formal in \Cref{lem:wltransindecomp}, but to prove it the main tool is the following lemma from~\cite{DBLP:journals/corr/abs-2411-02658}.
We present its proof also here because it is the most important graph-theoretical statement used for our data structure.

\begin{lemma}[{\cite[Lemma~6.3]{DBLP:journals/corr/abs-2411-02658}}]
\label{lem:wltransitive}
Let $G$ be a hypergraph, $A \subseteq E(G)$ a well-linked set, and $B \subseteq E(G \rescliqs A)$.
Then, $B \orescliqs A$ is well-linked in $G$ if and only if $B$ is well-linked in $G \rescliqs A$.
\end{lemma}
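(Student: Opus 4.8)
## Proof plan for Lemma~\ref{lem:wltransitive}

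\textbf{Overall strategy.} The plan is to prove both directions by contraposition, using the correspondence between separations of $G \rescliqs A$ and separations of $G$ via the operation $\orescliqs A$, together with the submodularity and symmetry of $\bdc$. Recall from the preliminaries that for a separation $(D, \co{D})$ of $G \rescliqs A$ we have $\bd_G(D \orescliqs A) = \bd_{G \rescliqs A}(D)$, so orders are preserved; this will be the main bookkeeping tool. The only genuine content beyond this is handling the interaction between a bipartition of $B \orescliqs A$ and the well-linked set $A$ itself, which is where submodularity enters.

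\textbf{The easy direction: $B \orescliqs A$ well-linked $\Rightarrow$ $B$ well-linked in $G \rescliqs A$.} I would prove the contrapositive. Suppose $B$ is not well-linked in $G \rescliqs A$, so there is a bipartition $(C_1, C_2)$ of $B$ with $\bdc_{G \rescliqs A}(C_i) < \bdc_{G \rescliqs A}(B)$ for both $i$. Consider the bipartition $(C_1 \orescliqs A, C_2 \orescliqs A)$ of $B \orescliqs A$ in $G$ (note that exactly one of the $C_i$ contains $e_A$, so exactly one gets expanded, but in either case this is a genuine bipartition of $B \orescliqs A$). By the order-preservation property, $\bdc_G(C_i \orescliqs A) = \bdc_{G \rescliqs A}(C_i) < \bdc_{G \rescliqs A}(B) = \bdc_G(B \orescliqs A)$ for both $i$, witnessing that $B \orescliqs A$ is not well-linked. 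This direction does not use that $A$ is well-linked.

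\textbf{The hard direction: $B$ well-linked in $G \rescliqs A$ $\Rightarrow$ $B \orescliqs A$ well-linked in $G$.} Again I argue by contraposition: assume $B \orescliqs A$ is not well-linked in $G$, so there is a bipartition $(D_1, D_2)$ of $B \orescliqs A$ in $G$ with $\bdc_G(D_i) < \bdc_G(B \orescliqs A)$ for both $i$. The obstacle is that $D_1, D_2$ may split the hyperedges of $A$ arbitrarily, so they do not directly descend to a bipartition of $B$ in $G \rescliqs A$. The fix is to ``uncross'' with $A$: set $A_i = D_i \cap A$, so $(A_1, A_2)$ is a bipartition of $A$. Since $A$ is well-linked, one of them, say $A_1$, satisfies $\bdc_G(A_1) \ge \bdc_G(A)$. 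Now I want to move all of $A$ to the $D_1$-side: consider $D_1' = D_1 \cup A$ and $D_2' = D_2 \setminus A$. I claim $\bdc_G(D_1') \le \bdc_G(D_1)$ and $\bdc_G(D_2') \le \bdc_G(D_2)$, so that $(D_1', D_2')$ is still a bipartition of $B \orescliqs A$ with both sides of order strictly less than $\bdc_G(B \orescliqs A)$, and moreover now $D_1' \supseteq A$, so $(D_1', D_2')$ corresponds to a bipartition of $B$ in $G \rescliqs A$ of the same orders, contradicting well-linkedness of $B$. For the claim: apply submodularity to $D_1$ and $A$, giving $\bdc(D_1 \cup A) + \bdc(D_1 \cap A) \le \bdc(D_1) + \bdc(A)$; since $\bdc(D_1 \cap A) = \bdc(A_1) \ge \bdc(A)$, we get $\bdc(D_1') = \bdc(D_1 \cup A) \le \bdc(D_1)$. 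For the other side, note $D_2' = \co{D_1'}$ and $D_2 = \co{D_1}$ within the ground set $B \orescliqs A$... actually more carefully, within all of $E(G)$: $D_2 = E(G) \setminus D_1$ only if $B \orescliqs A = E(G)$, which need not hold, so instead I would apply symmetry/submodularity directly to $D_2$ and $\co{A}$, or equivalently observe $\bdc(D_2 \setminus A) = \bdc(D_2 \cap \co{A})$ and use submodularity of $D_2$ with $\co{A}$ together with $\bdc(D_2 \cup \co{A}) = \bdc(\co{D_2 \cap A}) = \bdc(A_2) \ge \bdc(A)$ — wait, that inequality for $A_2$ is not given. Let me instead handle $D_2'$ symmetrically by the same submodular uncrossing but now the relevant quantity is $\bdc(D_2 \cap A) = \bdc(A_2)$, for which we only know $\bdc(A_2) \le \bdc(A_1)$ need not hold either. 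So the clean route is: since $\bdc(A_1) \ge \bdc(A) = \bdc(\co{A})$, and $\co{A} = A_1 \cup (\co{A} \setminus D_1)$... The right move is simply to run the uncrossing on the side where it works (the $A_1$ side) and then note that moving $A$ entirely to the $D_1$-side automatically keeps $D_2' = D_2 \setminus A$ with $\bdc(D_2') \le \bdc(D_2)$ because $\bd(D_2')$ is obtained from $\bd(D_2)$ by possibly removing vertices that were only in $D_2$ via $A$-hyperedges now absorbed into $D_1'$; I will verify this cleanly via submodularity applied to $D_2$ and $\co{A}$ using $\bdc(\co{A}) = \bdc(A) \le \bdc(A_1) = \bdc(D_1 \cap A) = \bdc(\co{D_1} \cup \co{A}) \le $ ... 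In any case, the honest statement is: submodularity gives both inequalities once we know $\bdc(A_1) \ge \bdc(A)$, because $\bd(D_2 \setminus A) \subseteq \bd(D_2) \cup \bd(A_1)$ minus overlap, and a short computation closes it. The main obstacle is precisely getting this second inequality for $D_2'$ airtight; everything else is routine.
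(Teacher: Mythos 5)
Your overall strategy is exactly the paper's: the easy direction via the order-preserving correspondence of separations, and the hard direction by uncrossing a bad bipartition $(D_1,D_2)$ of $B \orescliqs A$ with $A$, moving all of $A$ to the side $D_1$ on which $\bdc_G(D_1 \cap A) \ge \bdc_G(A)$. The first inequality $\bdc_G(D_1 \cup A) \le \bdc_G(D_1)$ is correctly closed by submodularity. But the second inequality, $\bdc_G(D_2 \setminus A) \le \bdc_G(D_2)$, is left genuinely open in your write-up — you try several routes, and the route you abandon fails only because of a De Morgan slip. Applying submodularity to $D_2$ and $\co{A}$ gives
\[
\bdc_G(D_2 \cap \co{A}) \;\le\; \bdc_G(D_2) + \bdc_G(\co{A}) - \bdc_G(D_2 \cup \co{A}),
\]
and the complement of $D_2 \cup \co{A}$ is $\co{D_2} \cap A = A \setminus D_2 = A \cap D_1 = A_1$ (using that $A \subseteq D_1 \cup D_2$), \emph{not} $A_2$ as you wrote. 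Hence by symmetry $\bdc_G(D_2 \cup \co{A}) = \bdc_G(A_1) \ge \bdc_G(A) = \bdc_G(\co{A})$, and the two middle terms cancel to give $\bdc_G(D_2 \setminus A) = \bdc_G(D_2 \cap \co{A}) \le \bdc_G(D_2)$. This is precisely the computation in the paper's proof; the inequality you need for the $D_2$-side is the \emph{same} inequality $\bdc_G(A_1) \ge \bdc_G(A)$ you already secured, not a new one for $A_2$.

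Two smaller points. First, your uncrossing argument tacitly assumes $A \subseteq B \orescliqs A$, i.e.\ $e_A \in B$; otherwise $(D_1 \cap A, D_2 \cap A)$ is not a bipartition of $A$ and the well-linkedness of $A$ cannot be invoked. The case $e_A \notin B$ must be split off, but it is trivial since then $B \orescliqs A = B$ and boundaries of subsets of $B$ agree in $G$ and $G \rescliqs A$. Second, after the uncrossing you should say explicitly that $(D_1 \cup A,\, D_2 \setminus A)$ corresponds to the bipartition $(\{e_A\} \cup (D_1 \setminus A),\, D_2 \setminus A)$ of $B$ in $G \rescliqs A$ with the same orders, which is what contradicts the well-linkedness of $B$. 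With these repairs your argument coincides with the paper's proof.
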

\begin{proof}
We prove the only-if-direction first.
Suppose that $B$ is not well-linked in $G \rescliqs A$, and let $(C_1,C_2)$ be a bipartition of $B$ with $\bdc_{G \rescliqs A}(C_i) < \bdc_{G \rescliqs A}(B)$ for both $i \in [2]$.
However, now $(C_1 \orescliqs A, C_2 \orescliqs A)$ is a bipartition of $B \orescliqs A$ with 
\[\bdc_G(C_i \orescliqs A) = \bdc_{G \rescliqs A}(C_i) < \bdc_{G \rescliqs A}(B) = \bdc_G(B \orescliqs A)\]
for both $i \in [2]$, which witnesses that $B \orescliqs A$ is not well-linked in $G$.

We then prove the if-direction.
Let $e_A$ be the hyperedge of $G \rescliqs A$ corresponding to $A$.
Consider first the case that $e_A \notin B$.
Then, $B \orescliqs A = B$, and $\bdc_G(B') = \bdc_{G \rescliqs A}(B')$ for all $B' \subseteq B$, implying that $B \orescliqs A$ is well-linked in $G$ if $B$ is well-linked in $G \rescliqs A$.

Suppose then that $e_A \in B$.
For the sake of contradiction, suppose that $B \orescliqs A$ is not well-linked in $G$, but $B$ is well-linked in $G \rescliqs A$.
There is a bipartition $(C_1, C_2)$ of $B \orescliqs A$ so that $\bdc_G(C_i) < \bdc_G(B \orescliqs A)$ for both $i \in [2]$.
Because $A \subseteq B \orescliqs A$ and $A$ is well-linked, we have that either $\bdc_G(C_1 \cap A) \ge \bdc_G(A)$ or $\bdc_G(C_2 \cap A) \ge \bdc_G(A)$.
Assume without loss of generality that $\bdc_G(C_1 \cap A) \ge \bdc_G(A)$.

We claim that then, the bipartition $(\{e_A\} \cup C_1 \setminus A, C_2 \setminus A)$ of $B$ contradicts that $B$ is well-linked in $G \rescliqs A$.
First,
\begin{align*}
\bdc_{G \rescliqs A}(\{e_A\} \cup C_1 \setminus A) &= \bdc_G(A \cup C_1) && \\
&\le \bdc_G(A) + \bdc_G(C_1) - \bdc_G(A \cap C_1) && \text{(submodularity)}\\
&\le \bdc_G(C_1) && \text{(by $\bdc_G(C_1 \cap A) \ge \bdc_G(A)$)}\\
&< \bdc_G(B \orescliqs A) = \bdc_{G \rescliqs A}(B).
\end{align*}

Second,
\begin{align*}
\bdc_{G \rescliqs B}(C_2 \setminus A) &= \bdc_G(C_2 \cap \co{A}) &&\\
&\le \bdc_G(C_2) + \bdc_G(\co{A}) - \bdc_G(C_2 \cup \co{A}) && \text{(submodularity)}\\
&\le \bdc_G(C_2) + \bdc_G(A) - \bdc_G(C_1 \cap A) && \text{(symmetry)}\\
&\le \bdc_G(C_2) && \text{(by $\bdc_G(C_1 \cap A) \ge \bdc_G(A)$)}\\
&< \bdc_G(B \orescliqs A) = \bdc_{G \rescliqs A}(B).
\end{align*}

Therefore, $B$ is not well-linked in $G \rescliqs A$, which is a contradiction.
\end{proof}

We call the property established by \Cref{lem:wltransitive} the \emph{transitivity} of well-linkedness.
With this, we can prove the following statement, which, informally speaking, asserts that well-linked sets in the torsos of a downwards well-linked superbranch decomposition correspond to well-linked sets in the graph.

\begin{lemma}
\label{lem:wltransindecomp}
Let $G$ be a hypergraph, $e_{\bot} \in E(G)$, and $\Tc = (T,\lmap)$ an $e_{\bot}$-rooted superbranch decomposition of $G$.
Let also $t \in \vint(T)$ be a node with parent $p$, so that $\lmap[c]$ is well-linked for every child $c$ of $t$.
Let $e_p \in E(\torso(t))$ be the hyperedge of $\torso(t)$ corresponding to $p$.
Then, a set $A \subseteq E(\torso(t)) \setminus \{e_p\}$ is well-linked in $\torso(t)$ if and only if $A \orescliqs \Tc$ is well-linked in $G$.
\end{lemma}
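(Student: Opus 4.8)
The plan is to derive \Cref{lem:wltransindecomp} from the transitivity lemma (\Cref{lem:wltransitive}) by repeatedly peeling off the subtrees hanging below the children of $t$. Recall from \Cref{ali:torsodef} that if $s_1,\dots,s_\ell$ are the neighbors of $t$ in $T$, then
\[
\torso(t) = G \rescliqs \lmap(\vec{s_1 t}) \rescliqs \lmap(\vec{s_2 t}) \rescliqs \cdots \rescliqs \lmap(\vec{s_\ell t}).
\]
Order the neighbors so that $s_\ell = p$ is the parent and $s_1,\dots,s_{\ell-1}$ are the children $c_1,\dots,c_{\ell-1}$ of $t$. For $i \in [0,\ell-1]$ set $G_i = G \rescliqs \lmap(\vec{c_1 t}) \rescliqs \cdots \rescliqs \lmap(\vec{c_i t})$, so $G_0 = G$ and $G_{\ell-1} = G \rescliqs \lmap(\vec{p t})^{\,?}$ — more precisely one more application of $\rescliqs$ with $\lmap(\vec{p t})$ turns $G_{\ell-1}$ into $\torso(t)$, but since $A$ avoids $e_p$ this last step is irrelevant to the well-linkedness of $A$ (applying $\rescliqs$ on a set disjoint from $A$ does not change $\bdc$ of $A$ or of any subset of $A$). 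So it suffices to show $A \orescliqs \Tc$ is well-linked in $G = G_0$ iff $A$ is well-linked in $G_{\ell-1}$.

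First I would check that each $\lmap(\vec{c_i t})$ is well-linked \emph{in the hypergraph $G_{i-1}$}, so that \Cref{lem:wltransitive} applies at each step. The hypothesis gives that $\lmap[c_i] = \lmap(\vec{c_i t})$ is well-linked in $G = G_0$. I need to argue this survives the earlier contractions $\rescliqs \lmap(\vec{c_1 t}), \dots, \rescliqs \lmap(\vec{c_{i-1} t})$. The sets $\lmap(\vec{c_1 t}),\dots,\lmap(\vec{c_\ell t})$ are pairwise disjoint (they are the edge sets of disjoint subtrees), so $\lmap(\vec{c_i t}) \subseteq \co{(\lmap(\vec{c_1 t}) \cup \cdots \cup \lmap(\vec{c_{i-1} t}))}$, which is exactly the part of $E(G_0)$ that is preserved (as honest hyperedges, not the new clique-hyperedge) in $G_{i-1}$; and for any $B' \subseteq \lmap(\vec{c_i t})$ one has $\bdc_{G_{i-1}}(B') = \bdc_{G_0}(B')$ since none of these sets meets the contracted parts. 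Hence well-linkedness of $\lmap(\vec{c_i t})$ is literally the same statement in $G_{i-1}$ as in $G_0$. (This is the same elementary observation used in the $e_A \notin B$ case of the proof of \Cref{lem:wltransitive}.)

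Now iterate. Let $A_i = A \orescliqs \lmap(\vec{c_{i+1} t}) \orescliqs \cdots \orescliqs \lmap(\vec{c_{\ell-1} t}) \subseteq E(G_i)$ be the ``lift'' of $A$ from $G_{\ell-1}$ back to $G_i$, so $A_{\ell-1} = A$ and $A_0 = A \orescliqs \Tc$ (using that the $\orescliqs$-operations are nested correctly and that $A$ avoids $e_p$). Applying \Cref{lem:wltransitive} with the hypergraph $G_{i}$, the well-linked set $\lmap(\vec{c_{i+1} t}) \subseteq E(G_i)$, and the set $A_{i+1} \subseteq E(G_{i+1}) = E(G_i \rescliqs \lmap(\vec{c_{i+1} t}))$, we get that $A_{i+1} \orescliqs \lmap(\vec{c_{i+1} t}) = A_i$ is well-linked in $G_i$ iff $A_{i+1}$ is well-linked in $G_{i+1}$. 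Chaining these equivalences for $i = \ell-2, \ell-3, \dots, 0$ yields that $A_0 = A \orescliqs \Tc$ is well-linked in $G$ iff $A_{\ell-1} = A$ is well-linked in $G_{\ell-1}$, and as noted the final harmless $\rescliqs \lmap(\vec{p t})$ does not affect well-linkedness of $A$, giving the claim for $\torso(t)$.

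\textbf{Main obstacle.} The routine calculations are trivial; the only real point of care is the bookkeeping that makes the chain of $\rescliqs$/$\orescliqs$ operations compose correctly — namely verifying that $A_0 = A \orescliqs \Tc$ (that the nested $\orescliqs$ of the lifts of $A$ really reconstructs $\bigcup_{e_s \in A}\lmap(\vec{s t})$), that at each stage $A_{i+1}$ genuinely avoids the fresh clique-hyperedge $e_{\lmap(\vec{c_{i+1}t})}$ only when needed (in fact it may contain it, and that is fine — $A_{i+1}$ just needs to be a subset of $E(G_{i+1})$, which it is), and that $\lmap(\vec{c_{i+1}t})$ remains well-linked in $G_i$ after the earlier contractions. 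All three reduce to the disjointness of the subtree edge-sets and the fact that $\bdc$ is unchanged on sets disjoint from a contracted part, so I expect the proof to be short.
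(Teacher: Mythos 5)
Your proposal is correct and follows essentially the same route as the paper's proof: repeated application of \Cref{lem:wltransitive} to peel off the children's subtrees, plus the elementary observation that contracting $\lmap(\vec{pt})$ does not change boundaries of sets avoiding $e_p$ (the paper performs this parent-side step first rather than last, which is immaterial). Your extra bookkeeping — that each $\lmap(\vec{c_i t})$ stays well-linked in the intermediate hypergraphs because the subtree edge-sets are pairwise disjoint — is a point the paper leaves implicit, and it is argued correctly.
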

\begin{proof}
Recall that $\torso(t) = G \rescliqs \lmap[c_1] \rescliqs \lmap[c_2] \rescliqs \ldots \rescliqs \lmap[c_\ell] \rescliqs \lmap(\vec{pt})$ and $A \orescliqs \Tc = A \orescliqs \lmap[c_1] \orescliqs \lmap[c_2] \orescliqs \ldots \orescliqs \lmap[c_\ell] \orescliqs \lmap(\vec{pt})$, where $c_1, \ldots, c_\ell$ is an enumeration of the children of $t$.

Denote $G' = G \rescliqs \lmap(\vec{pt})$ and $A' = A \orescliqs \lmap[c_1] \orescliqs \lmap[c_2] \orescliqs \ldots \orescliqs \lmap[c_\ell] \subseteq E(G')$.
Because each $\lmap[c_i]$ is well-linked, we can repeatedly apply \Cref{lem:wltransitive} to conclude that $A$ is well-linked in $\torso(t)$ if and only if $A'$ is well-linked in $G'$.
Now, to conclude that $A'$ is well-linked in $G'$ if and only if $A \orescliqs \Tc = A' \orescliqs \lmap(\vec{pt})$ is well-linked in $G$, it suffices to observe that $A'$ does not contain the hyperedge $e_p$ corresponding to the set $\lmap(\vec{pt})$, and therefore $A' = A \orescliqs \Tc$ and for all subsets $A'' \subseteq A'$ it holds that $\bdc_{G'}(A'') = \bdc_{G}(A'')$.
\end{proof}

In particular, \Cref{lem:wltransindecomp} implies that $\Tc$ is downwards well-linked if and only if, for each $t \in \vint(T)$, the set $E(\torso(t)) \setminus \{e_p\}$ is well-linked in $\torso(t)$.
(For the root $r$, we have that $\lmap[r] = E(G)$, which is always well-linked.)
It also implies that in a downwards well-linked superbranch decomposition, for each $t \in \vint(T)$, we have $\wl_{e_p}(\torso(t)) \le \wl(G)$.

In \Cref{lem:wltransindecomp} we assumed only that $\lmap[c]$ is well-linked for each child $c$ of $t$.
This was mostly for illustrative purposes; in our algorithm we will at all times maintain the stronger property that $\Tc$ is downwards well-linked.

\section{Manipulating superbranch decompositions}
\label{sec:mani}
In this section we introduce our framework of \emph{sequences of basic rotations} for describing updates to superbranch decompositions.
A basic rotation is a local modification concerning only one or two nodes of the superbranch decomposition.
We also give higher-level primitives for manipulating downwards well-linked superbranch decompositions via sequences of basic rotations, which will then be further applied in the subsequent sections.

\subsection{Basic rotations}
There are four basic rotations: splitting a node, contracting an edge, inserting a leaf, and deleting a leaf.
All manipulations to superbranch decompositions will be done via these operations.
Splitting a node and contracting an edge are reverses of each other, as are obviously inserting and deleting a leaf.
In what follows, let $G$ be a hypergraph, $e_{\bot} \in E(G)$, and $\Tc = (T,\lmap)$ an $e_{\bot}$-rooted superbranch decomposition of $G$.

\paragraph{Splitting.}
Let $t \in \vint(T)$ be an internal node of $T$ and $(C, \co{C})$ a separation of $\torso(t)$ with $|C|,|\co{C}| \ge 2$.
\emph{Splitting} $t$ with $(C,\co{C})$ means replacing $t$ by two nodes, $t_{C}$ and $t_{\co{C}}$, so that $t_C$ is adjacent to each neighbor $s$ of $t$ with $e_s \in C$, $t_{\co{C}}$ is adjacent to each neighbor $s$ of $t$ with $e_s \in \co{C}$, and $t_C$ and $t_{\co{C}}$ are adjacent to each other.
Note that $\torso(t_C) = \torso(t) \rescliqs \co{C}$ and $\torso(t_{\co{C}}) = \torso(t) \rescliqs C$.
We observe that a representation of $\Tc$ can be turned into a representation of the superbranch decomposition resulting from splitting $t$ with $(C, \co{C})$ in time $\OO(\|\torso(t)\|)$.

\paragraph{Contracting.}
Let $st \in E(T)$ be an edge of $T$ so that $s,t \in \vint(T)$.
\emph{Contracting} $st$ means simply contracting the edge $st$ of $T$, while keeping the mapping $\lmap$ the same.
We observe that a representation of $\Tc$ can be turned into a representation of $\Tc$ with $st$ contracted in time $\OO(\|\torso(s)\|+\|\torso(t)\|)$.

\paragraph{Inserting a leaf.}
Let $t \in \vint(T)$ be an internal node of $T$, and denote by $\cl(t) = \leaves(T) \cap \chd(t)$ the leaves of $T$ that are children of $t$, and recall that $V(\lmap(\cl(t)))$ is the set of vertices of $G$ in the hyperedges associated with those leaves.
Now, for $X \subseteq V(\lmap(\cl(t)))$, \emph{inserting} $X$ as a child of $t$ means adding a hyperedge $e_X$ with $V(e_X) = X$ to $G$, adding a leaf-node $\ell_X$ as a child of $t$ in $T$, and setting $\lmap(\ell_X) = e_X$.

\begin{lemma}
A representation of $\Tc$ can be turned into a representation of $\Tc$ with $X$ inserted as a child of $t$ in time $\OO(|X| \cdot \|\torso(t)\| + |\anc(t)|)$.
\end{lemma}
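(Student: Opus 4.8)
The plan is to carefully account for all the changes that need to be made to the representation of $\Tc$ when a new leaf $\ell_X$ with hyperedge $e_X$ (where $V(e_X) = X \subseteq V(\lmap(\cl(t)))$) is inserted as a child of $t$, and to verify that each can be carried out within the claimed budget of $\OO(|X| \cdot \|\torso(t)\| + |\anc(t)|)$. Recall from the representation of superbranch decompositions that we must maintain: the tree $T$ (with parent pointers), the bijection $\lmap$ and its inverse, the adhesions $\adh(st)$ for all tree edges, the torsos $\torso(t')$ for internal nodes, the cross-pointers between torso hyperedges and tree edges, and the leaf-descendant counts $|\leaves[t']|$.

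First I would update the hypergraph $G$ itself: adding a new hyperedge $e_X$ with $V(e_X) = X$ to the bipartite-graph representation of $G$ costs $\OO(|X|)$. Then I would create the new leaf node $\ell_X$, add it as a child of $t$ in $T$, set its parent pointer, and set $\lmap(\ell_X) = e_X$ together with the inverse pointer $\lmap^{-1}(e_X) = \ell_X$; this is $\OO(1)$ work. The adhesion at the new edge $\ell_X t$ is $\adh(\ell_X t) = \bd(\lmap(\vec{\ell_X t})) = \bd(\{e_X\}) = V(e_X) = X$ (here I use that the support hypergraph is normal, or more precisely that $\{e_X\}$ is a singleton so its boundary is all of $V(e_X)$ provided $X$ is contained in the vertex sets of other hyperedges, which holds since $X \subseteq V(\lmap(\cl(t)))$), so I can store $\adh(\ell_X t) = X$ in $\OO(|X|)$ time.

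The two more delicate updates are the torso of $t$ and the leaf-descendant counts. For the torso: since $X \subseteq V(\lmap(\cl(t))) \subseteq V(\torso(t))$, adding the new child does not introduce any new vertices into $V(\torso(t))$; it only adds one new hyperedge $e_{\ell_X}$ to $\torso(t)$ with $V(e_{\ell_X}) = \adh(\ell_X t) = X$, and crucially none of the existing adhesions $\adh(st)$ for $s$ a neighbor of $t$ change (the separation that each existing edge induces is unaffected by hanging a new leaf below $t$). So updating $\torso(t)$ just means inserting one hyperedge of size $|X|$ into the representation of $\torso(t)$, which is $\OO(|X|)$, and then installing the two cross-pointers between $e_{\ell_X}$ and the tree edge $\ell_X t$, which is $\OO(1)$. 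No other torso changes at all. For the leaf-descendant counts: the number $|\leaves[t']|$ increases by exactly one for every $t'$ on the path from $t$ to the root, i.e. for every $t' \in \anc(t)$, and is unchanged for all other nodes; walking up the parent pointers from $t$ to the root and incrementing each counter costs $\OO(|\anc(t)|)$. Summing these contributions gives $\OO(|X|) + \OO(1) + \OO(|X|) + \OO(|\anc(t)|) = \OO(|X| \cdot \|\torso(t)\| + |\anc(t)|)$ — in fact the bound we prove is the slightly stronger $\OO(|X| + |\anc(t)|)$, and the weaker stated bound follows a fortiori since $\|\torso(t)\| \ge 1$.

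The main obstacle, such as it is, is bookkeeping rather than mathematics: one must be careful to argue that inserting a leaf below $t$ changes neither $V(\torso(t))$ nor any existing adhesion at an edge incident to $t$, and changes no torso of any other node — this is exactly where the hypothesis $X \subseteq V(\lmap(\cl(t)))$ is used, ensuring $e_X$ contributes no fresh vertices. One should also double-check the edge case where $t$ had only leaf children before (it remains internal, and its degree only grows, so the "every internal node has degree $\ge 3$" condition is preserved), and note that $t$ being internal means $t \neq \ell_X$ so the parent structure is well-defined. Everything else is a direct traversal of the data-structure fields listed in the representation, each touched $\OO(1)$ times except the ancestor counters.
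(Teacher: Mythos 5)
There is a genuine gap, and it sits exactly where your proposal claims the work is trivial. You assert that none of the existing adhesions $\adh(st)$ for neighbors $s$ of $t$ change, and that $X \subseteq V(\lmap(\cl(t))) \subseteq V(\torso(t))$ so that $V(\torso(t))$ is unaffected. Neither claim is correct. The hypothesis $X \subseteq V(\lmap(\cl(t)))$ only says each $v \in X$ occurs in the hyperedge of some leaf child $\ell$ of $t$; it does not say $v$ lies on any adhesion. Take a vertex $v \in X$ all of whose hyperedges lie in $\lmap[c]$ for a single child $c$ of $t$ (e.g.\ $v$ occurs only in $\lmap(\ell)$ for one leaf child $\ell = c$). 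Before the insertion $v \notin \bd(\lmap[c]) = \adh(ct)$ and $v \notin V(\torso(t))$. After inserting $e_X$ with $v \in V(e_X)$, the hyperedge $e_X$ lies outside $\lmap[c]$, so $v$ becomes a boundary vertex of $\lmap[c]$: the adhesion $\adh(ct)$, the corresponding hyperedge $e_c$ of $\torso(t)$, and $V(\torso(t))$ all gain $v$ and must be updated in the stored representation. (What the hypothesis does guarantee — and what the paper's proof actually uses it for — is that adhesions at edges \emph{not} between $t$ and its children are unchanged, since any such $v$ already witnessed membership in $\lmap(\ell)$ on the far side of those separations.)

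This is precisely why the stated time bound carries the factor $|X| \cdot \|\torso(t)\|$ rather than the $\OO(|X| + |\anc(t)|)$ you claim to achieve: for each $v \in X$ one must determine which child adhesions $\adh(ct)$ now contain $v$ and update $\torso(t)$ accordingly, which costs up to $\OO(\|\torso(t)\|)$ per vertex of $X$. Your remaining bookkeeping (new leaf, $\lmap$ pointers, $\adh(\ell_X t) = X$, cross-pointers, incrementing $|\leaves[t']|$ along $\anc(t)$) matches the paper and is fine, but as written your argument leaves the representation with stale adhesions and a stale torso at $t$, so the lemma is not established. The fact that you end up "beating" the paper's bound should itself have been a warning sign.
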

\begin{proof}
Denote the new superbranch decomposition by $\Tc'$.
The property that $X \subseteq V(\lmap(\cl(t)))$ guarantees that if $uv$ is an edge of $\Tc'$ that is not between $t$ and a child of $t$, then $\adh_{\Tc'}(uv) = \adh_{\Tc}(uv)$.
Therefore, we only need to update adhesions between $t$ and its children.
The only torso that needs to be updated is the torso of $t$.
Then, we need to increase the stored number of descendant leaves for all ancestors of $t$.
This can be implemented in $\OO(|X| \cdot \|\torso(t)\| + |\anc(t)|)$ time.
\end{proof}

\paragraph{Deleting a leaf.}
Let $t \in \vint(T)$ be an internal node of $T$ that has at least $3$ children.
Let $\ell \in \cl(t)$ so that $V(\lmap(\ell)) \subseteq V(\lmap(\cl(t) \setminus \{\ell\}))$.
\emph{Deleting} $\ell$ means deleting $\lmap(\ell)$ from $G$ and $\ell$ from $T$.

\begin{lemma}
A representation of $\Tc$ can be turned into a representation of $\Tc$ with $\ell$ deleted in time $\OO(|V(\lmap(\ell))| \cdot \|\torso(t)\| + |\anc(t)|)$.
\end{lemma}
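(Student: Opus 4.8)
The plan is to treat this as the exact reverse of the leaf‑insertion lemma and argue symmetrically. Write $e = \lmap(\ell)$ and let $\Tc' = (T',\lmap')$ be the result of deleting $\ell$; then $G'$ is $G$ with the hyperedge $e$ removed. First I would check that $\Tc'$ is a legitimate $e_{\bot}$-rooted superbranch decomposition of $G'$: since $t$ has at least $3$ children it keeps at least $2$, so it stays internal (and, being internal, it has a parent, hence at least $3$ neighbours in $T'$); $\lmap'$ is still a bijection from $\leaves(T')$ to $E(G')$; and the precondition $V(\lmap(\ell)) \subseteq V(\lmap(\cl(t)\setminus\{\ell\}))$ guarantees that no vertex disappears, so $V(G') = V(G)$ and $G'$ is a valid hypergraph. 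Note also $e \neq e_{\bot}$ because $\ell$ is not the root.

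The heart of the proof is to localise the change. The plan is to show that the only adhesions that change are those at edges $ts$ with $s$ a leaf-child of $t$, and that each such adhesion can only \emph{shrink}, and only by vertices of $V(e)$. For an edge $uv$ of $T$ not incident to $t$, removing $uv$ separates $T$ into two parts, one of which contains $t$ together with all of its children and $\ell$; hence $e$ and all hyperedges $\lmap(\cl(t)\setminus\{\ell\})$ lie on that side, and because $V(e) \subseteq V(\lmap(\cl(t)\setminus\{\ell\}))$, deleting $e$ does not change the set of vertices appearing on that side, so the boundary of the separation — that is, $\adh(uv)$ — is unchanged. The same reasoning applies to $ts$ whenever $s$ is the parent of $t$ or a non-leaf child of $t$. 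For $ts$ with $s$ a leaf-child, $\lmap(s)$ lies on the side \emph{opposite} to $t$, so a vertex $x$ can drop out of $\adh(ts)$; tracing the definition shows this happens only when, after removing $e$, $x$ lies in no hyperedge other than $\lmap(s)$, which forces $x \in V(e)$. Consequently the torso of $t$ in $\Tc'$ is obtained from the torso of $t$ in $\Tc$ by deleting the hyperedge $e_\ell$ and deleting the (at most $|V(e)|$) vertices that drop out of the remaining hyperedges, and every other torso and adhesion of $\Tc$ is untouched.

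Given this, the implementation is: remove $e$ (and the $\lmap/\lmap^{-1}$ entries for $\ell$ and $e$) from the bipartite representation of the hypergraph; delete the leaf $\ell$, the edge $t\ell$ with its stored adhesion $\adh(t\ell) \subseteq V(e)$, and the hyperedge $e_\ell$ of $\torso(t)$; then for each $x \in V(e)$ consult the now-updated list of hyperedges incident to $x$ in $G'$ and decide in $\OO(1)$ time whether $x$ fell to a single hyperedge $\lmap(s)$, and if so remove $x$ from $\adh(ts)$ and from $\torso(t)$; finally decrement the stored descendant-leaf count $|\leaves[\cdot]|$ by one for each node of $\anc(t)$. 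This runs in $\OO(|V(\lmap(\ell))| + |\anc(t)|)$ time, comfortably within the claimed $\OO(|V(\lmap(\ell))| \cdot \|\torso(t)\| + |\anc(t)|)$; alternatively one may simply recompute $\torso(t)$ and all adhesions at edges incident to $t$ bluntly, which still fits the looser bound.

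The main obstacle is the middle paragraph: pinning down precisely which adhesions can change and verifying that the precondition is exactly strong enough to confine the damage to the star of $t$ and, inside it, to vertices of $V(\lmap(\ell))$. Once this localisation is established, the running-time accounting and the check that $\Tc'$ is a valid superbranch decomposition are routine. I would also stress the asymmetry with insertion: deletion never \emph{enlarges} an adhesion, which is precisely why the bags and descendant counts elsewhere in $\Tc$ need no recomputation.
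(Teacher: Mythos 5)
Your proposal is correct and follows essentially the same route as the paper's proof: use the precondition $V(\lmap(\ell)) \subseteq V(\lmap(\cl(t)\setminus\{\ell\}))$ to argue that all adhesions away from $t$ are unchanged, so only $\torso(t)$, the adhesions at $t$, and the ancestors' leaf counters need updating. Your localisation is in fact slightly sharper than the paper's (only adhesions at leaf-children of $t$ can change, and only by losing vertices of $V(\lmap(\ell))$), which is a valid refinement but not needed for the stated bound.
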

\begin{proof}
Denote the new superbranch decomposition by $\Tc'$.
The property that $V(\lmap(\ell)) \subseteq V(\lmap(\cl(t) \setminus \{\ell\}))$ guarantees that if $uv$ is an edge of $\Tc'$ that is not between $t$ and a child of $t$, then $\adh_{\Tc'}(uv) = \adh_{\Tc}(uv)$.
Therefore we only need to recompute adhesions between $t$ and its children, and the only torso to recompute is $\torso(t)$.
Then, we need to decrease the stored numbers of descendant leaves for the ancestors of $t$.
This can be implemented in time $\OO(|V(\lmap(\ell))| \cdot \|\torso(t)\| + |\anc(t)|)$.
\end{proof}

\paragraph{Sequences of basic rotations.}
In our algorithm, we manipulate sequences of basic rotations.
A \emph{sequence $\seq$ of basic rotations} stores for each rotation in the sequence all information necessary to perform it: For splitting, the node $t$ and the bipartition $(C,\co{C})$ of $\torso(t)$ are stored, for contracting, the pair of nodes $s,t$ is stored, for inserting a leaf, the node $t$ and the set $X \subseteq V(G)$ are stored, and for deleting a leaf, the leaf node $\ell$ is stored.

We define the \emph{size} $\|\seq\|$ of a sequence of basic rotations $\seq$ so that the basic rotations in $\seq$ can be performed in time $\OO(\|\seq\|)$.
In particular, for splitting the size is $\|\torso(t)\|$, for contraction the size is $\|\torso(t)\|+\|\torso(s)\|$, for inserting a leaf the size is $|X| \cdot \|\torso(t)\| + |\anc(t)|$, and for deleting a leaf the size is $|V(\lmap(\ell))| \cdot \|\torso(t)\| + |\anc(t)|$.
Then, the size $\|\seq\|$ of $\seq$ is the sum of the sizes of the basic rotations in it.
We assume that $\seq$ is stored as a linked list, in particular, so that we can append and prepend basic rotations to $\seq$ efficiently, i.e., in time linear in the size of the appended or prepended basic rotations.

Let $\seq$ be a sequence of basic rotations that transforms $\Tc = (T,\lmap)$ into $\Tc' = (T',\lmap')$.
We denote by $V_{\Tc}(\seq) \subseteq V(T)$ the set of nodes of $T$ involved in the rotations in $\seq$, i.e., all internal nodes involved in splittings and contractions, all leaves deleted, and the parents of all leaves deleted and inserted.
Analogously, $V_{\Tc'}(\seq) \subseteq V(T')$ is the set of nodes of $T'$ involved in $\seq$.
The \emph{trace} of $\seq$ in $\Tc$ is the set $\trace_{\Tc}(\seq)$ of all ancestors of nodes of $T$ involved in $\seq$, i.e., $\trace_{\Tc}(\seq) = \anc_T(V_{\Tc}(\seq))$.
Naturally, $\trace_{\Tc'}(\seq)$ is defined analogously.
We define $\|\seq\|_{\Tc} = \|\seq\| + |\trace_{\Tc}(\seq)|$ to be a size measure of $\seq$ that takes into account traversing the ancestors of $V_{\Tc}(\seq)$.
Note that also $|\trace_{\Tc'}(\seq)| \le \|\seq\|_{\Tc}$ holds.

To cover some corner cases, we allow a sequence of basic rotations $\seq$ to contain also dummy rotations that do not do anything, but just ``touch'' a node in the sense that it will be included in $V_{\Tc}(\seq)$ and $V_{\Tc'}(\seq)$.

\subsection{Splitting a node}
In our algorithm we maintain a rooted superbranch decomposition that is downwards well-linked and has an upper bound on its maximum degree.
The typical way to modify this superbranch decomposition will be to first use the contraction operation to form a node of high degree, and then the splitting operation to split it up into a subtree of a different form than we started with.
Our core idea is that this splitting can be done in a manner that preserves downwards well-linkedness and an upper bound on the degree.
In this subsection we give the subroutine for doing that.

We start with the following algorithm for partitioning any set of hyperedges in a hypergraph into well-linked sets.

\begin{lemma}
\label{lem:partitiontowlalg}
There is an algorithm that, given a hypergraph $G$ and a set of hyperedges $X \subseteq E(G)$, in time $2^{\OO(\bdc(X))} \cdot \|G\|^{\OO(1)}$ returns a partition $\compset$ of $X$ into at most $|\compset| \le 2^{\bdc(X)}$ sets, so that each $C \in \compset$ is well-linked in $G$.
\end{lemma}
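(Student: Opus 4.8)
The plan is to prove \Cref{lem:partitiontowlalg} by a recursive splitting procedure driven by \Cref{lem:testwl}. Given $X \subseteq E(G)$, first run the algorithm of \Cref{lem:testwl} on $X$. If it reports that $X$ is well-linked, return the trivial partition $\compset = \{X\}$, which trivially satisfies $|\compset| = 1 \le 2^{\bdc(X)}$. Otherwise, \Cref{lem:testwl} hands us a bipartition $(C_1, C_2)$ of $X$ with $\bdc(C_i) < \bdc(X)$ for both $i \in [2]$. We then recurse on $C_1$ and on $C_2$ separately (as subsets of the same ambient hypergraph $G$, so that boundaries are still measured in $G$), and return the union of the two resulting partitions.

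The correctness is immediate: every set output is well-linked in $G$ because the recursion only returns a set once \Cref{lem:testwl} has certified it well-linked, and the returned sets partition $X$ because at each step we partition into $(C_1,C_2)$ and recurse on each part. The content of the lemma is in the two bounds. For the bound $|\compset| \le 2^{\bdc(X)}$, I would track the potential $\mu(\compset) = \sum_{C \in \compset} 2^{\bdc(C)}$ over the current collection of parts. Initially, viewing $X$ as the single part, the potential is $2^{\bdc(X)}$. When we replace a part $C$ with $\bdc(C) = b$ by two parts $C_1, C_2$ with $\bdc(C_i) \le b-1$, the potential changes by $2^{\bdc(C_1)} + 2^{\bdc(C_2)} - 2^{b} \le 2^{b-1} + 2^{b-1} - 2^{b} = 0$, so $\mu$ never increases. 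Since every set in the final partition has $\bdc(C) \ge 0$, hence $2^{\bdc(C)} \ge 1$, we get $|\compset| \le \mu(\compset) \le 2^{\bdc(X)}$.

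For the running time, I would argue that the recursion tree has at most $|\compset| \le 2^{\bdc(X)}$ leaves and, since every internal node has exactly two children, at most $2^{\bdc(X)} - 1$ internal nodes, so $O(2^{\bdc(X)})$ recursive calls in total. At each call on a set $C$ we have $\bdc(C) \le \bdc(X)$ (the boundary size only goes down along the recursion), so the invocation of \Cref{lem:testwl} costs $2^{\OO(\bdc(X))} \cdot \|G\|^{\OO(1)}$, and the bookkeeping to form $C_1, C_2$ and pass them along costs $\|G\|^{\OO(1)}$. Multiplying by the number of calls gives a total of $2^{\OO(\bdc(X))} \cdot \|G\|^{\OO(1)}$, as claimed. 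I do not anticipate a serious obstacle here; the only mild subtlety is making sure that throughout the recursion boundaries are computed in the original hypergraph $G$ rather than in any ``restricted'' hypergraph, so that the monovariant argument and the bound $\bdc(C) \le \bdc(X)$ both go through cleanly, and that the termination is justified by the strict decrease of $\bdc$ at every split.
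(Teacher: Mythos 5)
Your proposal is correct and matches the paper's proof essentially exactly: the paper also repeatedly splits non-well-linked parts via \Cref{lem:testwl} and bounds both the number of parts and the number of iterations by observing that the potential $\sum_{C \in \compset} 2^{\bdc(C)}$ never increases while $|\compset|$ grows. The only difference is cosmetic (recursive versus iterative phrasing).
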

\begin{proof}
We maintain a partition $\compset$ of $X$, initialized to be $\compset = \{X\}$.
We repeatedly apply the algorithm of \Cref{lem:testwl} to test for each part $C \in \compset$ whether $C$ is well-linked, and if not, replace $C$ by the two sets $C_1$, $C_2$ returned by it, where $(C_1,C_2)$ is a bipartition of $C$ with $\bdc(C_i) < \bdc(C)$ for both $i \in [2]$.

We observe that this process maintains that $\sum_{C \in \compset} 2^{\bdc(C)} \le 2^{\bdc(X)}$, but increases $|\compset|$ in each iteration.
Therefore, it must terminate within at most $2^{\bdc(X)}$ iterations, with $|\compset| \le 2^{\bdc(X)}$.
As the algorithm of \Cref{lem:testwl} runs in time $2^{\OO(\bdc(C))} \cdot \|G\|^{\OO(1)}$, and $\bdc(C) \le \bdc(X)$ holds for all $C \in \compset$, the total running time is at most $2^{\OO(\bdc(X))} \cdot \|G\|^{\OO(1)}$.
\end{proof}

We then apply the algorithm of \Cref{lem:partitiontowlalg} to create a subroutine for splitting a node while maintaining downwards well-linkedness.

\begin{lemma}
\label{lem:rotatemainint}
Let $G$ be a hypergraph, $e_{\bot} \in E(G)$, and $\Tc = (T,\lmap)$ an $e_{\bot}$-rooted superbranch decomposition of $G$ that is downwards well-linked.
There is an algorithm that, given an internal node $t \in \vint(T)$ and a set of children $X \subseteq \chd(t)$, with $|X| \ge 1$ and $|\bigcup_{x \in X} \adh(xt)| = \alpha$, either 
\begin{enumerate}[label=(\alph*)]
\item transforms $\Tc$ into $\Tc' = (T',\lmap')$ via a sequence $\seq$ consisting of one splitting rotation so that\label{lem:rotatemainint:casetrans}
\begin{enumerate}[label=\arabic*., ref=\arabic*]
\item $\Tc'$ is downwards well-linked,\label{lem:rotatemainint:prop1}
\item $V_{\Tc}(\seq) = \{t\}$ and $|V_{\Tc'}(\seq)| = 2$,\label{lem:rotatemainint:prop2}
\item all nodes in $X$ are children of the shallowest node of $V_{\Tc'}(\seq)$ in $\Tc'$, and\label{lem:rotatemainint:prop3}
\item for all $t' \in V_{\Tc'}(\seq)$, it holds that $\ddeg_{T'}(t') < \ddeg_{T}(t)$\label{lem:rotatemainint:prop4}, or
\end{enumerate}
\item concludes that $\ddeg_{T}(t) \le |X| + 2^{\wl(G)+\alpha}$.\label{lem:rotatemainint:caseconc}
\end{enumerate}
The running time of the algorithm is $2^{\OO(\wl(G)+\alpha)} \cdot \|\torso(t)\|^{\OO(1)}$ and in case \ref{lem:rotatemainint:casetrans} it returns $\seq$, which has $\|\seq\| \le \OO(\|\torso(t)\|)$.
\end{lemma}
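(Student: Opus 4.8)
The plan is to reduce the statement to an application of \Cref{lem:partitiontowlalg}, using \Cref{lem:wltransindecomp} to transfer well-linkedness between $\torso(t)$ and $G$. First I would set up notation: let $p$ be the parent of $t$ and $e_p \in E(\torso(t))$ the hyperedge corresponding to $p$, and let $A_X = \{e_x \mid x \in X\} \subseteq E(\torso(t))$ be the hyperedges corresponding to the children in $X$. Pick $Y = A_X \cup \{e_p\}$ and set $Z = E(\torso(t)) \setminus Y$, i.e.\ the hyperedges corresponding to the children of $t$ not in $X$. The first thing to check is whether $|Z| > 2^{\bdc(Z)}$, where $\bdc$ is taken in $\torso(t)$. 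We bound $\bdc(Z) = \bdc(Y) \le |V(e_p)| + \sum_{x \in X} |V(e_x)| = |V(e_p)| + |\bigcup_{x\in X}\adh(xt)|$; since $|V(e_p)| = |\adh(pt)| \le \adhsize(\Tc) \le \wl(G)$ (using downwards well-linkedness and $\adh(tp) = \bd(\lmap[t])$), we get $\bdc(Z) \le \wl(G) + \alpha$. Now if $|Z| \le 2^{\bdc(Z)}$, then $\ddeg_T(t) = |X| + |Z| \le |X| + 2^{\wl(G) + \alpha}$, so we output case \ref{lem:rotatemainint:caseconc}. (We may need a small case analysis for when $|Z|$ is $0$ or $1$, but then $Z$ is trivially well-linked and degree is already small.)

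Otherwise $|Z| > 2^{\bdc(Z)}$, so we run the algorithm of \Cref{lem:partitiontowlalg} on the hypergraph $\torso(t)$ with the set $Z$, obtaining a partition $\compset$ of $Z$ into at most $2^{\bdc(Z)} < |Z|$ well-linked sets. By pigeonhole at least one part $C \in \compset$ has $|C| \ge 2$. We take $C$ to be such a part. Now I would perform the splitting rotation on $t$ with the separation $(C, \co{C})$ of $\torso(t)$, where $\co{C} = E(\torso(t)) \setminus C$; note $|C| \ge 2$ and $|\co{C}| \ge |A_X| + |\{e_p\}| + (|Z| - |C|) \ge 1 + 1 = 2$ (in fact $\co{C}$ contains $e_p$ and all of $A_X$), so the splitting operation is well-defined. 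This produces $\Tc'$ with two new nodes $t_C$ and $t_{\co{C}}$, where $t_{\co{C}}$ is the one adjacent to $p$ — hence the shallower of the two — and $t_{\co{C}}$ is adjacent to all of $X$ since $A_X \subseteq \co{C}$; this gives property \ref{lem:rotatemainint:prop3}. Property \ref{lem:rotatemainint:prop2} is immediate from the definition of the splitting rotation. For property \ref{lem:rotatemainint:prop4}: $\ddeg_{T'}(t_C) = |C| \le |Z| - 1 < \ddeg_T(t)$ (since $\co{C}$ contains at least one element of $Z$, plus $e_p$ and $A_X$), and $\ddeg_{T'}(t_{\co{C}}) = |\co{C}| - 1 = \ddeg_T(t) - |C| < \ddeg_T(t)$ since $|C| \ge 2 \ge 1$; here I use that $\ddeg$ counts children so the hyperedge $e_p$ does not contribute to $\ddeg_{T'}(t_{\co{C}})$.

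For property \ref{lem:rotatemainint:prop1}, downwards well-linkedness of $\Tc'$: the only nodes whose $\lmap[\cdot]$ sets change are $t_C$ and $t_{\co{C}}$ (for all other nodes $\lmap[\cdot]$ is unchanged, since splitting $t$ does not alter which leaves lie below any node other than the new ones). For $t_C$: $\lmap[t_C] = C \orescliqs \Tc$ as a set of hyperedges of $G$, and since $e_p \notin C$ and $C$ is well-linked in $\torso(t)$, \Cref{lem:wltransindecomp} (applicable because $\Tc$ is downwards well-linked, so every child of $t$ has a well-linked $\lmap[\cdot]$ set) gives that $C \orescliqs \Tc$ is well-linked in $G$. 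For $t_{\co{C}}$: we have $\lmap[t_{\co{C}}] = \lmap[t]$ unchanged, which is well-linked since $\Tc$ was downwards well-linked. All remaining nodes already satisfied the condition in $\Tc$. The running time is dominated by \Cref{lem:partitiontowlalg} on $\torso(t)$ with $\bdc(Z) \le \wl(G) + \alpha$, i.e.\ $2^{\OO(\wl(G)+\alpha)} \cdot \|\torso(t)\|^{\OO(1)}$, and the single splitting rotation has $\|\seq\| = \OO(\|\torso(t)\|)$ by the bound on the cost of a splitting rotation. The main obstacle I anticipate is purely bookkeeping: verifying the degree inequalities in \ref{lem:rotatemainint:prop4} with the right accounting of whether the parent-hyperedge $e_p$ counts toward $\ddeg$, and making sure the ``shallowest node'' in \ref{lem:rotatemainint:prop3} is indeed $t_{\co{C}}$ (which requires that we put $e_p$ on the $\co{C}$ side, i.e.\ that $C \subseteq Z$ and $e_p \notin C$ — which holds by construction since we only partitioned $Z$).
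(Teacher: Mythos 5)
Your proof is correct and follows essentially the same route as the paper's: bound $\bdc$ of the non-$X$, non-parent hyperedges by $\wl(G)+\alpha$, partition them into well-linked sets via \Cref{lem:partitiontowlalg}, split off a part $C$ of size $\ge 2$, and invoke \Cref{lem:wltransindecomp} to transfer well-linkedness of $C$ back to $G$; your variant of deciding case (b) by checking $|Z|\le 2^{\bdc(Z)}$ up front instead of checking whether all parts are singletons is an immaterial difference. The only blemishes are two harmless bookkeeping slips in the degree accounting ($C$ may equal all of $Z$, so $\co{C}$ need not contain an element of $Z$, and $\ddeg_{T'}(t_{\co{C}})=|\co{C}|$ rather than $|\co{C}|-1$ since $t_C$ becomes a new child), but the stated inequalities still follow from $|X|\ge 1$ and $|C|\ge 2$.
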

\begin{proof}
Let $e_p \in E(\torso(t))$ be the hyperedge of $\torso(t)$ associated with the parent $p$ of $t$.
Let also $E_X \subseteq E(\torso(t))$ be the set of hyperedges associated with the children of $t$ that are in $X$.
Also, denote $Y = \chd(t) \setminus X$, and let $E_Y \subseteq E(\torso(t))$ be the corresponding hyperedges.
Note that $\{\{e_p\}, E_X, E_Y\}$ is a partition of $E(\torso(t))$.
Because $\Tc$ is downwards well-linked, we have $\bdc(e_p) \le |V(e_p)| \le \wl(G)$, and because $|\bigcup_{x \in X} \adh(xt)| = \alpha$ we have $\bdc(E_X) = \alpha$.
It follows that $\bdc(E_Y) = \bdc(\{e_p\} \cup E_X) \le \wl(G) + \alpha$.

We apply the algorithm of \Cref{lem:partitiontowlalg} to find a partition $\compset$ of $E_Y$ into at most $2^{\wl(G)+\alpha}$ sets, so that each $C \in \compset$ is well-linked in $\torso(t)$.
This runs in time $2^{\OO(\wl(G)+\alpha)} \cdot \|\torso(t)\|^{\OO(1)}$.
If every $C \in \compset$ has size $|C| = 1$, then we conclude that $\ddeg_{T}(t) = |X|+|Y| \le |X| + 2^{\wl(G)+\alpha}$ and return with the case~\ref{lem:rotatemainint:caseconc}.

Otherwise, we take an arbitrary $C \in \compset$ with $|C| \ge 2$, and apply the splitting rotation with the separation $(C,\co{C})$ of $\torso(t)$.
Note that $|\co{C}| \ge 2$ because $e_p \in \co{C}$ and $E_X \subseteq \co{C}$.
This replaces $t$ with two nodes $t_C$ and $t_{\co{C}}$, with $t_C$ adjacent to $t_{\co{C}}$ and each child of $t$ whose corresponding hyperedge is in $C$, and $t_{\co{C}}$ adjacent to $p$, $t_C$, and each child of $t$ whose corresponding hyperedge is in $\co{C}$.

We denote the resulting superbranch decomposition by $\Tc' = (T',\lmap')$, and the sequence consisting of this splitting rotation by $\seq$, and claim that $\Tc'$ and $\seq$ satisfy the required properties.
Note that $V_{\Tc}(\seq) = \{t\}$ and $V_{\Tc'}(\seq) = \{t_C,t_{\co{C}}\}$, so \Cref{lem:rotatemainint:prop2,lem:rotatemainint:prop3} are clear from the construction.
Also, $|C| \ge 2$ implies that $\ddeg_{T'}(t_{\co{C}}) < \ddeg_{T}(t)$, and $|\co{C}| \ge 2$ implies that $\ddeg_{T'}(t_C) < \ddeg_{T}(t)$, so \Cref{lem:rotatemainint:prop4} holds.
We then prove \Cref{lem:rotatemainint:prop1}.

\begin{claim}
$\Tc'$ is downwards well-linked.
\end{claim}
\begin{claimproof}
All edges $xy$ of $T'$, where $y$ is a parent of $x$, except $t_C t_{\co{C}}$, correspond to edges of $T$ in the sense that there is $x'y' \in E(T)$ with $\lmap'(\vec{xy}) = \lmap(x'y')$.
Therefore, it suffices to argue that $\lmap(\vec{t_C t_{\co{C}}})$ is well-linked in $G$, or equivalently, that $C \orescliqs \Tc$ is well-linked in $G$.
Because $\Tc$ is downwards well-linked, $e_p \notin C$, and $C$ is well-linked in $\torso(t)$, this follows from \Cref{lem:wltransindecomp}.
\end{claimproof}

Therefore the algorithm is correct.
The running time and the fact that $\|\seq\| \le \OO(\|\torso(t)\|)$ are also clear from the given arguments.
\end{proof}

We then apply \Cref{lem:rotatemainint} to build a higher-level subroutine for splitting a high-degree node into a subtree with an upper bound on the degree.

\begin{lemma}
\label{lem:rotatemain}
Let $G$ be a hypergraph, $e_{\bot} \in E(G)$, and $\Tc = (T,\lmap)$ an $e_{\bot}$-rooted superbranch decomposition of $G$ that is downwards well-linked.
There is an algorithm that, given an internal node $t \in \vint(T)$ and a set of children $X \subseteq \chd(t)$, with $|\bigcup_{x \in X} \adh(xt)| = \alpha$, transforms $\Tc$ into $\Tc' = (T',\lmap')$ via a sequence $\seq$ of basic rotations so that
\begin{enumerate}
\item $\Tc'$ is downwards well-linked,\label{lem:rotatemain:prop1}
\item $V_{\Tc}(\seq) = \{t\}$,\label{lem:rotatemain:prop2}
\item all nodes in $X$ are children of the shallowest node of $V_{\Tc'}(\seq)$ in $\Tc'$, and\label{lem:rotatemain:prop3}
\item $\ddeg_{T'}(V_{\Tc'}(\seq)) \le \max(|X| + 2^{\wl(G)+\alpha}, 1 + 2^{2 \wl(G)})$.\label{lem:rotatemain:prop4}
\end{enumerate}
The running time of the algorithm is $2^{\OO(\wl(G)+\alpha)} \cdot \|\torso(t)\|^{\OO(1)}$ and it returns $\seq$, which has $\|\seq\| \le \|\torso(t)\|^{\OO(1)}$.
\end{lemma}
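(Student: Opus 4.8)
The plan is to prove \Cref{lem:rotatemain} by induction on $\ddeg_{T}(t)$, using the single-split subroutine of \Cref{lem:rotatemainint} as the basic step. First I would dispose of the degenerate case $X = \emptyset$: replace $X$ by a singleton $\{x_{0}\}$ for an arbitrary child $x_{0}$ of $t$. Since $\Tc$ is downwards well-linked we have $|\adh(x_{0}t)| \le \adhsize(\Tc) \le \wl(G)$, property~\ref{lem:rotatemain:prop3} is vacuous for the original empty $X$, and the target bound $\max(|X|+2^{\wl(G)+\alpha},\,1+2^{2\wl(G)})$ of property~\ref{lem:rotatemain:prop4} collapses to $1+2^{2\wl(G)}$ — exactly what processing the singleton will deliver. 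So assume $|X| \ge 1$ with $\alpha = |\bigcup_{x \in X}\adh(xt)|$.

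Apply \Cref{lem:rotatemainint} to $(t, X)$. If it reports case~\ref{lem:rotatemainint:caseconc}, then $\ddeg_{T}(t) \le |X| + 2^{\wl(G)+\alpha}$ and I return the sequence consisting of one dummy rotation on $t$: $\Tc$ is unchanged, $V_{\Tc}(\seq) = V_{\Tc'}(\seq) = \{t\}$, the nodes of $X$ are still children of $t$, and the degree bound holds. Otherwise \Cref{lem:rotatemainint} returns case~\ref{lem:rotatemainint:casetrans}: one splitting rotation $\seq_{1}$ turning $\Tc$ into a downwards well-linked $\Tc_{1}$ in which $t$ is replaced by a shallow node $t_{\co{C}}$ (adjacent to the former parent $p$, hence occupying $t$'s position) with $X \subseteq \chd_{T_{1}}(t_{\co{C}})$, and a deeper node $t_{C}$ that is a child of $t_{\co{C}}$, where moreover $\ddeg_{T_{1}}(t_{C}), \ddeg_{T_{1}}(t_{\co{C}}) < \ddeg_{T}(t)$. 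I then recurse twice: run the algorithm of \Cref{lem:rotatemain} on $(\Tc_{1}, t_{\co{C}}, X)$ — legitimate because $X \subseteq \chd_{T_{1}}(t_{\co{C}})$, $|\bigcup_{x \in X}\adh(x\,t_{\co{C}})| = \alpha$ (an adhesion depends only on $\lmap[x]$, which is untouched), and $\ddeg_{T_{1}}(t_{\co{C}}) < \ddeg_{T}(t)$ — obtaining $\seq_{\co{C}}$ and $\Tc_{2}$; and then on $(\Tc_{2}, t_{C}, \emptyset)$ — legitimate because property~\ref{lem:rotatemain:prop2} of the first recursive call guarantees $\seq_{\co{C}}$ involves only $t_{\co{C}}$ among the nodes of $T_{1}$, so $t_{C}$ survives into $\Tc_{2}$ with $\ddeg_{T_{2}}(t_{C}) < \ddeg_{T}(t)$ — obtaining $\seq_{C}$ and $\Tc_{3}$. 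The output is $\seq = \seq_{1}\,\seq_{\co{C}}\,\seq_{C}$. The recursion is well-founded, since each recursive call is on a node with strictly fewer children and \Cref{lem:rotatemainint} necessarily reports case~\ref{lem:rotatemainint:caseconc} once the node has only two children.

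Verifying the four properties against $\Tc' = \Tc_{3}$: downwards well-linkedness chains \Cref{lem:rotatemainint} with property~\ref{lem:rotatemain:prop1} of the two recursive calls. For property~\ref{lem:rotatemain:prop2}, $\seq_{1}$ involves only $t$ among nodes of $T$, while every node touched by $\seq_{\co{C}}$ or $\seq_{C}$ is either $t_{\co{C}}$, $t_{C}$, or a node created during the process, none of which lies in $V(T)$; hence $V_{\Tc}(\seq) = \{t\}$. For property~\ref{lem:rotatemain:prop3}, let $r$ be the shallowest node of $V_{\Tc_{3}}(\seq_{\co{C}})$ (or $r = t_{\co{C}}$ if $\seq_{\co{C}}$ is a dummy); by the recursive property~\ref{lem:rotatemain:prop3}, $X \subseteq \chd_{T_{3}}(r)$, and $r$ sits at $t$'s former position while $t_{C}$ and every node created by $\seq_{C}$ lie strictly below it, so $r$ is the shallowest node of all of $V_{\Tc_{3}}(\seq)$. (If the original $X$ was empty this property is vacuous.) For property~\ref{lem:rotatemain:prop4}, combine the degree bounds certified by the case~\ref{lem:rotatemainint:caseconc} verdicts and the inductive hypothesis: on the branch that carries $X$ the special set is always $X$, whose boundary union has size $\alpha$, giving $\ddeg \le |X| + 2^{\wl(G)+\alpha}$, while on every other branch — in particular the $t_{C}$-branch, processed with an empty (hence singleton) special set — the relevant adhesion union has size $\le \adhsize(\Tc) \le \wl(G)$, giving $\ddeg \le 1 + 2^{2\wl(G)}$.

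The main obstacle is the running time, which must be $\|\torso(t)\|^{\OO(1)}$ up to the $2^{\OO(\wl(G)+\alpha)}$ factor, not exponential in $\ddeg_{T}(t)$. The point is that the internal nodes ever created by the recursion, together with the $\ddeg_{T}(t)$ original children of $t$ (which are never modified, by property~\ref{lem:rotatemain:prop2} of the recursive calls), always form a rooted subtree in which those children are the leaves and every internal node has at least two children — the defining invariant of superbranch decompositions, preserved by splitting. A rooted tree with $\ell$ leaves and minimum out-degree $2$ at internal nodes has at most $\ell - 1$ internal nodes, and each splitting rotation creates exactly one new internal node in this subtree; therefore the total number of splitting rotations, and hence the total number of calls to \Cref{lem:rotatemainint}, is $O(\ddeg_{T}(t)) = O(\|\torso(t)\|)$. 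Since splitting never increases tree-degrees, every intermediate torso has at most $|E(\torso(t))|$ hyperedges, each of size $\le \adhsize(\Tc) \le \wl(G)$, so every call to \Cref{lem:rotatemainint} runs in time $2^{\OO(\wl(G)+\alpha)}\cdot\|\torso(t)\|^{\OO(1)}$ (the special-set boundary parameter being always $\le \max(\alpha,\wl(G))$) and appends $\OO(\|\torso(t)\|)$ to $\|\seq\|$. Multiplying by the bound on the number of calls gives total running time $2^{\OO(\wl(G)+\alpha)}\cdot\|\torso(t)\|^{\OO(1)}$ and $\|\seq\| \le \|\torso(t)\|^{\OO(1)}$, as required.
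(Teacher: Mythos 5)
Your proposal is correct and follows essentially the same approach as the paper: repeatedly invoking \Cref{lem:rotatemainint}, using the given set $X$ on the branch that must keep $X$ shallow and an arbitrary singleton child elsewhere, and bounding the number of splits by $\OO(\ddeg_T(t))$ via the tree structure of the created nodes. The paper phrases this as an iterative worklist over a set of active nodes rather than a two-way recursion, but the invariants, case analysis, and degree/running-time bounds are the same.
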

\begin{proof}
We describe an iterative procedure that transforms $\Tc$ into $\Tc'$.
We denote the current superbranch decomposition by $\Tc' = (T',\lmap')$, which initially equals $\Tc$, and the current sequence of basic rotations, which transforms $\Tc$ into $\Tc'$, by $\seq$.
We initialize $\seq$ to contain one dummy rotation that touches the node $t$, so that initially $V_{\Tc}(\seq) = V_{\Tc'}(\seq) = \{t\}$.
We also maintain a set $A \subseteq V(T')$ of ``active'' nodes, which we initially set as $A = \{t\}$.
Throughout, the invariants we maintain are that $\Tc'$ and $\seq$ satisfy the properties of \Cref{lem:rotatemain:prop1,lem:rotatemain:prop2,lem:rotatemain:prop3}, and for the set $V_{\Tc'}(\seq) \setminus A$ it holds that $\ddeg_{T'}(V_{\Tc'}(\seq) \setminus A) \le \max(|X| + 2^{\wl(G)+\alpha}, 1 + 2^{2 \wl(G)})$.
Furthermore, we maintain that all nodes in $X$ are children of the shallowest node in $V_{\Tc'}(\seq)$.
In particular, once $A = \emptyset$, all of the required properties are satisfied.

As long as $A$ is non-empty, we pick an arbitrary node $v \in A$.
If $X$ is non-empty and $v$ is the shallowest node in $V_{\Tc'}(\seq)$, we apply \Cref{lem:rotatemainint} with $v$ and $X$.
If it concludes that $\ddeg_{T'}(v) \le |X| + 2^{\wl(G)+\alpha}$, we simply remove $v$ from $A$, which maintains the invariant in this case.
If it splits $v$ into two nodes, then we insert both of the resulting nodes into the set $A$.
Note that by the guarantees of \Cref{lem:rotatemainint}, this also maintains the invariants.

In the other case, no child (or a descendant) of $v$ is in $X$.
In this case, let $c$ be an arbitrary child of $v$.
Because $\Tc'$ is downwards well-linked, we have that $|\adh(cv)| \le \wl(G)$.
We apply \Cref{lem:rotatemainint} with $v$ and the set $\{c\}$.
If it concludes that $\ddeg_{T'}(v) \le |X| + 2^{\wl(G)+|\adh(cv)|} \le 1+2^{2 \wl(G)}$, we can remove $v$ from $A$ while maintaining the invariant.
If it splits $v$ into two nodes, then we insert both of the resulting nodes into the set $A$.
By the guarantees of \Cref{lem:rotatemainint}, this maintains the invariants.

Because \Cref{lem:rotatemainint} only applies the splitting rotation and maintains that $V_{\Tc}(\seq) = \{t\}$, this process can go on for at most $\OO(\ddeg_{T}(t))$ iterations, which is also an upper bound for $|V_{\Tc'}(\seq)|$ and $|A|$.
Because of this, the process can be easily implemented in time $2^{\OO(\wl(G)+\alpha)} \cdot \|\torso(t)\|^{\OO(1)}$, and $\|\seq\|$ is upper bounded by $\|\torso(t)\|^{\OO(1)}$.
\end{proof}

\section{The data structure}
\label{sec:struct}
In this section, we introduce the structure and invariants of the superbranch decomposition that we maintain in our algorithm.
We start by stating our main lemma regarding the maintenance of a superbranch decomposition.
Then we introduce the internal invariants of the decomposition, and then the potential function we use for the amortized analysis.

The following is the main lemma of this paper.
Its proof spans \Cref{sec:struct,sec:balancing,sec:insdeledge,sec:mainlemmaproof}.

\begin{restatable}{lemma}{mainintlemma}
\label{lem:mainintlemma}
Let $G$ be a dynamic graph and $k \ge 1$ an integer with a promise that $\wl(\su{G}) \le k$ at all times.
There is a data structure that maintains $G$, $\su{G}$, and an $e_{\bot}$-rooted superbranch decomposition $\Tc = (T,\lmap)$ of $\su{G}$ so that 
\begin{itemize}
\item $\Tc$ is downwards well-linked,
\item $\ddeg(T) \le 2^{\OO(k)}$, and
\item $\depth(T) \le 2^{\OO(k)} \log \|G\|$.
\end{itemize}
The data structure supports the following operations:
\begin{itemize}
\item $\mathsf{Init}(G,k)$: Given an edgeless graph $G$ and an integer $k \ge 1$, initialize the data structure with $G$ and $k$, and return $\Tc$.
Runs in $2^{\OO(k)} \|G\|$ amortized time.
\item $\mathsf{AddEdge}(uv)$: Given a new edge $uv \in \binom{V(G)}{2} \setminus E(G)$, add $uv$ into $G$. Runs in $2^{\OO(k)} \log \|G\|$ amortized time.
\item $\mathsf{DeleteEdge}(uv)$: Given an edge $uv \in E(G)$, delete $uv$ from $G$. Runs in $2^{\OO(k)} \log \|G\|$ amortized time.
\end{itemize}
Furthermore, in the operations $\mathsf{AddEdge}$ and $\mathsf{DeleteEdge}$, $\Tc$ is updated by a sequence $\seq$ of basic rotations, which is returned.
The sizes $\|\seq\|_{\Tc}$ of these sequences have the same amortized upper bound as the running time.
\end{restatable}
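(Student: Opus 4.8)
The plan is to describe the data structure by first specifying the invariants it maintains on $\Tc$, then giving the implementations of the three operations, and finally carrying out the amortized analysis via a potential function. The invariants will be: (i) $\Tc$ is downwards well-linked; (ii) $\ddeg(T) \le \Delta$ for some threshold $\Delta = 2^{\OO(k)}$ chosen in terms of the splitting bound from \Cref{lem:rotatemain}; and (iii) $\Tc$ contains no $d$-unbalanced node for a parameter $d = 2^{\OO(k)}$ (where $t$ is $d$-unbalanced if some descendant $s$ at distance $d$ has $|\lmap[s]| \ge \frac{2}{3}|\lmap[t]|$), which forces $\depth(T) \le \OO(d \log \|G\|) = 2^{\OO(k)} \log\|G\|$. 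For the analysis I would use the splay-tree-style potential
\[
\pot(\Tc) = \sum_{t \in \vint(T)} (\ddeg(t)-1) \cdot \log(|\lmap[t]|),
\]
and the high-level strategy: each $\mathsf{AddEdge}$ / $\mathsf{DeleteEdge}$ performs $2^{\OO(k)}\log\|G\|$ ``real'' work that may raise $\pot$ by $2^{\OO(k)}\log\|G\|$ but does not necessarily preserve invariant (iii); afterwards a balancing subroutine repeatedly fixes $d$-unbalanced nodes, each fix costing $2^{\OO(k)}$ time but decreasing $\pot$ by at least a constant, so the amortized cost stays $2^{\OO(k)}\log\|G\|$.

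First I would isolate the two core subroutines. The \emph{rotate-to-root} subroutine takes a set of at most three leaves and, repeatedly for the currently deepest one $\ell$, contracts the edge between the parent and grandparent of $\ell$ (a basic contraction, which always preserves downwards well-linkedness) and then invokes \Cref{lem:rotatemain} on the newly created high-degree node with $X$ equal to the at-most-three designated leaves that happen to be its children, so that those leaves are not pushed down and the degree bound is restored; since each such step decreases $\depth(\ell)$ by one while the other designated leaves never go deeper, starting from depth $2^{\OO(k)}\log\|G\|$ we reach the root in $2^{\OO(k)}\log\|G\|$ steps, each step costing $2^{\OO(k)}$, and a telescoping argument along the root-to-$\ell$ paths bounds the total increase in $\pot$ by $2^{\OO(k)}\log\|G\|$. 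Given this, $\mathsf{AddEdge}(uv)$ rotates the leaves $\lmap^{-1}(e_u), \lmap^{-1}(e_v)$ to become children of the root and inserts a new leaf for $e_{uv}$ there (whose boundary is contained in $\{u,v\} \subseteq V(\lmap(\cl(\mathrm{root})))$, so the leaf-insertion rotation is legal and downwards well-linkedness is trivially preserved at the root); $\mathsf{DeleteEdge}(uv)$ rotates $\lmap^{-1}(e_u), \lmap^{-1}(e_v), \lmap^{-1}(e_{uv})$ to the root and deletes the leaf $\lmap^{-1}(e_{uv})$. In both cases one may need one more split at the root to restore invariant (ii). The $\mathsf{Init}$ operation builds a shallow decomposition of the (edgeless, up to the $e_v$ hyperedges) support hypergraph greedily and amortizes over the $n$ initial hyperedges.

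The \emph{balancing} subroutine is where the real argument lies: if $t$ is $d$-unbalanced witnessed by a descendant $s$ at distance $d$ with $|\lmap[s]| \ge \frac23|\lmap[t]|$, let $p_s$ be the parent of $s$, contract the entire $t$--$p_s$ path into a single node (so $s$ becomes a child of $t$ of degree $\ge d$), then invoke \Cref{lem:rotatemain} on $t$ with $X = \{s\}$ to re-split while keeping $s$ a child of $t$. This rebuilds the contracted path so that every reconstructed internal node except $t$ has $|\lmap[\cdot]| \le \frac23 |\lmap[t]| + O(1)$ of what it was (because they no longer contain $s$'s subtree), which drops $\log(|\lmap[\cdot]|)$ by a constant for $\Omega(d)$ nodes while leaving $\sum(\ddeg(\cdot)-1)$ over the rebuilt region essentially unchanged; choosing $d$ a large enough $2^{\OO(k)}$ makes $\pot$ drop by at least, say, $1$. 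Detecting $d$-unbalanced nodes efficiently is handled by maintaining $|\lmap[t]|$ at every node and a queue of recently-touched nodes, walking $d$ steps down from each; this adds only $2^{\OO(k)}$ per touched node. The main obstacle I anticipate is the bookkeeping needed to make all of this fit the \emph{prefix-rebuilding}-compatible accounting promised in the last sentence of the lemma — namely showing $\|\seq\|_{\Tc}$ (which charges the full ancestor traces of all touched nodes) is also $2^{\OO(k)}\log\|G\|$ amortized: one must argue that every basic rotation performed lies within a single prefix of bounded size, and that the trace overhead $|\trace_\Tc(\seq)|$ is dominated by the depth bound times the number of rotations; combined with the potential bookkeeping this is the delicate part, and I would handle it by proving that each $\mathsf{AddEdge}$/$\mathsf{DeleteEdge}$ together with its cascade of balancing steps only ever touches nodes within $2^{\OO(k)}\log\|G\|$-sized prefixes and invoking the amortization to absorb the balancing into the potential drop.
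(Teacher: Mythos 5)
Your proposal follows essentially the same route as the paper: the same invariants ($k$-goodness, i.e.\ downwards well-linkedness, a $2^{\OO(k)}$ degree bound, and no $d$-unbalanced nodes), the same potential $\sum_t (\ddeg(t)-1)\log|\lmap[t]|$, the same rotate-to-root subroutine built from contract-then-split via \Cref{lem:rotatemain}, the same insert/delete of the leaf $e_{uv}$ at the top, and the same contract-and-resplit balancing step whose potential drop pays for the cascade. The only looseness is in the balancing estimate (the rebuilt nodes other than the top one in fact satisfy $|\lmap[\cdot]|\le|\lmap[t]|-|\lmap[s]|\le\tfrac13|\lmap[t]|$, versus at least $\tfrac23|\lmap[t]|$ before, which is what yields the constant drop in $\log$) and in the trace accounting you rightly flag as delicate, which the paper settles by showing each rotation step only touches ancestors of the designated leaves and by tracking the touched set $R_+$ explicitly in the balancing loop.
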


\paragraph{Good and semigood superbranch decompositions.}
We then define the internal invariants that the superbranch decomposition of \Cref{lem:mainintlemma} will satisfy.
We define \emph{$k$-good} and \emph{$k$-semigood} superbranch decompositions, where $k$-good captures the properties we want to eventually maintain, and $k$-semigood is a relaxation of $k$-good to which we settle between subroutines.

A rooted superbranch decomposition $\Tc = (T,\lmap)$ of a hypergraph $G$ is \emph{$k$-semigood} if
\begin{itemize}
\item $\Tc$ is downwards well-linked, and
\item $\ddeg(T) \le 2^{2k}+1$.
\end{itemize}

The value $2^{2k}+1$ comes from \Cref{lem:rotatemain}.

A node $t \in V(T)$ is \emph{$d$-unbalanced} for an integer $d \ge 1$ if there exists a descendant $s$ of $t$ so that $\depth(s) \ge \depth(t)+d$ and $|\lmap[s]| \ge \frac{2}{3} \cdot |\lmap[t]|$.
A node $t \in V(T)$ is \emph{$d$-balanced} if it is not $d$-unbalanced.
A rooted superbranch decomposition $\Tc$ is $k$-good if it is $k$-semigood and all of its non-root nodes are $2^{2k+1}$-balanced.

\begin{lemma}
\label{lem:gooddepth}
Let $G$ be a hypergraph and $\Tc = (T,\lmap)$ a $k$-good rooted superbranch decomposition of $G$.
Then, $\depth(T) \le 2^{\OO(k)} \log \|G\|$.
\end{lemma}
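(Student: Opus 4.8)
The plan is to bound an arbitrary root-to-leaf path by a telescoping argument on the quantities $|\lmap[t]|$ along the path, using only the balance part of $k$-goodness. Write $d = 2^{2k+1}$, so that every non-root node of $T$ is $d$-balanced. If $\|G\| \le 1$ then $T$ has at most one node and the bound is trivial, so I would assume $\|G\| \ge 2$, hence $\log\|G\| \ge 1$. Then I would fix a leaf $\ell$ of maximum depth, set $D := \depth(T) = \depth(\ell)$, let $P$ be the root-to-$\ell$ path, and mark on $P$ the nodes $t_0, t_1, \dots, t_m$ at depths $0, d, 2d, \dots, md$, where $m = \lfloor D/d \rfloor$. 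For every $i \ge 1$ the node $t_i$ is not the root, hence is $d$-balanced.

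The main step is the observation that for $1 \le i \le m-1$, the node $t_{i+1}$ is a descendant of $t_i$ with $\depth(t_{i+1}) = \depth(t_i) + d$, so the $d$-balance of $t_i$ forces $|\lmap[t_{i+1}]| < \tfrac{2}{3}\,|\lmap[t_i]|$. Iterating this and using the trivial bounds $|\lmap[t_1]| \le |\lmap[t_0]| = |E(G)| \le \|G\|$ and $|\lmap[t_m]| \ge 1$ (every node of a superbranch decomposition has at least one leaf descendant), I obtain $1 \le |\lmap[t_m]| < (\tfrac{2}{3})^{m-1}\|G\|$, hence $(\tfrac{3}{2})^{m-1} < \|G\|$. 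Since $\log(3/2) > 1/2$, this gives $m - 1 < 2\log\|G\|$, so $m < 2\log\|G\| + 1$.

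Finally, since $D < (m+1)d$ by the choice of $m$, I conclude $\depth(T) = D < (2\log\|G\| + 2)\cdot 2^{2k+1} \le 4\log\|G\| \cdot 2^{2k+1} = 2^{\OO(k)}\log\|G\|$, using $\log\|G\| \ge 1$. I do not anticipate any genuine obstacle; the only points requiring minor care are the boundary indices $i \in \{0,1\}$, where $t_i$ may be the root or the telescoped chain of inequalities may be empty, and the degenerate regime of tiny $\|G\|$. Note in particular that the degree bound $\ddeg(T) \le 2^{2k}+1$ coming from $k$-semigoodness is not used in this estimate; only the $d$-balance of the non-root nodes matters.
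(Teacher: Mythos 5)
Your proof is correct and rests on exactly the same idea as the paper's: along a root-to-leaf path, the $2^{2k+1}$-balance condition forces $|\lmap[\cdot]|$ to shrink by a factor of $2/3$ every $2^{2k+1}$ levels, and since this quantity lies between $1$ and $|E(G)|$ there can be only $\OO(\log\|G\|)$ such blocks. The paper merely phrases this contrapositively (a too-long path yields an unbalanced node) with slightly different index bookkeeping, so the two arguments are essentially identical.
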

\begin{proof}
Suppose $T$ contains a root-leaf path $P = t_1, \ldots, t_\ell$ of length $\ell \ge 1 + (2^{2k+1}+1) \cdot 2 \cdot \log |E(G)|$, where $t_1$ is the root and $t_\ell$ a leaf.
For each $d \ge 0$, let $i_d$ be the largest index so that $|\lmap[t_{i_d}]| \ge (3/2)^{d}$, and if no such index exists, let $i_d = 1$.
There are at most $\log_{3/2} |E(G)| \le 2 \cdot \log |E(G)|$ indices so that $i_d \ge 2$.
Therefore, there exists $d$ so that $i_{d+1} + 2^{2k+1}+1 \le i_d$.
Now, $|\lmap[t_{i_{d+1}+1}]| < (3/2)^{d+1}$ and $|\lmap[t_{i_d}]| \ge (3/2)^d$, so $|\lmap[t_{i_d}]| \ge (2/3) \cdot |\lmap[t_{i_{d+1}+1}]|$, but $\depth(t_{i_d}) \ge \depth(t_{i_{d+1}+1}) + 2^{2k+1}$, implying that $t_{i_{d+1}+1}$ is $2^{2k+1}$-unbalanced.
\end{proof}

\paragraph{The potential function.}
We then introduce the potential function for analyzing the amortized running time of our data structure.
The potential function is similar to the potential function for splay trees~\cite{DBLP:journals/jacm/SleatorT85}, but includes a factor depending on the degree of a node in order to accommodate nodes with more than two children.

Let $\Tc = (T,\lmap)$ be a rooted superbranch decomposition.
We define the potential of a single internal node $t \in \vint(T)$ as
\[\pot_{\Tc}(t) = (\ddeg(t)-1) \cdot \log(|\lmap[t]|).\]
Note that an internal node $t$ has always $\ddeg(t) \ge 2$ and $|\lmap[t]| \ge 2$, so all internal nodes have potential at least $1$.
Then, the potential of $\Tc$ is
\[\pot(\Tc) = \sum_{t \in \vint(T)} \pot_{\Tc}(t).\]
We also denote for a set $X \subseteq \vint(T)$ that $\pot_{\Tc}(X) = \sum_{t \in X} \pot_{\Tc}(t)$.

The motivation for the factor $(\ddeg(t)-1)$ in the potential is that for any connected set $X \subseteq \vint(T)$, it holds that
\[\sum_{t \in X} (\ddeg(t)-1) = |\chd(X)|-1.\]

\section{Balancing}
\label{sec:balancing}
In this section, we give the subroutine for balancing the superbranch decomposition.
More formally, balancing means turning a $k$-semigood superbranch decomposition into a $k$-good superbranch decomposition, while decreasing the potential and using running time proportional to the potential decrease.
Specifically, this section is dedicated to the proof of the following lemma.

\begin{lemma}
\label{lem:mainbalancinglemma}
Let $G$ be a hypergraph, $e_{\bot} \in E(G)$, and $\Tc = (T,\lmap)$ an $e_{\bot}$-rooted superbranch decomposition of $G$.
Suppose also that $k \ge 1$ is an integer so that $\Tc$ is $k$-semigood and $\wl(G) \le k$.
There is an algorithm that, given $k$ and a prefix $R \subseteq V(T)$ of $T$ so that all $2^{2k+1}$-unbalanced nodes of $T$ are in $R$, transforms $\Tc$ into a $k$-good $e_{\bot}$-rooted superbranch decomposition $\Tc'$ with $\pot(\Tc') \le \pot(\Tc)$ via a sequence $\seq$ of basic rotations, and returns $\seq$.
The running time of the algorithm is $2^{\OO(k)} \cdot (|R| + \pot(\Tc)-\pot(\Tc'))$, which is also an upper bound for $\|\seq\|_{\Tc}$.
\end{lemma}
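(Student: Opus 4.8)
The plan is to process $2^{2k+1}$-unbalanced nodes one at a time, working top-down, using the basic observation that a node $t$ is $2^{2k+1}$-unbalanced precisely because it has a "heavy" descendant $s$ at depth at least $\depth(t) + 2^{2k+1}$ with $|\lmap[s]| \ge \tfrac23 |\lmap[t]|$. For such a witness, I would take $p_s$ to be the parent of $s$, contract the entire $t$--$p_s$ path into a single node $t^\star$ (a sequence of contractions, each preserving downwards well-linkedness since contraction always does), and then immediately invoke \Cref{lem:rotatemain} on $t^\star$ with the set of children $X = \{s\}$; note $|\adh(s\,p_s)| \le \wl(G) \le k$ because $\Tc$ was downwards well-linked, so the degree guarantee of \Cref{lem:rotatemain} gives $\ddeg_{T'}(V_{\Tc'}(\seq)) \le \max(1 + 2^{\wl(G) + k}, 1 + 2^{2\wl(G)}) \le 2^{2k}+1$, restoring $k$-semigoodness, and $s$ becomes a child of the shallowest new node, which we place at the depth formerly occupied by $t$.

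The key accounting step is that this local rebuild does not increase the potential, and in fact decreases it by a constant. Before the rebuild, the $t$--$p_s$ path had $2^{2k+1}$ internal nodes $u_1 = t, u_2, \dots$ each with $|\lmap[u_i]| \ge |\lmap[s]| \ge \tfrac23 |\lmap[t]|$, contributing at least $\sum_i (\ddeg(u_i)-1)\log(\tfrac23|\lmap[t]|)$ to $\pot(\Tc)$; summing the telescoping identity $\sum_{u \in X}(\ddeg(u)-1) = |\chd(X)|-1$ over this path $X$, the total "degree mass" $|\chd(X)|-1$ is preserved by the rebuild (the children of the path are exactly the children of the new subtree, plus $s$ gets relocated but is still counted once), while every new internal node has $|\lmap[\cdot]| \le |\lmap[t]|$, so its $\log$-factor is at most $\log|\lmap[t]|$. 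The point is the difference: the old path forces a $\log(\tfrac23|\lmap[t]|) = \log|\lmap[t]| - \log\tfrac32$ factor on degree mass at least $|\chd(X)|-1$ across $2^{2k+1}$ nodes, while the new subtree has at most $\OO(2^{2k})$ internal nodes each at $\log|\lmap[t]|$ but spread over the \emph{same} degree mass; choosing the unbalance threshold as $2^{2k+1}$ (rather than $2^{2k}$) is exactly what makes the saving $\ge 2^{2k}\log\tfrac32 - \OO(1) \ge 1$, i.e. a constant potential decrease per unbalanced node handled (one can also absorb the $\OO(2^{2k})$ freshly created nodes' baseline potential, which is $\le \OO(2^{2k}\log|\lmap[t]|)$, into the same comparison). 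The running time of one such step is $2^{\OO(k)}$ for the \Cref{lem:rotatemain} call plus $\OO(2^{2k+1})$ for the path contraction, i.e. $2^{\OO(k)}$, paid for either by the potential drop or by charging to a node of $R$.

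For the global bookkeeping I would maintain a queue of candidate unbalanced nodes, initialized with $R$ (which is guaranteed to contain all currently-unbalanced nodes), using the stored values $|\lmap[t]|$ to test $d$-unbalancedness of a popped node in $2^{\OO(k)}$ time by exploring its subtree down to depth $2^{2k+1}$ — whose size is $2^{\OO(k)}$ since $\ddeg(T) \le 2^{2k}+1$. When a node is found unbalanced, perform the rebuild above; the rebuild touches only $2^{\OO(k)}$ nodes, so re-insert exactly those into the queue (these are the only nodes whose balance status could have changed, other than ancestors — but ancestors' $|\lmap[\cdot]|$ values are unchanged by a rebuild that preserves $\lmap$ and only restructures within a subtree, and their set of descendants at each depth can only have shrunk in the relevant sense, so no previously-balanced ancestor becomes unbalanced; this monotonicity claim is the one technical point to verify carefully). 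Since each rebuild strictly decreases $\pot(\Tc)$ by a constant and costs $2^{\OO(k)}$, and each queue pop that does \emph{not} trigger a rebuild costs $2^{\OO(k)}$ and is charged either to the initial $|R|$ pops or to a re-insertion caused by an earlier rebuild (of which there are $2^{\OO(k)}$ per rebuild), the total running time is $2^{\OO(k)}(|R| + \pot(\Tc) - \pot(\Tc'))$, and this same quantity bounds $\|\seq\|_{\Tc}$ since $\seq$ is the concatenation of all the rebuild sequences, each of size $2^{\OO(k)}$ with trace contained in the touched nodes plus their ancestors.

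\textbf{Main obstacle.} The delicate part is the potential argument: one must be precise that contracting the long path and re-expanding it via \Cref{lem:rotatemain} does not accidentally \emph{increase} the degree mass $|\chd(X)|-1$ of the affected region, and that the freshly created internal nodes (whose count $\OO(2^{2k})$ from \Cref{lem:rotatemain} is independent of the path length) have their baseline potential fully covered by the $2^{2k+1}\log\tfrac32$ saving harvested from eliminating the heavy path. Getting the constant $d = 2^{2k+1}$ to work requires that $2^{2k+1}\cdot\log\tfrac32$ dominates both the $\OO(2^{2k})$ new-node overhead and the $+1$ net decrease we want, which it does for all $k \ge 1$, but the inequality chain must be written out honestly. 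The second, milder obstacle is the monotonicity claim that rebuilds never create new unbalanced nodes outside the $2^{\OO(k)}$ touched set — this needs the observation that $|\lmap[\cdot]|$ is unchanged for all nodes outside the rebuilt subtree and that the depth of the heavy descendant witnessing any ancestor's unbalance can only have decreased (or the witness disappeared) after the rebuild pulled $s$ up.
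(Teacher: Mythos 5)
Your high-level plan — find the heavy descendant $s$ at distance $2^{2k+1}$, contract the $t$--$p_s$ path, re-split with \Cref{lem:rotatemain} keeping $s$ as a child of the shallowest new node, and amortize via $\pot$ — is exactly the paper's strategy (\Cref{lem:balancingonestep}). However, the central potential calculation has a genuine gap. Writing $C=\sum_{u\in V(P)}(\ddeg(u)-1)$ for the degree mass of the contracted path, your bounds are: old contribution $\ge C\cdot(\log|\lmap[t]|-\log\tfrac32)$ and, since you only use $|\lmap[v]|\le|\lmap[t]|$ for the new internal nodes, new contribution $\le C\cdot\log|\lmap[t]|$. These two bounds are consistent with the potential \emph{increasing} by $C\log\tfrac32$; no decrease follows. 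The count of new internal nodes (which, incidentally, is $\OO(2^{4k})$, not $\OO(2^{2k})$, since the contracted node can have degree up to $2^{2k+1}\cdot 2^{2k}$) is irrelevant to the upper bound — only the degree mass and the $\log$-factors matter. The missing ingredient is structural: because $s$ is a child of the shallowest new node $t''$, every other new internal node $v$ satisfies $\lmap[v]\subseteq\lmap[t]\setminus\lmap[s]$, hence $|\lmap[v]|\le\tfrac13|\lmap[t]|$ and its $\log$-factor is $\le\log|\lmap[t]|-\log 3$; moreover $\ddeg(t'')-1\le 2^{2k}\le C/2$, so at least half the degree mass enjoys this smaller factor. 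This yields new $\le C\log|\lmap[t]|-\tfrac{C}{2}\log 3$ and a net drop of at least $C(\tfrac12\log 3-\log\tfrac32)\ge 1$ for $C\ge 2^{2k+1}$. Without this splitting of the degree mass your ``saving $\ge 2^{2k}\log\tfrac32$'' is asserted, not derived.

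Two further points. First, your test for $2^{2k+1}$-unbalancedness by ``exploring the subtree down to depth $2^{2k+1}$'' is not $2^{\OO(k)}$: with $\ddeg(T)\le 2^{2k}+1$ that subtree can have $(2^{2k}+1)^{2^{2k+1}}$ nodes. The fix is to follow only the child with the most leaf descendants at each step (any witness $s$ with $|\lmap[s]|\ge\tfrac23|\lmap[t]|$ forces every node on the $t$--$s$ path to be the strict-majority, hence heaviest, child), giving a single path of length $2^{2k+1}$ and cost $2^{\OO(k)}$, as in \Cref{lem:testunbalancedness}. Second, your top-down processing order forces you to prove the monotonicity claim that a rebuild never unbalances an already-processed ancestor; this is in fact true (any witness $x$ of an ancestor's unbalance inside $\desc(t)$ must be comparable to $s$ in the ancestor order, since two incomparable descendants of $t$ cannot both carry $\ge\tfrac23|\lmap[t]|$ leaves, and such $x$ either is destroyed by the contraction or has its depth decreased), but you leave it unproved and your stated reason (``the set of descendants at each depth can only have shrunk'') is false as written — nodes hanging off the path can be pushed deeper. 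The paper avoids the claim entirely by keeping all ancestors of the processed node in the prefix $R$ and working deepest-first with a stack, which is the simpler route.
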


The goal of this section is to prove \Cref{lem:mainbalancinglemma}.
To avoid repeatedly stating assumptions, for the remainder of the section we assume that we are in the setting where $\Tc$, $G$, $k$, and $R$ have the properties as stated in \Cref{lem:mainbalancinglemma}.

We start with an easy observation about finding whether a node is unbalanced.

\begin{lemma}
\label{lem:testunbalancedness}
There is an algorithm that, given a node $t \in V(T)$, in time $2^{\OO(k)}$ either concludes that $t$ is $2^{2k+1}$-balanced, or returns a descendant $s$ of $t$ so that $\depth(s) = \depth(t) + 2^{2k+1}$ and $|\lmap[s]| \ge \frac{2}{3} \cdot |\lmap[t]|$.
\end{lemma}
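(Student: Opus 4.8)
The plan is to perform a bounded-depth search from $t$ down to depth exactly $2^{2k+1}$ below it, checking along the way whether any node encountered has enough leaf descendants. Recall that the data structure stores, for every node $v \in V(T)$, the quantity $|\leaves[v]| = |\lmap[v]|$, so this value can be read in $\OO(1)$ time. Also recall that $\Tc$ is $k$-semigood, so $\ddeg(T) \le 2^{2k}+1$; hence the subtree rooted at $t$ and truncated at depth $2^{2k+1}$ relative to $t$ has at most $(2^{2k}+1)^{2^{2k+1}} = 2^{\OO(k) \cdot 2^{2k+1}}$ nodes, which is $2^{2^{\OO(k)}}$ rather than $2^{\OO(k)}$. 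So a naive full traversal is too slow, and the key idea must be that we do \emph{not} need to explore the whole truncated subtree.

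The observation that makes this work is monotonicity: if $s$ is a descendant of $t$, then $\lmap[s] \subseteq \lmap[t]$, and if $s'$ is a descendant of $s$ then $|\lmap[s']| \le |\lmap[s]|$. Consequently, a node $s$ at depth $\depth(t)+d$ can have a descendant $s'$ at depth $\depth(t)+2^{2k+1}$ with $|\lmap[s']| \ge \tfrac{2}{3}|\lmap[t]|$ only if $|\lmap[s]| \ge \tfrac{2}{3}|\lmap[t]|$ already. So the search only ever needs to descend into children $c$ of the current node $v$ with $|\lmap[c]| \ge \tfrac{2}{3}|\lmap[t]|$. But at most one child of any node $v$ can satisfy $|\lmap[c]| \ge \tfrac{2}{3}|\lmap[t]|$: the children's leaf-sets $\lmap[c]$ are pairwise disjoint and each is a subset of $\lmap[t]$, so if two children each had more than $\tfrac{2}{3}|\lmap[t]|$ leaves their union would exceed $|\lmap[t]|$, a contradiction. (Even at equality $\tfrac{2}{3}|\lmap[t]|$, two such children would already give $\tfrac{4}{3}|\lmap[t]| > |\lmap[t]|$.) Therefore the search that ``follows the heavy child'' is in fact a \emph{single root-to-depth path}: starting at $t$, at each step go to the unique child whose leaf-count is at least $\tfrac23|\lmap[t]|$, stopping early if no such child exists.

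So the algorithm is: set $s_0 := t$; for $i = 1, \dots, 2^{2k+1}$, scan the children of $s_{i-1}$ (there are at most $2^{2k}+1$ of them) and if some child $c$ has $|\lmap[c]| \ge \tfrac23|\lmap[t]|$ set $s_i := c$, otherwise halt and report ``$t$ is $2^{2k+1}$-balanced''. If the loop completes, return $s := s_{2^{2k+1}}$, which satisfies $\depth(s) = \depth(t)+2^{2k+1}$ and $|\lmap[s]| \ge \tfrac23|\lmap[t]|$ by construction. Correctness of the ``balanced'' output: if the loop halts at step $i$ because no child of $s_{i-1}$ is heavy, then by the monotonicity argument no descendant of $s_{i-1}$ — in particular no node at depth $\depth(t)+2^{2k+1}$ — can have leaf-count $\ge \tfrac23|\lmap[t]|$; and any descendant of $t$ at that depth must pass through some $s_j$ all of whose leaf-counts are $\ge \tfrac23|\lmap[t]|$, forcing it to coincide with the path $s_0, s_1, \dots$ up to step $i-1$, hence to be a descendant of $s_{i-1}$. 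So $t$ is indeed $2^{2k+1}$-balanced. The running time is $2^{2k+1}$ iterations, each costing $\OO(2^{2k})$ to scan children and read stored leaf-counts, for a total of $2^{\OO(k)}$, as claimed.

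The only mild subtlety — and the step I would be most careful about — is the edge case in the uniqueness claim when $|\lmap[t]|$ is small or the inequality is met with equality; but since an internal node has $|\lmap[t]| \ge 2$ and children have disjoint nonempty leaf-sets summing to $|\lmap[t]|$, at most one child can reach the $\tfrac23$ threshold, so the ``heavy child'' is genuinely unique and the search is genuinely a single path. Everything else is a routine bounded-depth traversal using the stored $|\lmap[\cdot]|$ values.
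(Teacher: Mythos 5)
Your proposal is correct and matches the paper's proof, which also traverses a single downward path of length $2^{2k+1}$ from $t$ (the paper phrases it as always following the child with the most leaf descendants, which is equivalent given your uniqueness observation) using the stored $|\leaves[\cdot]|$ counters and the degree bound $\ddeg(T) \le 2^{2k}+1$. Your write-up just spells out the monotonicity and heavy-child-uniqueness justification that the paper leaves implicit.
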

\begin{proof}
Such a descendant $s$, if one exists, can always be found by traversing downwards from $t$ by always going to the child with the most leaf descendants.
Because $\Tc$ is $k$-semigood and thus has $\ddeg(T) \le 2^{\OO(k)}$, this can be implemented in time $2^{\OO(k)}$ by using the leaf descendant counters stored in the representation of $\Tc$.
\end{proof}

Then, we give an algorithm for performing one step of our balancing procedure.
It takes one unbalanced node as input, and performs rotations to decrease the potential.

\begin{lemma}
\label{lem:balancingonestep}
There is an algorithm that, given a $2^{2k+1}$-unbalanced non-root node $t \in V(T)$, transforms $\Tc$ into a $k$-semigood $e_{\bot}$-rooted superbranch decomposition $\Tc'$ of $G$ via a sequence $\seq$ of basic rotations, so that
\begin{enumerate}
\item $\pot(\Tc') \le \pot(\Tc)-1$,
\item $\|\seq\| \le 2^{\OO(k)}$, and\label{lem:balancingonestep:item2}
\item $V_{\Tc}(\seq) \subseteq \desc(t)$ is a connected set in $T$ and contains $t$.
\end{enumerate}
The running time of the algorithm is $2^{\OO(k)}$ and it returns $\seq$.
\end{lemma}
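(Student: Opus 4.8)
The plan is to realize the "single balancing step" by the contract-then-resplit strategy described in the overview: given the $2^{2k+1}$-unbalanced node $t$, first locate a deep heavy descendant, contract the path from $t$ down to its parent into a single high-degree node, and then call \Cref{lem:rotatemain} to re-split that node into a bounded-degree subtree in which the heavy descendant has been pulled up to be a child of $t$. The potential accounting then shows this costs us nothing and in fact drops $\pot$ by at least $1$.

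First I would apply \Cref{lem:testunbalancedness} to $t$ to obtain, in time $2^{\OO(k)}$, a descendant $s$ with $\depth(s) = \depth(t) + 2^{2k+1}$ and $|\lmap[s]| \ge \tfrac{2}{3}|\lmap[t]|$; let $p_s$ be the parent of $s$, so $p_s$ lies on the $t$–$s$ path and is an internal node. I would then contract, one edge at a time, the whole path from $t$ to $p_s$; by the remark at the end of \Cref{sec:mani} contraction preserves downwards well-linkedness, and each contraction has size $\OO(\|\torso(\cdot)\|) = 2^{\OO(k)}$ since every node on this path has degree at most $2^{2k}+1$ (it is a path of length exactly $2^{2k+1}$, so there are $2^{\OO(k)}$ contractions, each $2^{\OO(k)}$, total $2^{\OO(k)}$). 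Call the resulting merged node $t^{\star}$; it has all of $t$'s old children other than the one towards $s$, plus the children hanging off the contracted path, plus $s$ itself, and its parent is still the parent of $t$. Crucially $s$ is now a \emph{child} of $t^\star$. Next I would call \Cref{lem:rotatemain} on $\Tc$ (the current decomposition), the node $t^\star$, and the singleton child set $X = \{s\}$; here $\alpha = |\adh(s\,t^\star)| \le \wl(G) \le k$ by downwards well-linkedness, so \Cref{lem:rotatemain} produces a sequence $\seq'$ that transforms $t^\star$ into a subtree which is downwards well-linked, has all new nodes of degree at most $\max(1 + 2^{\wl(G)+\alpha},\, 1 + 2^{2\wl(G)}) \le 2^{2k}+1$, and in which $s$ is a child of the shallowest new node; that shallowest node plays the role of (the new) $t$. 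The overall $\seq$ is the concatenation of the path-contraction rotations and $\seq'$; the running time of \Cref{lem:rotatemain} is $2^{\OO(k)}\|\torso(t^\star)\|^{\OO(1)} = 2^{\OO(k)}$ since $\ddeg(t^\star) = 2^{\OO(k)}$, and $\|\seq'\| \le \|\torso(t^\star)\|^{\OO(1)} = 2^{\OO(k)}$, giving item~\ref{lem:balancingonestep:item2} and the claimed running time. For item~3: all rotations touch only nodes on the $t$–$s$ path or the nodes $t^\star$ is split into, all of which are descendants of $t$, and the contractions keep the set connected and containing $t$; a dummy rotation touching $t$ can be prepended if needed to make sure $t \in V_{\Tc}(\seq)$ explicitly. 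The result $\Tc'$ is downwards well-linked with $\ddeg(T') \le 2^{2k}+1$, hence $k$-semigood.

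The heart of the argument is the potential drop. Let $W$ be the set of internal nodes on the $t$–$s$ path (from $t$ down to, and including, $p_s$): these are the nodes destroyed by the contractions, and $|W| = 2^{2k+1}$. Writing $Y$ for the set of internal nodes created when $t^\star$ is re-split (including the new top node), the identity $\sum_{u \in X'} (\ddeg(u)-1) = |\chd(X')| - 1$ for connected $X'$ applies to both $W$ (plus $t^\star$ is just the contraction of $W$, so $\chd(W)$ equals $\chd(\{t^\star\})$) and $Y$: since $Y$ is obtained from the single node $t^\star$ by the splitting rotations of \Cref{lem:rotatemain}, and splitting preserves $|\chd(\cdot)|$ of the affected region, we get $\sum_{u \in Y}(\ddeg(u)-1) = \ddeg(t^\star) - 1 = \sum_{w \in W}(\ddeg(w)-1)$. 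No internal node outside $W \cup Y$ changes its degree or its $|\lmap[\cdot]|$, so the potential change is $\pot(\Tc') - \pot(\Tc) = \sum_{u\in Y}(\ddeg(u)-1)\log|\lmap[u]| - \sum_{w\in W}(\ddeg(w)-1)\log|\lmap[w]|$. Every new node $u \in Y$ is a descendant of (the new) $t$ and an ancestor of or equal to the position where $s$ hangs, so $|\lmap[u]| \le |\lmap[t]|$ (as $\lmap[t]$ is unchanged; $t$ itself is the top node of $Y$ with exactly the leaves it had before). Thus the positive sum is at most $\big(\sum_{u\in Y}(\ddeg(u)-1)\big)\log|\lmap[t]| = \big(\sum_{w\in W}(\ddeg(w)-1)\big)\log|\lmap[t]|$. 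For the negative sum, consider the "heavy" subset of $W$ consisting of nodes $w$ that lie on the $t$–$s$ path with $|\lmap[w]| \ge |\lmap[s]| \ge \tfrac23|\lmap[t]|$; every node of $W$ between $t$ and $p_s$ on the path has $|\lmap[w]| \ge |\lmap[p_s]| \ge |\lmap[s]|$, so in fact \emph{all} of $W$ is heavy, giving $\log|\lmap[w]| \ge \log|\lmap[t]| - \log\tfrac32$ for every $w \in W$. Hence the negative sum is at least $\big(\sum_{w\in W}(\ddeg(w)-1)\big)(\log|\lmap[t]| - \log\tfrac32)$, and since each $w \in W$ has $\ddeg(w) \ge 2$ we have $\sum_{w\in W}(\ddeg(w)-1) \ge |W| = 2^{2k+1}$, so the net change is at most $-2^{2k+1}\log\tfrac32 \le -1$, establishing item~1.

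The main obstacle I anticipate is making the bookkeeping of "which leaf-counts $|\lmap[u]|$ sit where" fully rigorous, because \Cref{lem:rotatemain} only promises that the new nodes $V_{\Tc'}(\seq')$ have bounded degree and that $X = \{s\}$ ends up a child of the shallowest one — it does not a priori control the $|\lmap[\cdot]|$ values of the intermediate new nodes. One must argue that every node produced by the internal splittings of \Cref{lem:rotatemainint} (which is what \Cref{lem:rotatemain} iterates) has $\lmap[\cdot] \subseteq \lmap[t^\star] = \lmap[t]$, which is immediate since splitting only partitions the children of an existing node, so no new node ever has more leaf-descendants than $t^\star$. The second delicate point is verifying that \emph{every} node of $W$ really satisfies $|\lmap[w]| \ge |\lmap[s]|$: this holds because $W$ consists of ancestors of $s$ on the path (each such $w$ has $s$ among its descendants, so $\lmap[s] \subseteq \lmap[w]$), which is exactly why contracting precisely the $t$–$p_s$ segment (and not, say, an arbitrary path) is the right move. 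Everything else — the running-time sums, preservation of downwards well-linkedness under contraction and under \Cref{lem:rotatemain}, the connectivity of $V_{\Tc}(\seq)$ — is routine given the lemmas already in hand.
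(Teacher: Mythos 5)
Your algorithmic construction (contract the $t$--$p_s$ path, then re-split via \Cref{lem:rotatemain} with $X=\{s\}$) is exactly the paper's, and your treatment of the running time, $\|\seq\|$, connectivity of $V_{\Tc}(\seq)$, and $k$-semigoodness is fine. However, the potential analysis — which you correctly identify as the heart of the argument — contains a genuine error. You bound the new contribution from above by $\bigl(\sum_{u\in Y}(\ddeg(u)-1)\bigr)\log|\lmap[t]| = C\log|\lmap[t]|$ and the destroyed contribution from below by $C\bigl(\log|\lmap[t]|-\log\tfrac32\bigr)$; subtracting these gives a net change of at most $+C\log\tfrac32$, not $-C\log\tfrac32$ as you wrote. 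The sign flip is not a typo that can be waved away: with the bounds you actually derived, the conclusion is only that the potential increases by at most $C\log\tfrac32$, which proves nothing.

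The missing idea is that the bound $|\lmap[u]|\le|\lmap[t]|$ for the new nodes $u\in Y$ is too weak for all but one of them. Since $s$ ends up a \emph{child of the shallowest} new node $t''$, every other node $u\in Y\setminus\{t''\}$ is \emph{not} an ancestor of $s$, hence $|\lmap[u]|\le|\lmap[t]|-|\lmap[s]|\le\tfrac13|\lmap[t]|$, i.e.\ these nodes pay $\log|\lmap[t]|-\log 3$ per unit of degree weight. Writing $C_r=\ddeg(t'')-1$ and $C_o=C-C_r$, the degree bound from \Cref{lem:rotatemain} gives $C_r\le 2^{2k}\le C/2$, so at least half of the degree weight enjoys this stronger bound. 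The new contribution is then at most $C\log|\lmap[t]|-C_o\log 3\le C\log|\lmap[t]|-\tfrac{C}{2}\log 3$, and since $\tfrac12\log 3>\log\tfrac32$ with slack $C\bigl(\tfrac12\log 3-\log\tfrac32\bigr)\ge 1$ for $C\ge 2^{2k+1}\ge 8$, comparison with your (correct) lower bound $C\bigl(\log|\lmap[t]|-\log\tfrac32\bigr)$ on the destroyed contribution yields the required drop of at least $1$. Without isolating the shallowest node and exploiting that the remaining new nodes carry at most a third of the leaves, the argument does not close.
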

\begin{proof}
We first apply \Cref{lem:testunbalancedness} to find a descendant $s$ of $t$ so that $\depth(s) = \depth(t)+ 2^{2k+1}$ and $|\lmap[s]| \ge \frac{2}{3} \cdot |\lmap[t]|$.
By following the parent pointers stored in the representation of $\Tc$, we also find in $2^{\OO(k)}$ time the unique $(t,s)$-path in $T$.
Let $P$ denote this path with $s$ removed.
In particular, $P$ contains $2^{2k+1}-1$ edges and $2^{2k+1}$ nodes.
We obtain a superbranch decomposition $\Tc' = (T',\lmap')$ by contracting all edges on $P$ with the contraction basic rotation.
Because $\Tc$ is $k$-semigood and $|P| \le 2^{\OO(k)}$, this runs in time $2^{\OO(k)}$.

Let $t'$ be the node of $\Tc'$ corresponding to the contracted path.
We have that 
\begin{enumerate}
\item $\Tc'$ is downwards well-linked,\label{lem:balancingonestep:enum1:item1}
\item $\ddeg_{T'}(V(T') \setminus \{t'\}) \le 2^{2k}+1$,\label{lem:balancingonestep:enum1:item2}
\item $2^{2k+1}+1 \le \ddeg_{T'}(t') \le 2^{\OO(k)}$,\label{lem:balancingonestep:enum1:item3}
\item $\lmap'[t'] = \lmap[t]$, and\label{lem:balancingonestep:enum1:item4}
\item $s \in \chd_{T'}(t')$ and $\lmap'[s] = \lmap[s]$.\label{lem:balancingonestep:enum1:item5}
\end{enumerate}

Because $\Tc'$ is downwards well-linked, we have that $\adh(st') \le \wl(G)$.
We apply the operation of \Cref{lem:rotatemain} to the node $t'$ and the set $\{s\}$ of children of $t'$.
It runs in time $2^{\OO(k)}$, and transforms $\Tc'$, via a sequence $\seq^*$ of basic rotations, to a superbranch decomposition $\Tc'' = (T'', \lmap'')$ so that
\begin{enumerate}[resume]
\item $\Tc''$ is downwards well-linked,\label{lem:balancingonestep:enum2:item1}
\item $V_{\Tc'}(\seq^*) = \{t'\}$,\label{lem:balancingonestep:enum2:item2}
\item $s$ is a child of the shallowest node in $V_{\Tc''}(\seq^*)$ in $\Tc''$, and\label{lem:balancingonestep:enum2:item3}
\item $\ddeg_{T''}(V_{\Tc''}(\seq^*)) \le \max(1 + 2^{|\adh(st')|+\wl(G)}, 1 + 2^{2 \wl(G)}) \le 2^{2 \wl(G)} + 1 \le 2^{2k} + 1$.\label{lem:balancingonestep:enum2:item4}
\end{enumerate}
It also returns $\seq^*$, which has $\|\seq^*\| \le 2^{\OO(k)}$.

We immediately note that the combination of \Cref{lem:balancingonestep:enum1:item2,lem:balancingonestep:enum2:item2,lem:balancingonestep:enum2:item4} implies that $\ddeg(T'') \le 2^{2k}+1$, implying with \Cref{lem:balancingonestep:enum2:item1} that $\Tc''$ is $k$-semigood.
We then prove the main claim about the potential of $\Tc''$.

\begin{claim}
$\pot(\Tc'') \le \pot(\Tc) -1$.
\end{claim}
\begin{claimproof}
We observe that $\pot(\Tc'') = \pot(\Tc) - \pot_{\Tc}(V(P)) + \pot_{\Tc''}(V_{\Tc''}(\seq^*))$, so the claim is equivalent to the claim that $\pot_{\Tc''}(V_{\Tc''}(\seq^*)) \le \pot_{\Tc}(V(P)) - 1$.
Let $C = \sum_{x \in V(P)} (\ddeg_{T}(x)-1)$.
Because $s$ is a descendant of all nodes in $V(P)$, we have that
\begin{align*}
\pot_{\Tc}(V(P)) &\ge C \cdot \log(|\lmap[s]|)\\
&\ge C \cdot (\log(|\lmap[t]|)+\log(2/3))\\
&\ge C \cdot \log(|\lmap[t]|) - C \cdot \log(3/2).
\end{align*}

Because $V(P)$ is a connected set of internal nodes, we have that $|\chd_{T}(V(P))| = C+1$.
We also have that $|\chd_{T}(V(P))| = |\chd_{T'}(t')| = |\chd_{T''}(V_{\Tc''}(\seq^*))|$.
Let $t''$ be the shallowest node in $V_{\Tc''}(\seq^*)$, and denote $C_r = \ddeg_{T''}(t'')-1$ and $C_o = \sum_{x \in V_{\Tc''}(\seq^*) \setminus \{t''\}} (\ddeg_{T''}(x)-1)$.
Because $V_{\Tc''}(\seq^*)$ is a connected set of internal nodes, we have that $|\chd_{T''}(V_{\Tc''}(\seq^*))| = C_o + C_r + 1$, and therefore $C_o + C_r = C$.

By $C+1 = |\chd_{T'}(t')|$ we have that $C \ge 2^{2k+1}$.
We also have that $C_r \le 2^{2k} \le C/2$, so $C_o \ge C/2$.
Because $s$ is a child of $t''$ in $\Tc''$ and 
\[|\lmap''[s]| = |\lmap[s]| \ge \frac{2}{3} \cdot |\lmap[t]| = \frac{2}{3} \cdot |\lmap''[t'']|,\]
we have
\begin{align*}
\pot_{\Tc''}(V_{\Tc''}(\seq^*)) &\le C_r \cdot \log(|\lmap[t]|) + C_o \cdot \log(|\lmap[t]|-|\lmap[s]|) && \\
&\le C_r \cdot \log(|\lmap[t]|) + C_o \cdot (\log(|\lmap[t]|)-\log(3)) && \text{(by $|\lmap[s]| \ge \frac{2}{3} |\lmap[t]|$)}\\
&\le C \cdot \log(|\lmap[t]|) - C_o \cdot \log(3) && \text{(by $C_o + C_r = C$)}\\
&\le C \cdot \log(|\lmap[t]|) - C \cdot \log(3)/2 && \text{(by $C_o \ge C/2$)}\\
&\le C \cdot \log(|\lmap[t]|) - C \cdot \log(3/2) - 1 && \text{(by $C \ge 2^{2k+1} \ge 8$)}\\
&\le \pot_{\Tc}(V(P))-1. &&
\end{align*}
\end{claimproof}

By prepending a sequence of basic rotations describing the contraction of $P$ to the sequence $\seq^*$, we obtain a sequence $\seq$ of basic rotations that transforms $\Tc$ into $\Tc''$.
We have that $\|\seq\| \le \|\seq^*\| + 2^{\OO(k)} \cdot |V(P)| \le 2^{\OO(k)}$, and $V_{\Tc}(\seq) = V(P) \subseteq \desc(t)$.
We return $\seq$.
\end{proof}

We then finish the proof of \Cref{lem:mainbalancinglemma} by giving an algorithm that repeatedly applies the operation of \Cref{lem:balancingonestep}.

\begin{proof}[Proof of \Cref{lem:mainbalancinglemma}]
We implement an iterative process that improves $\Tc$ by repeatedly applying \Cref{lem:balancingonestep}, and maintains a prefix $R \subseteq V(T)$ so that all $2^{2k+1}$-unbalanced nodes of $\Tc$ are in $R$.
Throughout this process, $R$ will be stored as a stack having the property that if a node $t$ is at a certain position of the stack, all of its ancestors are below $t$, i.e., will be popped after $t$.
From the initial input $R$, such a stack representation can be constructed in $2^{\OO(k)} \cdot |R|$ time by performing a depth-first search that starts from the root and is restricted to nodes in $R$.

We then state all the information and invariants that are maintained in this process.
Let $\Tc_0 = (T_0,\lmap_0)$ and $R_0$ denote the initial superbranch decomposition $\Tc$ and the initial set $R$.
Let also $c \ge 1$ be a constant so that the $2^{\OO(k)}$ factor in \Cref{lem:balancingonestep:item2} of \Cref{lem:balancingonestep} is bounded by $2^{ck}$.
Let $\seq$ denote the sequence of basic rotations applied so far, transforming $\Tc_0$ into $\Tc$.
We will also maintain a set $R_{+} \subseteq V(T_0)$ consisting of all nodes of $\Tc_0$ ``touched'' by the algorithm during the process.
The set $R_{+}$ is not explicitly maintained by the algorithm, but only used for the analysis.
Initially $R_{+}$ equals $R_0$.
In addition to maintaining that $\Tc$ is $k$-semigood and $R$ contains all $2^{2k+1}$-unbalanced nodes of $T$, we will maintain that
\begin{enumerate}
\item $\pot(\Tc) \le \pot(\Tc_0)$,\label{lem:mainbalancinglemma:item1}
\item $\|\seq\| \le 2^{ck} \cdot (\pot(\Tc_0)-\pot(\Tc))$,\label{lem:mainbalancinglemma:item2}
\item $R_{+} = R_0 \cup \trace_{\Tc_0}(\seq)$, and\label{lem:mainbalancinglemma:item3}
\item $|R_{+}| \le |R_0|+2^{ck} \cdot (\pot(\Tc_0)-\pot(\Tc))$.\label{lem:mainbalancinglemma:item4}
\end{enumerate}

Once $R$ is empty or contains only the root node, the superbranch decomposition $\Tc$ is $k$-good and we can stop.
Until then, we repeat the following process.

Let $t$ be the top node of the stack representing $R$ (in particular, with $\desc(t) \cap R = \{t\}$ and $\anc(t) \subseteq R$).
We apply \Cref{lem:testunbalancedness} to in time $2^{\OO(k)}$ test whether $t$ is $2^{2k+1}$-balanced.
If $t$ is $2^{2k+1}$-balanced, we pop $t$ from $R$ and continue to the next iteration.
If $t$ is $2^{2k+1}$-unbalanced, we apply the algorithm of \Cref{lem:balancingonestep} to transform $\Tc$ into a superbranch decomposition $\Tc' = (T',\lmap')$ via a sequence $\seq^*$ of basic rotations so that $\Tc'$ is $k$-semigood, $\pot(\Tc') \le \pot(\Tc)-1$, $\|\seq^*\| \le 2^{ck}$, and $V_{\Tc}(\seq^*) \subseteq \desc(t)$ is a connected set in $T$ that contains $t$.
We let $\seq'$ to be the concatenation of $\seq$ and $\seq^*$, $R' = (R \setminus \{t\}) \cup V_{\Tc'}(\seq^*)$, and $R_{+}' = R_{+} \cup (V_{\Tc}(\seq^*) \cap V(T_0))$.

Because $V_{\Tc}(\seq^*)$ is a connected set in $T$, we have that $V_{\Tc'}(\seq^*)$ is a connected set in $T'$.
Furthermore, because $\{t\} \subseteq V_{\Tc}(\seq^*) \subseteq \desc(t)$, we have that the parent of the shallowest node in $V_{\Tc'}(\seq^*)$ is the parent of $t$, and therefore $R'$ is a prefix of $T'$.
Furthermore, the stack representing $R$ can be transformed to represent $R'$ by first popping $t$, and then doing a depth-first search exploring $V_{\Tc'}(\seq^*)$ starting at the shallowest node of $V_{\Tc'}(\seq^*)$.
This runs in $2^{\OO(k)} \cdot |V_{\Tc'}(\seq^*)| \le 2^{\OO(k)} \cdot \|\seq^*\| \le 2^{\OO(k)}$ time.
Note that if a node of $\Tc'$ is not in $R'$, then the subtree below it in $\Tc'$ is identical to the subtree below it in $\Tc$, and therefore $R'$ contains all $2^{2k+1}$-unbalanced nodes of $\Tc'$.

It remains to prove that $\Tc'$, $R'$, $\seq'$, and $R_{+}'$ satisfy the invariants of \Cref{lem:mainbalancinglemma:item1,lem:mainbalancinglemma:item2,lem:mainbalancinglemma:item3,lem:mainbalancinglemma:item4}.
\Cref{lem:mainbalancinglemma:item1} is obvious from $\pot(\Tc') \le \pot(\Tc)-1$.
\Cref{lem:mainbalancinglemma:item2} follows from $\|\seq^*\| \le 2^{ck}$ and $\pot(\Tc') \le \pot(\Tc)-1$.
\Cref{lem:mainbalancinglemma:item4} follows from the facts that $R'_{+} = R_{+} \cup (V_{\Tc}(\seq^*) \cap V(T_0))$, $|V_{\Tc}(\seq^*)| \le \|\seq^*\| \le 2^{ck}$, and $\pot(\Tc') \le \pot(\Tc)-1$.

For \Cref{lem:mainbalancinglemma:item3}, it is clear that $R_{+}' \subseteq R_0 \cup \trace_{\Tc_0}(\seq')$ because we constructed $R_{+}'$ from $R_{+}$ by adding $V_{\Tc}(\seq^*) \cap V(T_0) \subseteq \trace_{\Tc_0}(\seq')$.
Now, consider an ancestor $a \in \anc_{\Tc_0}(x)$ of a node $x \in V_{\Tc}(\seq^*) \cap V(T_0)$.
If $a \notin V(T)$, then $a \in \trace_{\Tc_0}(\seq)$, and therefore $a \in R_{+}$.
If $a \in V(T)$ and $a \in \desc(t)$, then $a \in V_{\Tc}(\seq^*) \cap V(T_0)$.
If $a \in V(T)$, $a \in \anc(t) \setminus \{t\}$, and $t \notin V(T_0)$, then $a \in \trace_{\Tc_0}(\seq)$ and therefore $a \in R_{+}$.
If $a \in V(T)$, $a \in \anc(t) \setminus \{t\}$, and $t \in V(T_0)$, then $t \in R_0$ and therefore $a \in R_0 \subseteq R_{+}$.
Therefore $R_0 \cup \trace_{\Tc_0}(\seq') \subseteq R_{+}'$.

This concludes the proof of the correctness of the algorithm.
To prove the running time, we observe that each iteration either (1) decreases $|R|$ by $1$ without increasing $\pot(\Tc)$, or (2) decreases $\pot(\Tc)$ by at least $1$ and increases $|R|$ by $2^{\OO(k)}$.
Therefore, the number of iterations is bounded by $|R_0| + 2^{\OO(k)} \cdot (\pot(\Tc_0)-\pot(\Tc_f))$, where $\Tc_f$ is the final superbranch decomposition.
As each iteration runs in time $2^{\OO(k)}$, the total running time is $2^{\OO(k)} \cdot (|R_0| + \pot(\Tc_0)-\pot(\Tc_f))$.
\end{proof}

\section{Inserting and deleting edges}
\label{sec:insdeledge}
We then give the methods for inserting and deleting edges to the data structure of \Cref{lem:mainintlemma}.
We start by giving a subroutine for rotating leaves of the superbranch decomposition towards the root, and then use it to implement edge insertions and deletions.

\subsection{Rotating hyperedges to the root}
\label{sec:rotroot}
The main subroutine for inserting and deleting edges will be rotating hyperedges of $\su{G}$ associated with the update to the root of the superbranch decomposition $\Tc$.
In particular, if an edge is added between vertices $u$ and $v$, then the singleton hyperedges $e_u$ and $e_v$ are rotated to the root, and if an edge $uv$ is deleted, then $e_u$, $e_v$, and $e_{uv}$ are rotated to the root.

The following lemma formally captures what ``rotating to the root'' means, and the rest of this subsection is dedicated to the proof of it.
We note that in its statement we have the seemingly arbitrary constraints $k \ge 3$, $|X| \le 3$, and $|V(X)| \le 2$.
The constraints $|X| \le 3$ and $|V(X)| \le 2$ come from the aforementioned use of the rotation operation, i.e., they are satisfied when $X = \{e_u, e_v\}$ or $X = \{e_u, e_v, e_{uv}\}$.
The constraint $k \ge 3$ is then a convenient way to ensure that $|X| + 2^{|V(X)|+k} \le 2^{2k}+1$.

\begin{lemma}
\label{lem:rotrootmain}
Let $G$ be a hypergraph, $e_{\bot} \in E(G)$, and $\Tc = (T,\lmap)$ be an $e_{\bot}$-rooted superbranch decomposition of $G$.
Suppose also that $k \ge 3$ is an integer so that $\Tc$ is $k$-good and $\wl(G) \le k$.
There is an algorithm that, given $k$ and a set of hyperedges $X \subseteq E(G)$ with $|X| \le 3$ and $|V(X)| \le 2$, transforms $\Tc$ into a $k$-semigood $e_\bot$-rooted superbranch decomposition $\Tc' = (T',\lmap')$ of $G$ via a sequence $\seq$ of basic rotations, so that
\begin{itemize}
\item for all $e \in X$, $\depth_{T'}(\lmap'^{-1}(e)) = 2$,
\item $\pot(\Tc') \le \pot(\Tc) + 2^{\OO(k)} \log \|G\|$, and
\item $\|\seq\|_{\Tc} \le 2^{\OO(k)} \log \|G\|$.
\end{itemize}
The running time of the algorithm is $2^{\OO(k)} \log \|G\|$, and it returns $\seq$.
\end{lemma}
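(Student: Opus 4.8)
The plan is to rotate the hyperedges of $X$ to depth~$2$ one at a time, always working on the one whose leaf is currently deepest, and to reduce the depth of a single target leaf by a \emph{local two-step move}: contract the edge between the parent and grandparent of the leaf, then (if the resulting node has degree exceeding $2^{2k}+1$) re-split it using \Cref{lem:rotatemain}, specifying as the protected children $X'$ the (at most~$3$) children of that node that lie on the paths from the leaves of $X$ to the root. Since $|X| \le 3$ and $|V(X)| \le 2$, the set of vertices appearing in the adhesions of these protected children has size at most $|V(X)| \le 2$, so \Cref{lem:rotatemain} is invoked with $\alpha \le 2$ and produces a subtree of degree at most $\max(3 + 2^{k+2},\, 1 + 2^{2\wl(G)}) \le 2^{2k}+1$ using $k \ge 3$; hence $k$-semigoodness (downwards well-linkedness plus the degree bound) is preserved, and each such move runs in $2^{\OO(k)}$ time and contributes $\|\seq^*\| \le 2^{\OO(k)}$ to the sequence size. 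Crucially, because the protected children are never pushed deeper and the contraction strictly decreases the depth of the leaf $\ell$ we are currently rotating while not increasing the depth of the other targets in $X$, each move decreases a potential-like progress measure — concretely the sum $\sum_{e \in X} \depth(\lmap^{-1}(e))$ — by at least~$1$.

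First I would set up the outer loop: while some $e \in X$ has $\depth(\lmap^{-1}(e)) > 2$, pick the one of maximum depth, apply one local two-step move to it, and append the produced sub-sequence to $\seq$. Because the starting decomposition is $k$-good, \Cref{lem:gooddepth} gives $\depth(T) \le 2^{\OO(k)}\log\|G\|$, so the progress measure starts at $2^{\OO(k)}\log\|G\|$ and the loop terminates after $2^{\OO(k)}\log\|G\|$ moves; multiplying by the per-move cost $2^{\OO(k)}$ gives total running time and $\|\seq\|$ both bounded by $2^{\OO(k)}\log\|G\|$. For $\|\seq\|_{\Tc}$ I also need to bound $|\trace_{\Tc}(\seq)|$: every node touched is an ancestor of some node that was on one of the $\le 3$ root-to-leaf paths at some point, and these paths have length $2^{\OO(k)}\log\|G\|$, so the trace has size $2^{\OO(k)}\log\|G\|$ as well — this needs to be checked carefully but follows because the contractions only merge nodes along these paths and the splits from \Cref{lem:rotatemain} keep $V_{\Tc}(\seq^*)$ a single already-traced node.

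The main obstacle I anticipate is the \textbf{potential bound} $\pot(\Tc') \le \pot(\Tc) + 2^{\OO(k)}\log\|G\|$. A naive accounting is hopeless: a single contraction-then-split can increase $\pot$ by $\Theta(\log\|G\|)$ at one node (a node with many children now has a larger $|\lmap[\cdot]|$ factor, or the split redistributes children unfavorably), and there are $2^{\OO(k)}\log\|G\|$ moves, which would give a quadratic bound. The fix — exactly as hinted in the overview — is a \emph{telescoping} argument over the $\le 3$ paths from the target leaves to the root. One shows that the contribution to $\pot$ from the nodes strictly between a target leaf and the root is, up to a $2^{\OO(k)}$ factor per "level", governed by $\sum_i \log(|\lmap[t_i]|)$ along the path, and this sum telescopes to $\OO(\log\|G\|)$ because the $|\lmap[t_i]|$ values along any root-to-leaf path form an increasing sequence bounded by $\|G\|$; each elementary move changes this telescoped quantity by only a bounded amount. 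So the plan is: (i) prove a per-move invariant that tracks $\pot(\Tc) + c_k \cdot (\text{weighted depth of the } X\text{-leaves})$ and is non-increasing, or (ii) directly bound the total potential increase by the sum over the final $\le 3$ paths of $2^{\OO(k)} \log\|G\|$ via telescoping. I would pursue (ii): fix the final decomposition $\Tc'$, bound $\pot(\Tc') - \pot(\Tc)$ by noting that the only nodes whose potential changed are those ever touched, all of which lie (in $\Tc$ or $\Tc'$) within bounded distance of one of the $\le 3$ root-leaf paths, and on each such path the total of the $\log(|\lmap[\cdot]|)$ terms telescopes to $\OO(\log\|G\|)$; the $(\ddeg-1)$ factors sum to the number of children hanging off the path, but those children's potentials are unchanged, so they do not enter. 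Getting this telescoping clean — in particular handling the $(\ddeg(t)-1)$ weights correctly when nodes are split, using the identity $\sum_{t \in X}(\ddeg(t)-1) = |\chd(X)| - 1$ for connected $X$ stated at the end of Section~\ref{sec:struct} — is where the real work lies.
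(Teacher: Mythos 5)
Your algorithm is essentially the paper's: iterate a contract-then-split move on the currently deepest $X$-leaf, protect the relevant children via \Cref{lem:rotatemain}, count iterations with a depth-based progress measure, and control $\pot$ by telescoping along the leaf-to-root paths. However, two steps do not go through as written. The first is the local move. You contract only the edge between the parent and the grandparent $g$ of the single target leaf and then protect the children of the merged node that lie on the paths from the $X$-leaves to the root, claiming their adhesions sit inside $V(X)$ so that $\alpha\le 2$. But if another leaf $\ell'$ of $X$ sits at the same depth below $g$ via a different parent $p'$ (e.g.\ $e_u$ and $e_v$ are equally deep grandchildren of $g$), then after your single contraction $\ell'$ is still a grandchild of the merged node, and the child you must protect is the internal node $p'$; its adhesion $\adh(p't)=\bd(\lmap[p'])$ can have size up to $\wl(G)=k$ and is unrelated to $V(X)$. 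Then $\alpha$ can be about $3k$, and \Cref{lem:rotatemain} only guarantees degree $|X|+2^{\wl(G)+\alpha}\approx 2^{4k}$, breaking the $2^{2k}+1$ bound and hence $k$-semigoodness. (Leaving $p'$ unprotected is not an option either: the split could push $\ell'$ deeper and your progress measure need not decrease.) The fix is the paper's: contract, in one move, \emph{all} parent--grandparent edges of the $X$-leaves that are grandchildren of $g$ (at most three contractions), so that every relevant $X$-leaf becomes a child of the merged node and only leaves, whose adhesions are contained in $V(X)$, need protecting.

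The second issue is the potential bound. Your route (ii) rests on the claim that $\sum_i\log(|\lmap[t_i]|)$ along a root-to-leaf path ``telescopes to $\OO(\log\|G\|)$'', which is false: a path of length $\Theta_k(\log\|G\|)$ all of whose nodes have $|\lmap[t_i]|=\Theta(\|G\|)$ gives a sum of order $\log^2\|G\|$. What telescopes is the sum of consecutive \emph{differences} $\log|\lmap[t_{i-1}]|-\log|\lmap[t_i]|$, and the clean way to exploit this is exactly your rejected route (i): assign each $e\in X$ the auxiliary potential $\depth_T(\lmap^{-1}(e))-\log(|\lmap[p]|)$, where $p$ is the parent of its leaf, and show that each move decreases the sum of these by at least $1$ while increasing $\pot$ by at most $2^{\OO(k)}$ times that decrease. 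Indeed, contracting the edges $pg$ raises $\pot$ by exactly $\sum_{p}(\ddeg(p)-1)\bigl(\log|\lmap[g]|-\log|\lmap[p]|\bigr)$, which is $2^{\OO(k)}$ times the corresponding drop of the auxiliary potential, and the subsequent split of \Cref{lem:rotatemain} only decreases $\pot$. Since the auxiliary potential starts at $2^{\OO(k)}\log\|G\|$ and is bounded below by $-\log\|G\|$, both the iteration count and the total increase of $\pot$ follow; this is precisely the paper's $\pott$ in \Cref{lem:rotatetowardsrootonestep}. With these two repairs your argument coincides with the paper's proof.
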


We move hyperedges in $X$ upwards in the superbranch decomposition one step at a time.
To track our progress, we define the potential of a hyperedge $e \in X$ as follows.
Let $p \in V(T)$ be the parent of $\lmap^{-1}(e)$ in $T$.
Then, we let
\[\pott_{\Tc}( e) = \depth_{T}( \lmap^{-1}(e)) - \log(|\lmap[p]|).\]
Note that $\pott_{\Tc}( e)$ can be negative, in particular, its minimum possible value is $2-\log(\|G\|-1)$, which is achieved when $\depth_{T}( \lmap^{-1}(e)) = 2$.
When $\Tc$ is $k$-good, \Cref{lem:gooddepth} implies that $\pott_{\Tc}( e)$ is bounded from above by $2^{\OO(k)} \log \|G\|$.
In particular, this holds for the initial superbranch decomposition $\Tc$.
We also denote
\[\pott_{\Tc}( X) = \sum_{e \in X} \pott_{\Tc}( e).\]

We say that a hyperedge $e \in X$ is \emph{rotatable} if 
\begin{enumerate}
\item $\depth_{T}( \lmap^{-1}(e)) \ge 3$, and
\item there is no $e' \in X$ so that $\lmap^{-1}(e')$ is a descendant of the grandparent $g$ of $\lmap^{-1}(e)$ and $\depth_{T}( \lmap^{-1}(e')) > \depth_{T}( \lmap^{-1}(e))$.
\end{enumerate}

We then give a subroutine for rotating a rotatable hyperedge in $X$ towards the root.

\begin{lemma}
\label{lem:rotatetowardsrootonestep}
There is an algorithm that, given a rotatable hyperedge $e \in X$, in time $2^{\OO(k)}$ transforms $\Tc$ into a $k$-semigood superbranch decomposition $\Tc' = (T',\lmap')$ via a sequence $\seq$ of basic rotations, so that
\begin{itemize}
\item $\pott_{\Tc'}( X) \le \pott_{\Tc}( X)-1$,
\item $\pot(\Tc') \le \pot(\Tc) + 2^{\OO(k)} \cdot (\pott_{\Tc}( X) - \pott_{\Tc'}( X))$,
\item $\|\seq\| \le 2^{\OO(k)}$, and
\item $V_{\Tc}(\seq) \subseteq \anc_T(\lmap^{-1}(X))$.
\end{itemize}
\end{lemma}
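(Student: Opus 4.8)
The plan is to realize the rotation of $e$ by a sequence $\seq$ consisting of a constant number of contraction rotations followed by one application of \Cref{lem:rotatemain}. Write $\ell = \lmap^{-1}(e)$ and let $p$ and $g$ be its parent and grandparent; these exist because $e$ is rotatable, so $\depth_T(\ell) \ge 3$, and $g$ is an internal non‑root node. Let $D \subseteq \leaves(T)$ be the set of leaves $\lmap^{-1}(e')$ with $e' \in X$ that are descendants of $g$. Since $g$ is internal, every element of $D$ has depth at least $\depth_T(g)+1$, and by the rotatability of $e$ (condition~2) none of them has depth exceeding $\depth_T(\ell) = \depth_T(g)+2$; hence each element of $D$ is either a child of $g$ or a child of a child of $g$, and $\ell \in D$. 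Let $C$ be the set of children $c$ of $g$ that are the parent of some element of $D$ of depth $\depth_T(g)+2$; then $p \in C$, every $c \in C$ is internal, and $|C| \le |D| \le |X| \le 3$. All of this is read off from the parent pointers of $\lmap^{-1}(X)$ in time $2^{\OO(k)}$.

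The construction: first contract the edges $cg$ for all $c \in C$, one after another, and let $g'$ denote the node resulting from merging $\{g\}\cup C$. Contracting a tree edge straightforwardly preserves downwards well-linkedness and keeps every internal node of tree-degree $\ge 3$, so the intermediate decomposition is a downwards well-linked superbranch decomposition with $\lmap[g'] = \lmap[g]$, $D \subseteq \chd(g')$, and $\ddeg(g') \le (|C|+1)(2^{2k}+1) = 2^{\OO(k)}$. Then apply \Cref{lem:rotatemain} to $g'$ with the special-children set $D$: since each $x \in D$ is a leaf, $\adh(xg') \subseteq V(\lmap(x))$ with $\lmap(x) \in X$, so $\alpha := |\bigcup_{x \in D}\adh(xg')| \le |V(X)| \le 2$. \Cref{lem:rotatemain} returns a sequence $\seq^*$ transforming the intermediate decomposition into a downwards well-linked $\Tc' = (T',\lmap')$ in which every element of $D$ is a child of the shallowest node of $V_{\Tc'}(\seq^*)$, which sits at depth $\depth_T(g)$, and in which $\ddeg_{T'}(V_{\Tc'}(\seq^*)) \le \max(|D|+2^{\wl(G)+\alpha},\, 1+2^{2\wl(G)}) \le \max(3+2^{k+2},\, 2^{2k}+1) = 2^{2k}+1$, using $k \ge 3$. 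Outside $V_{\Tc'}(\seq^*)$ nothing changed, so $\ddeg(T') \le 2^{2k}+1$ and $\Tc'$ is $k$-semigood. Taking $\seq$ to be the $|C|$ contractions followed by $\seq^*$, we get $\|\seq\| \le 2^{\OO(k)}$, $V_\Tc(\seq) = \{g\}\cup C \subseteq \anc_T(\lmap^{-1}(X))$, and total running time $2^{\OO(k)}$.

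Now the potential bookkeeping. Every $X$-leaf not below $g$ is untouched (all edits stay inside the subtree rooted at $g$), and each $X$-leaf of depth $\depth_T(g)+1$ keeps its depth and the leaf-count of its parent — which equals $|\lmap[g]|$ both before and after — so for all these $\pott$ is unchanged. For $e' \in X$ with $x = \lmap^{-1}(e')$ of depth $\depth_T(g)+2$ and parent $c_x \in C$, after the edits $x$ has depth $\depth_T(g)+1$ and a parent with leaf-count $|\lmap[g]| \ge |\lmap[c_x]|$, so $\pott_{\Tc}(e') - \pott_{\Tc'}(e') = 1 + \log(|\lmap[g]|/|\lmap[c_x]|) \ge 1$. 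Since $\ell \in D$ has depth $\depth_T(g)+2$, summing over $X$ gives $\pott_{\Tc}(X) - \pott_{\Tc'}(X) \ge 1$ and also $\pott_{\Tc}(X) - \pott_{\Tc'}(X) \ge \sum_{c \in C}\log(|\lmap[g]|/|\lmap[c]|)$. For $\pot$: the call to \Cref{lem:rotatemain} does not increase it, because $V_{\Tc'}(\seq^*)$ is a connected set of internal nodes with $\sum_{y \in V_{\Tc'}(\seq^*)}(\ddeg(y)-1) = \ddeg(g')-1$ and every $\lmap'[y] \subseteq \lmap[g']$, hence $\pot_{\Tc'}(V_{\Tc'}(\seq^*)) \le (\ddeg(g')-1)\log|\lmap[g]|$, which is exactly the potential of $g'$ right after the contractions. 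Each contraction of $cg$ replaces $c$ and $g$ by a node of potential larger than $\pot_{\Tc}(c)+\pot_{\Tc}(g)$ by exactly $(\ddeg(c)-1)\log(|\lmap[g]|/|\lmap[c]|)$, with $\ddeg(c) \le 2^{2k}+1$; summing over $C$, $\pot(\Tc') - \pot(\Tc) \le \sum_{c\in C}(\ddeg(c)-1)\log(|\lmap[g]|/|\lmap[c]|) \le 2^{2k}\,(\pott_{\Tc}(X)-\pott_{\Tc'}(X))$, as required.

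The one delicate point is this interplay. One cannot afford to contract only $pg$ and then protect the remaining deep $X$-leaves by naming their parents as special children in the split: those parents' adhesions may have size $\wl(G)$, which would push $\alpha$, and with it the degree guarantee of \Cref{lem:rotatemain}, past $2^{2k}+1$. Contracting all of $C$ first makes every deep $X$-leaf a direct child of $g'$, so the special children are leaves whose adhesions lie inside $V(X)$; and although a single such contraction can increase $\pot$ by as much as $2^{\OO(k)}\log\|G\|$, this is paid for precisely by the $\log(|\lmap[g]|/|\lmap[c]|)$ drop it forces on the hyperedge-potentials $\pott$ of the $X$-leaves sitting below $c$ — which is the reason the lemma's $\pot$-bound is phrased relative to the $\pott$-decrease rather than as an absolute $2^{\OO(k)}$.
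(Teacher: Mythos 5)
Your proposal is correct and follows essentially the same route as the paper: contract the edges between $g$ and the parents of the deep $X$-leaves, apply \Cref{lem:rotatemain} to the merged node with the $X$-leaves below it as the special children (using $|V(X)|\le 2$ to bound $\alpha$ and $k\ge 3$ for the degree bound), and charge the $\pot$-increase of the contractions against the $\log(|\lmap[g]|/|\lmap[c]|)$ drop in $\pott$. The bookkeeping matches the paper's claims step for step, so there is nothing to add.
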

\begin{proof}
Denote the grandparent of $\lmap^{-1}(e)$ by $g$, and note that it exists and is an internal node because $\depth_{T}( \lmap^{-1}(e)) \ge 3$.
Denote also by $X' \subseteq X$ the set of hyperedges $e' \in X$ so that $\lmap^{-1}(e')$ is a descendant of $g$.
The fact that $e$ is rotatable implies that for all $e' \in X'$, $g$ is either the parent or the grandparent of $\lmap^{-1}(e')$.
Denote by $X'' \subseteq X'$ the hyperedges $e' \in X'$ so that $g$ is the grandparent of $\lmap^{-1}(e')$, and denote the set the parents of such leaves $\lmap^{-1}(e')$ by $P''$.
In particular, each $p \in P''$ is a child of $g$.

We construct a superbranch decomposition $\Tc' = (T',\lmap')$ from $\Tc$ by successively contracting each edge $pg$, where $p \in P''$, with the contraction basic rotation.
As $|P''| \le |X| \le 3$, this can be done in $2^{\OO(k)}$ time.
Let us denote the resulting node of $T'$ by $t$.
We have that $\Tc'$ is downwards well-linked, and all nodes of $T'$ except $t$ have at most $2^{2k}+1$ children, while $t$ has at most $4 \cdot 2^{2k}+1$ children.
Furthermore, $t$ is the parent of each leaf $\lmap'^{-1}(e')$ so that $e' \in X'$, and there are no $e' \in X \setminus X'$ so that $\lmap'^{-1}(e')$ is a descendant of $t$.

Before transforming $\Tc'$ further, let us prove that it satisfies the desired changes in the potential functions $\pott$ and $\pot$.
We will afterwards transform $\Tc'$ further into $\Tc''$, with that step only ``improving'' the potentials.

We first check that the $\pott_{\Tc}( X)$ potential decreases.

\begin{claim}
$\pott_{\Tc'}(e') \le \pott_{\Tc}(e')$ for all $e' \in X$ and $\pott_{\Tc'}(e) \le \pott_{\Tc}(e)-1$. In particular, $\pott_{\Tc'}( X) \le \pott_{\Tc}( X)-1$.
\end{claim}
\begin{claimproof}
We first observe that contracting an edge does not increase the depth of any leaf.
Also, contracting an edge does not decrease the quantity $\log(|\lmap[p]|)$ for any parent $p$ of a leaf.
Therefore, $\pott_{\Tc'}(e') \le \pott_{\Tc}(e')$ for all $e' \in X$.
We then observe that the contraction of $pg$, where $p$ is the parent of $\lmap^{-1}(e)$, decreased the depth of the parent of $\lmap^{-1}(e)$ by one, and therefore $\pott_{\Tc'}(e) \le \pott_{\Tc}(e)-1$.
\end{claimproof}

Then, we bound the increase in the potential $\pot$.

\begin{claim}
$\pot(\Tc') \le \pot(\Tc) + 2^{\OO(k)} \cdot (\pott_{\Tc}(X)-\pott_{\Tc'}(X))$.
\end{claim}
\begin{claimproof}
We first observe that 
\begin{align*}
\pot(\Tc') &= \pot(\Tc) + \pot_{\Tc'}( t) - \pot_{\Tc}(g) - \sum_{p \in P''} \pot_{\Tc}( p)\\
&= \pot(\Tc) + (\ddeg_{T}(g)-1+\sum_{p \in P''} (\ddeg_{T}(p)-1)) \cdot \log(|\lmap[g]|)  - \pot_{\Tc}(g) - \sum_{p \in P''} \pot_{\Tc}( p) \\
&= \pot(\Tc) + \log(|\lmap[g]|) \cdot \sum_{p \in P''} (\ddeg_{T}(p)-1) - \sum_{p \in P''} \pot_{\Tc}( p)\\
&= \pot(\Tc) + \sum_{p \in P''} (\ddeg_{T}(p)-1) \cdot (\log(|\lmap[g]|)-\log(|\lmap[p]|))\\
&\le \pot(\Tc) + 2^{\OO(k)} \cdot \sum_{p \in P''} (\log(|\lmap[g]|)-\log(|\lmap[p]|)).
\end{align*}

Then, we observe that when $p$ is the parent of $\lmap^{-1}(e')$, where $e' \in X''$,
\begin{align*}
\pott_{\Tc}(e') - \pott_{\Tc'}(e') &= \depth_{T}( p)+1-\log(|\lmap[p]|) - (\depth_{T}( p)-\log(|\lmap[g]|))\\
&= 1 + \log(|\lmap[g]|) - \log(|\lmap[p]|),
\end{align*}
implying 
\[\pott_{\Tc}( X) - \pott_{\Tc'}( X) \ge \sum_{p \in P''} (\log(|\lmap[g]|)-\log(|\lmap[p]|)),\]
finishing the claim.
\end{claimproof}

Recall that $X' \subseteq X$ is now the set of hyperedges $e' \in X$ so that $\lmap'^{-1}(e')$ is a child of $t$ in $T'$, and $X'$ contains all hyperedges $e' \in X$ so that $\lmap'^{-1}(e')$ is a descendant of $t$ in $T'$.

We then apply the operation of \Cref{lem:rotatemain} with the node $t$ and the set of children $\lmap'^{-1}(X')$.
Note that by the promise that $|V(X)| \le 2$, we have that $|\bigcup_{\ell \in \lmap'^{-1}(X')} \adh(\ell t)| \le 2$.
The operation runs in time $2^{\OO(k)}$, and transforms $\Tc'$, via a sequence $\seq^*$ of basic rotations, to a superbranch decomposition $\Tc'' = (T'', \lmap'')$ so that
\begin{enumerate}
\item $\Tc''$ is downwards well-linked,\label{lem:rotatetowardsrootonestep:enum1:item1}
\item $V_{\Tc'}(\seq^*) = \{t\}$,\label{lem:rotatetowardsrootonestep:enum1:item2}
\item for each $e' \in X'$, $\lmap^{-1}(e')$ is a child of the shallowest node in $V_{\Tc''}(\seq^*)$ in $\Tc''$, and\label{lem:rotatetowardsrootonestep:enum1:item3}
\item $\ddeg_{\Tc''}(V_{\Tc''}(\seq^*)) \le \max(3 + 2^{2+\wl(G)}, 1+2^{2 \wl(G)}) \le 2^{2k}+1$.\label{lem:rotatetowardsrootonestep:enum1:item4} (here we use that $k \ge 3$)
\end{enumerate}

The fact that all other nodes of $\Tc'$ than $t$ had at most $2^{2k}+1$ children implies with \Cref{lem:rotatetowardsrootonestep:enum1:item2,lem:rotatetowardsrootonestep:enum1:item4} that all nodes of $\Tc''$ have at most $2^{2k}+1$ children, which implies with \Cref{lem:rotatetowardsrootonestep:enum1:item1} that $\Tc''$ is $k$-semigood.

We then consider the potential functions.
\begin{claim}
\label{lem:rotatetowardsrootonestep:claim4}
$\pott_{\Tc''}( X) = \pott_{\Tc'}( X)$.
\end{claim}
\begin{claimproof}
Let $t' \in V_{\Tc''}(\seq^*)$ be the shallowest node in $V_{\Tc''}(\seq^*)$.
For all $e' \in X'$, $t'$ is the parent of $\lmap''^{-1}(e')$, and we have that $\depth_{T''}(t') = \depth_{T'}(t)$ and $\lmap''[t'] = \lmap'[t]$, implying $\pott_{\Tc''}( e') = \pott_{\Tc'}( e')$.
For all $e' \in X \setminus X'$, $t$ is not an ancestor of $\lmap'^{-1}(e')$ in $T'$, and therefore \Cref{lem:rotatetowardsrootonestep:enum1:item2} implies that $\pott_{\Tc''}(e') = \pott_{\Tc'}(e')$.
\end{claimproof}

\begin{claim}
\label{lem:rotatetowardsrootonestep:claim5}
$\pot(\Tc'') \le \pot(\Tc')$.
\end{claim}
\begin{claimproof}
This follows from the facts that (1) $\ddeg_{T'}(t)-1 = \sum_{t' \in V_{\Tc''}(\seq^*)} (\ddeg_{T''}(t')-1)$ and (2) $|\lmap'[t]| \ge |\lmap''[t']|$ for all $t' \in V_{\Tc''}(\seq^*)$.
\end{claimproof}

\Cref{lem:rotatetowardsrootonestep:claim4,lem:rotatetowardsrootonestep:claim5} complete the proof that the resulting superbranch decomposition $\Tc''$ satisfies the required properties.
The sequence $\seq$ is constructed by prepending the basic rotation contracting $pg$ to the sequence $\seq^*$.
Clearly, $\|\seq\| \le 2^{\OO(k)} + \|\seq^*\| \le 2^{\OO(k)}$.
Furthermore, we can see that $V_{\Tc}(\seq) \subseteq \anc_T(\lmap^{-1}(X))$.
\end{proof}

We then finish the proof of \Cref{lem:rotrootmain} by giving an algorithm that repeatedly applies the operation of \Cref{lem:rotatetowardsrootonestep}.

\begin{proof}[Proof of \Cref{lem:rotrootmain}.]
We apply the algorithm of \Cref{lem:rotatetowardsrootonestep} repeatedly with a rotatable hyperedge $e \in X$, as long as there are such rotatable hyperedges.
Denote by $\Tc' = (T',\lmap')$ the resulting superbranch decomposition and by $\seq$ the sequence of basic rotations formed by concatenating the sequences outputted by \Cref{lem:rotatetowardsrootonestep}.

After there are no more rotatable hyperedges in $X$, $\depth_{T'}(\lmap'^{-1}(e)) = 2$ holds for all $e \in X$.
Furthermore, from the facts that the minimum value of $\pott_{\Tc}( X)$ is $2-\log(\|G\|-1)$, the maximum value of $\pott_{\Tc}( X)$ is $2^{\OO(k)} \log \|G\|$, and the guarantees on the changes of $\pott$ and $\pot$ given by \Cref{lem:rotatetowardsrootonestep}, we deduce that there are at most $2^{\OO(k)} \log \|G\|$ iterations in this process and $\pot(\Tc') \le \pot(\Tc) + 2^{\OO(k)} \log \|G\|$.
This also implies that $\|\seq\| \le 2^{\OO(k)} \log \|G\|$.

To argue that the total running time of the process is $2^{\OO(k)} \log \|G\|$, it suffices to argue that in each iteration we can find a rotatable hyperedge in $X$, if one exists, efficiently.
For a given hyperedge $e \in X$, we can in $\OO(1)$ time check if $\depth_{T}(\lmap^{-1}(e)) \ge 3$ by using the pointers stored in the representation of $\Tc$.
We can also in $\OO(1)$ time find the quantity $|\lmap[g]|$, where $g$ is the grandparent of $\lmap^{-1}(e)$.
We observe that the hyperedge in $X$ that minimizes $|\lmap[g]|$ is rotatable, if any hyperedge in $X$ is rotatable, so we simply pick such a hyperedge.
This runs in total time $\OO(1)$ as $|X| \le 3$.

Finally, we must argue that $|\trace_{\Tc}(\seq)| \le 2^{\OO(k)} \log \|G\|$, which then also implies $\|\seq\|_{\Tc} \le 2^{\OO(k)} \log \|G\|$.
For this, we use the following fact.
If $\seq_i$ is the sequence of basic rotations corresponding to the $i$-th application of \Cref{lem:rotatetowardsrootonestep}, then $V_{\Tc}(\seq_i) \subseteq \anc_{T_i}(\lmap_i^{-1}(X))$, where $\Tc_i = (T_i, \lmap_i)$ is the superbranch decomposition before the $i$-th application
This implies that $\anc_{T_{i+1}}(\lmap_{i+1}^{-1}(X)) \cap V(T) \subseteq \anc_{T_i}(\lmap_i^{-1}(X)) \cap V(T)$, i.e., the ancestors of $\lmap_{i+1}^{-1}(X)$ in the updated decomposition $\Tc_{i+1}$ that are also nodes of the initial decomposition $\Tc$ are a subset of those that are ancestors of $\lmap_{i}^{-1}(X)$ in $\Tc_i$.
Therefore, $\trace_{\Tc}(\seq) \subseteq \anc_{T}(\lmap^{-1}(X))$, which by the fact that $\Tc$ is $k$-good and \Cref{lem:gooddepth} implies that $|\trace_{\Tc}(\seq)| \le 2^{\OO(k)} \log \|G\|$.
\end{proof}

\subsection{Inserting edges}
We then give the subroutine for inserting an edge.

\begin{lemma}
\label{lem:insedge}
Let $G$ be a graph and $\Tc = (T,\lmap)$ an $e_{\bot}$-rooted superbranch decomposition of $\su{G}$.
Suppose also that $k \ge 3$ is an integer so that $\Tc$ is $k$-good and $\wl(\su{G}) \le k$.
There is an algorithm that takes as input $k$ and a new edge $uv \in \binom{V(G)}{2} \setminus E(G)$.
It assumes that $\wl(\su{G'}) \le k$, where $G'$ denotes the graph $G$ with $uv$ added.
It transforms $\Tc$ into a $k$-semigood superbranch decomposition $\Tc' = (T',\lmap')$ of $\su{G'}$ with a sequence $\seq$ of basic rotations, so that
\begin{itemize}
\item $\pot(\Tc') \le \pot(\Tc) + 2^{\OO(k)} \log \|G\|$ and
\item $\|\seq\|_{\Tc} \le 2^{\OO(k)} \log \|G\|$.
\end{itemize}
It runs in time $2^{\OO(k)} \log \|G\|$ and returns $\seq$.
\end{lemma}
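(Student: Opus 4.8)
The plan is to follow the recipe sketched in \Cref{sec:overview}: first rotate the hyperedges $e_u$ and $e_v$ of $\su{G}$ up so that they become children of the unique depth-$1$ node $r_1$, then insert a fresh leaf for the new hyperedge $e_{uv}$ as a sibling of $\lmap^{-1}(e_u)$ and $\lmap^{-1}(e_v)$, and finally repair the degree bound at $r_1$ if it got violated. To begin, I would apply \Cref{lem:rotrootmain} with the set $X = \{e_u, e_v\}$; this is permitted since $|X| = 2 \le 3$, $|V(X)| = |\{u,v\}| \le 2$, $k \ge 3$, and $\Tc$ is $k$-good with $\wl(\su{G}) \le k$. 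It returns a $k$-semigood $e_\bot$-rooted superbranch decomposition $\Tc_1 = (T_1,\lmap_1)$ of $\su{G}$, reached via a sequence $\seq_1$ with $\pot(\Tc_1) \le \pot(\Tc) + 2^{\OO(k)}\log\|G\|$ and $\|\seq_1\|_{\Tc} \le 2^{\OO(k)}\log\|G\|$, in time $2^{\OO(k)}\log\|G\|$, and with $\depth_{T_1}(\lmap_1^{-1}(e_u)) = \depth_{T_1}(\lmap_1^{-1}(e_v)) = 2$. Since the root of $T_1$ is the leaf $\lmap_1^{-1}(e_\bot)$ and $|E(\su{G})| \ge 3$ (as $u,v \in V(G)$), this root has a single child $r_1$, which has degree $\ge 3$, is therefore internal, and is the common parent of $\lmap_1^{-1}(e_u)$ and $\lmap_1^{-1}(e_v)$.

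Next I would add to $\su{G}$ the hyperedge $e_{uv}$ with $V(e_{uv}) = \{u,v\}$, turning $\su{G}$ into $\su{G'}$, and apply the leaf-insertion basic rotation at $r_1$ with $X = \{u,v\}$; this is valid because $\{u,v\} = V(e_u) \cup V(e_v) \subseteq V(\lmap_1(\cl(r_1)))$. Call the result $\Tc_2$. One readily checks it is a superbranch decomposition of $\su{G'}$, and it is still downwards well-linked: the only sets $\lmap[t]$ that change are $\lmap_2[\mathrm{root}] = E(\su{G'})$, which is always well-linked, and $\lmap_2[r_1] = \lmap_1[r_1] \cup \{e_{uv}\}$, whose boundary is $V(\lmap_2[r_1]) \cap V(e_\bot) = \emptyset$ (as $r_1$ is the child of the root), so it is trivially well-linked; and for every node $t$ with $e_{uv} \notin \lmap_1[t]$ we have $\{u,v\} \subseteq V(\co{\lmap_1[t]})$ (because $e_u, e_v \notin \lmap_1[t]$), so adjoining $e_{uv}$ to the hypergraph neither changes $\bd(\lmap_1[t])$ nor decreases the boundary of any bipartition of $\lmap_1[t]$, and hence $\lmap_1[t]$ stays well-linked in $\su{G'}$. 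The potential of $r_1$ grows by at most $\ddeg_{T_1}(r_1)\cdot\log(|\lmap_1[r_1]|+1) \le 2^{\OO(k)}\log\|G\|$, no other internal node's potential changes, and the rotation runs in $2^{\OO(k)}$ time since $|\anc_{T_1}(r_1)| = 2$ and $\|\torso(r_1)\| \le 2^{\OO(k)}$.

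At this point $\ddeg_{T_2}(r_1) \le 2^{2k}+2$. If $\ddeg_{T_2}(r_1) \le 2^{2k}+1$, then $\Tc_2$ is already $k$-semigood and I take $\Tc'=\Tc_2$ and $\seq$ to be $\seq_1$ followed by the leaf insertion. Otherwise I apply \Cref{lem:rotatemain} to $r_1$ with the set of children $X' = \{\lmap_1^{-1}(e_u),\lmap_1^{-1}(e_v),\ell_{uv}\}$, for which $|\bigcup_{x\in X'}\adh(xr_1)| = |\{u,v\}| = 2$; using $\wl(\su{G'}) \le k$ and $k \ge 3$, this replaces $r_1$ by a connected set of internal nodes each of degree at most $\max(3+2^{2+k},\,1+2^{2k}) \le 2^{2k}+1$, via a sequence $\seq_3$ with $\|\seq_3\| \le 2^{\OO(k)}$ and running time $2^{\OO(k)}$, while keeping the decomposition downwards well-linked; since every other node of $T_2$ had degree at most $2^{2k}+1$, the resulting $\Tc'$ is $k$-semigood. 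This step does not increase $\pot$, by the same telescoping as in the proof of \Cref{lem:rotatetowardsrootonestep}: over the connected replacement set the values $\ddeg(\cdot)-1$ sum to $\ddeg_{T_2}(r_1)-1$, and each replacement node has at most $|\lmap_2[r_1]|$ leaf-descendants. Concatenating $\seq_1$, the leaf insertion, and $\seq_3$ yields the desired $\seq$; the potential bound is the sum of the three contributions, and since everything after $\seq_1$ touches only $r_1$ and nodes derived from it, $\trace_{\Tc}(\seq)$ is contained in $\trace_{\Tc}(\seq_1)$ together with $\anc_T(r_1)$, which has size $2^{\OO(k)}\log\|G\|$ because $\Tc$ is $k$-good (\Cref{lem:gooddepth}); hence $\|\seq\|_\Tc \le 2^{\OO(k)}\log\|G\|$, and the total running time is $2^{\OO(k)}\log\|G\|$.

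The step I expect to require the most care is the verification in the second paragraph that downwards well-linkedness survives the leaf insertion — in essence the monotonicity fact that adjoining a hyperedge to a hypergraph only enlarges the boundaries of the already-present hyperedge sets, invoked both for $\lmap_1[r_1]$ and for every untouched subtree — together with the arithmetic $|X'| + 2^{\alpha+k} \le 2^{2k}+1$ for $\alpha \le 2$ and $k \ge 3$ that lets \Cref{lem:rotatemain} restore the degree bound in a single application without setting off a cascade of splits.
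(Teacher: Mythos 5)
Your proposal is correct and follows essentially the same route as the paper: rotate $e_u,e_v$ to depth $2$ via \Cref{lem:rotrootmain}, insert the leaf for $e_{uv}$ under the depth-$1$ node, verify downwards well-linkedness by noting that boundaries of untouched subtrees are unchanged, and restore the degree bound at that node with one application of \Cref{lem:rotatemain}. The only (immaterial) differences are that the paper invokes \Cref{lem:rotatemain} with an empty child set rather than your $X'$, and that your well-linkedness argument should explicitly set aside the leaf nodes $\lmap_1^{-1}(e_u),\lmap_1^{-1}(e_v)$ (for which $e_u,e_v\in\lmap_1[t]$ but $\lmap[t]$ is a singleton, hence trivially well-linked), as the paper does.
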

\begin{proof}
We first apply the algorithm of \Cref{lem:rotrootmain} to move the hyperedges $e_u$ and $e_v$ to the root.
In particular, by applying it with the set $X = \{e_u, e_v\}$, we obtain in time $2^{\OO(k)} \log \|G\|$ a $k$-semigood $e_{\bot}$-rooted superbranch decomposition $\Tc_1 = (T_1, \lmap_1)$ via a sequence $\seq_1$ of basic rotations, so that,
\begin{itemize}
\item for both $e \in \{e_u, e_v\}$, $\depth_{T_1}(\lmap_1^{-1}(e)) = 2$,
\item $\pot(\Tc_1) \le \pot(\Tc) + 2^{\OO(k)} \log \|G\|$, and
\item $\|\seq_1\|_{\Tc} \le 2^{\OO(k)} \log \|G\|$.
\end{itemize}

Let $r \in \vint(T_1)$ be the child of the root of $T_1$, i.e., the unique node having depth $1$.
Both $\lmap_1^{-1}(e_u)$ and $\lmap_1^{-1}(e_v)$ are children of $r$.
We transform $\Tc_1$ into a superbranch decomposition $\Tc_2 = (T_2, \lmap_2)$ of $G'$ by inserting a new leaf $\lmap_2^{-1}(e_{uv})$ corresponding to $e_{uv}$ as a child of $r$ with the leaf insertion basic rotation in $2^{\OO(k)}$ time.

\begin{claim}
$\Tc_2$ is downwards well-linked.
\end{claim}
\begin{claimproof}
Let $(\lmap_2(\vec{tp}), \lmap_2(\vec{pt}))$ be a separation of $\Tc_2$, where $p$ is the parent of $t$.
If $p$ is the root, then $\bd(\lmap_2(\vec{tp})) = \emptyset$ so $\lmap_2(\vec{tp})$ is trivially well-linked.
Also, if $t$ is a leaf, then $\lmap_2(\vec{tp})$ is also trivially well-linked.
It remains to consider the case where $p$ is a descendant of $r$ and $t$ is not a leaf.
In this case, we have that $\lmap_2(\vec{tp}) = \lmap_1(\vec{tp})$.
The facts that none of $\lmap_2^{-1}(e_u)$, $\lmap_2^{-1}(e_v)$, or $\lmap_2^{-1}(e_{uv})$ are descendants of $t$ imply also that $V(\lmap_2(\vec{pt})) = V(\lmap_1(\vec{pt}))$, implying that $\bdc_{G'}(Y) = \bdc_G(Y)$ for all $Y \subseteq \lmap_2(\vec{tp})$, implying that $\lmap_2(\vec{tp})$ is well-linked in $G'$ because $\lmap_1(\vec{tp})$ is well-linked in $G$.
\end{claimproof}

We observe that all nodes of $T_2$ except $r$ have the same number of children as they had in $T_1$.
In particular, the only reason why $\Tc_2$ is not $k$-semigood is that $r$ might have more than $2^{2k}+1$ children.
We also observe that $\pot(\Tc_2) \le \pot(\Tc_1)+2^{\OO(k)} \log \|G\|$, as the insertion of the new leaf increased $|\lmap_2(r)|$ and $\ddeg_{T_2}(r)$ by one (compared to $\Tc_1$), but did not change these for any other internal nodes.

We then apply the rotation of \Cref{lem:rotatemain} with the node $r$ and an empty set of children of $r$ to transform $\Tc_2$ into a superbranch decomposition $\Tc'$ via a sequence $\seq'$ of basic rotations so that $\Tc'$ is downwards well-linked, $V_{\Tc_2}(\seq') = \{r\}$, and $\ddeg_{T'}(V_{\Tc'}(\seq')) \le 2^{2 \wl(G')} + 1 \le 2^{2k} + 1$, implying that $\Tc'$ is $k$-semigood.
The running time of this is $2^{\OO(k)}$ and we also have that $\|\seq'\|_{\Tc_2} \le 2^{\OO(k)}$.

It holds that $\pot(\Tc') \le \pot(\Tc_2) + \pot_{\Tc'}(V_{\Tc'}(\seq'))$.
By the fact that $V_{\Tc_2}(\seq') = \{r\}$, we have that $\pot_{\Tc'}(V_{\Tc'}(\seq')) \le 2^{\OO(k)} \log \|G\|$, and therefore $\pot(\Tc') \le \pot(\Tc_2) + 2^{\OO(k)} \log \|G\|$.
We construct the sequence $\seq$ by prepending $\seq_1$ and the leaf insertion basic rotation to $\seq'$.
We have that $\|\seq\|_{\Tc} \le \|\seq_1\|_{\Tc} + \|\seq'\|_{\Tc_2}+2^{\OO(k)} \le 2^{\OO(k)} \log \|G\|$.
The algorithm runs in total $2^{\OO(k)} \log \|G\|$ time.
\end{proof}

\subsection{Deleting edges}
We then give the subroutine for deleting an edge, which is similar to the subroutine for inserting an edge.

\begin{lemma}
\label{lem:deledge}
Let $G$ be graph and $\Tc = (T,\lmap)$ an $e_{\bot}$-rooted superbranch decomposition of $\su{G}$.
Suppose also that $k \ge 3$ is an integer so that $\Tc$ is $k$-good and $\wl(\su{G}) \le k$.
There is an algorithm that takes as input $k$ and an edge $uv \in E(G)$.
It assumes that $\wl(\su{G'}) \le k$, where $G'$ denotes the graph $G$ with $uv$ deleted.
It transforms $\Tc$ into a $k$-semigood superbranch decomposition $\Tc' = (T',\lmap')$ of $\su{G'}$ with a sequence $\seq$ of basic rotations, so that 
\begin{itemize}
\item $\pot(\Tc') \le \pot(\Tc) + 2^{\OO(k)} \log \|G\|$ and
\item $\|\seq\|_{\Tc} \le 2^{\OO(k)} \log \|G\|$.
\end{itemize}
It runs in time $2^{\OO(k)} \log \|G\|$ and returns $\seq$.
\end{lemma}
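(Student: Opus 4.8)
The plan is to follow the proof of \Cref{lem:insedge} almost verbatim, replacing the leaf insertion by a leaf deletion. First I would invoke \Cref{lem:rotrootmain} with the set of hyperedges $X = \{e_u, e_v, e_{uv}\}$: this is allowed because $e_u, e_v, e_{uv}$ are three pairwise distinct hyperedges of $\su{G}$ (distinctness uses $u \neq v$), so $|X| = 3$ and $|V(X)| = |\{u,v\}| = 2$, and we have $k \ge 3$, $\Tc$ is $k$-good, and $\wl(\su{G}) \le k$. This produces in time $2^{\OO(k)} \log \|G\|$ a $k$-semigood $e_{\bot}$-rooted superbranch decomposition $\Tc_1 = (T_1, \lmap_1)$ of $\su{G}$, together with a sequence $\seq_1$ of basic rotations with $\|\seq_1\|_{\Tc} \le 2^{\OO(k)} \log \|G\|$, such that $\depth_{T_1}(\lmap_1^{-1}(e)) = 2$ for every $e \in X$ and $\pot(\Tc_1) \le \pot(\Tc) + 2^{\OO(k)} \log \|G\|$.

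Next, let $r \in \vint(T_1)$ be the unique node of depth $1$; it is unique because the root $\lmap_1^{-1}(e_{\bot})$ is a leaf and hence has exactly one child. Since $\lmap_1^{-1}(e_u)$, $\lmap_1^{-1}(e_v)$, and $\ell := \lmap_1^{-1}(e_{uv})$ all have depth $2$, they are all children of $r$, so $r$ has at least three children. I would then delete $\ell$ using the leaf-deletion basic rotation at $r$. Its preconditions are met: $r$ is internal with at least three children, and $V(\lmap_1(\ell)) = \{u,v\} \subseteq V(\lmap_1(\cl(r) \setminus \{\ell\}))$ since $e_u, e_v \in \lmap_1(\cl(r) \setminus \{\ell\})$. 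Deleting $e_{uv}$ from $\su{G}$ produces exactly $\su{G'}$, so the resulting decomposition $\Tc_2 = (T_2, \lmap_2)$ is an $e_{\bot}$-rooted superbranch decomposition of $\su{G'}$, and this step costs $2^{\OO(k)}$ time since $|V(e_{uv})| = 2$, $\|\torso(r)\| \le 2^{\OO(k)}$, and $|\anc_{T_1}(r)| = 2$. I set $\Tc' := \Tc_2$ and $\seq$ to be $\seq_1$ followed by this single leaf-deletion rotation.

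It then remains to verify the claimed properties of $\Tc'$. Downwards well-linkedness is shown exactly as in the corresponding claim of \Cref{lem:insedge}: for a separation $(\lmap_2(\vec{tp}), \lmap_2(\vec{pt}))$ with $p$ the parent of $t$, the cases $p$ being the root and $t$ being a leaf are trivial, and otherwise $t$ is a proper descendant of $r$, so the subtree below $t$ is unchanged, $\lmap_2(\vec{tp}) = \lmap_1(\vec{tp})$, and since $e_u, e_v$ (hence $u, v$) remain on the $p$-side we get $V(\lmap_2(\vec{pt})) = V(\lmap_1(\vec{pt}))$; thus $\bdc_{\su{G'}}$ and $\bdc_{\su{G}}$ agree on all subsets of $\lmap_2(\vec{tp})$, so well-linkedness transfers from $\Tc_1$. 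For $k$-semigoodness, deleting a leaf can only decrease degrees, so $\ddeg(T_2) \le \ddeg(T_1) \le 2^{2k}+1$ — in contrast to the insertion case, no follow-up application of \Cref{lem:rotatemain} is needed to repair the degree of $r$. For the potential, deleting $\ell$ decreases $\ddeg(r)$ and $|\lmap[r]|$ each by one and changes neither quantity for any other internal node (the only internal ancestor of $\ell$ is $r$), so $\pot(\Tc') = \pot(\Tc_2) \le \pot(\Tc_1) \le \pot(\Tc) + 2^{\OO(k)} \log \|G\|$. Finally $\|\seq\|_{\Tc} \le \|\seq_1\|_{\Tc} + 2^{\OO(k)} \le 2^{\OO(k)} \log \|G\|$ and the total running time is $2^{\OO(k)} \log \|G\|$. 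The only points needing genuine care are checking the two preconditions of the leaf-deletion rotation (where distinctness of the three hyperedges in $X$ is what guarantees that $r$ has at least three children) and the boundary-preservation argument for well-linkedness; everything else is immediate, and the proof is in fact slightly shorter than that of \Cref{lem:insedge}.
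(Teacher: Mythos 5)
Your proof is correct and follows essentially the same route as the paper: rotate $e_u$, $e_v$, $e_{uv}$ to depth $2$ via \Cref{lem:rotrootmain}, delete the leaf $\lmap_1^{-1}(e_{uv})$ with a single leaf-deletion rotation, and verify downwards well-linkedness, $k$-semigoodness, and the potential bound exactly as the paper does. Your explicit verification of the leaf-deletion preconditions and of why no follow-up call to \Cref{lem:rotatemain} is needed is a slightly more careful rendering of the same argument.
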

\begin{proof}
We first apply the algorithm of \Cref{lem:rotrootmain} to move the hyperedges $e_u$, $e_v$, and $e_{uv}$ to the root.
In particular, by applying it with the set $X = \{e_u, e_v, e_{uv}\}$, we obtain in time $2^{\OO(k)} \log \|G\|$ a $k$-semigood $e_{\bot}$-rooted superbranch decomposition $\Tc_1 = (T_1, \lmap_1)$ via a sequence $\seq_1$ of basic rotations, so that,
\begin{itemize}
\item for all $e \in \{e_u, e_v, e_{uv}\}$, $\depth_{T_1}(\lmap_1^{-1}(e)) = 2$,
\item $\pot(\Tc_1) \le \pot(\Tc) + 2^{\OO(k)} \log \|G\|$, and
\item $\|\seq_1\|_{\Tc} \le 2^{\OO(k)} \log \|G\|$.
\end{itemize}

Let $r \in \vint(T_1)$ be the child of the root of $T_1$, i.e., the unique node having depth $1$.
Note that $r$ has at least $3$ children, as each $\lmap_1^{-1}(e_u)$, $\lmap_1^{-1}(e_v)$, and $\lmap_1^{-1}(e_{uv})$ is a child of $r$.
We transform $\Tc_1$ into a superbranch decomposition $\Tc' = (T', \lmap')$ by deleting the leaf $\lmap_1^{-1}(e_{uv})$ with the leaf deletion basic rotation in $2^{\OO(k)}$ time.

\begin{claim}
$\Tc'$ is downwards well-linked.
\end{claim}
\begin{claimproof}
Let $(\lmap'(\vec{tp}), \lmap'(\vec{pt}))$ be a separation of $\Tc'$, where $p$ is the parent of $t$.
If $p$ is the root, then $\bd(\lmap'(\vec{tp})) = \emptyset$ so $\lmap'(\vec{tp})$ is trivially well-linked.
Also, if $t$ is a leaf, then $\lmap'(\vec{tp})$ is also trivially well-linked.
It remains to consider the case where $p$ is a descendant of $r$ and $t$ is not a leaf.
In this case, we have that $\lmap'(\vec{tp}) = \lmap_1(\vec{tp})$.
The facts that none of $\lmap_1^{-1}(e_u)$, $\lmap_1^{-}(e_v)$, or $\lmap_1^{-1}(e_{uv})$ are descendants of $t$ in $\Tc_1$ imply also that $V(\lmap'(\vec{pt})) = V(\lmap_1(\vec{pt}))$, implying that $\bdc_{G'}(Y) = \bdc_G(Y)$ for all $Y \subseteq \lmap'(\vec{tp})$, implying that $\lmap'(\vec{tp})$ is well-linked in $G'$ because $\lmap_1(\vec{tp})$ is well-linked in $G$.
\end{claimproof}

As the number of children of each node of $\Tc'$ is at most the number in $\Tc_1$, it follows that $\Tc'$ is $k$-semigood.
It is also easy to see that $\pot(\Tc') \le \pot(\Tc_1) \le \pot(\Tc) + 2^{\OO(k)} \log \|G\|$.

We obtain $\seq$ by appending the leaf deletion basic rotation to the sequence $\seq_1$.
We have that $\|\seq\|_{\Tc} \le \|\seq_1\|_{\Tc} + 2^{\OO(k)} \le 2^{\OO(k)} \log \|G\|$.
\end{proof}

\section{Putting the main data structure together}
\label{sec:mainlemmaproof}
In this section we finally prove \Cref{lem:mainintlemma} by putting together the ingredients developed in \Cref{sec:balancing,sec:insdeledge}.

Let us start with a lemma providing the initialization routine with an edgeless graph.
Note that a superbranch decomposition that is $1$-good is $k$-good for all $k \ge 1$.

\begin{lemma}
\label{lem:initedgesless}
There is an algorithm that, given an edgeless graph $G$, in time $\OO(\|G\|)$ returns an $e_{\bot}$-rooted superbranch decomposition $\Tc$ of $\su{G}$ that is $1$-good and has $\pot(\Tc) \le \OO(\|G\|)$.
\end{lemma}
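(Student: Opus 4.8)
The plan is to exhibit one explicit superbranch decomposition and verify the required properties by hand. The key simplification is that an edgeless $G$ makes $\su{G}$ almost trivial: its hyperedges are $e_{\bot}$ with $V(e_{\bot}) = \emptyset$ and, for each $v \in V(G)$, the singleton $e_v$ with $V(e_v) = \{v\}$, so the vertex sets of distinct hyperedges are pairwise disjoint. Hence $\bd(A) = \emptyset$ and $\bdc(A) = 0$ for \emph{every} $A \subseteq E(\su{G})$, which means every subset of $E(\su{G})$ is well-linked, and therefore \emph{every} $e_{\bot}$-rooted superbranch decomposition of $\su{G}$ is downwards well-linked. So the only things left to arrange are a bounded degree together with $8$-balancedness of non-root nodes (this is exactly $1$-goodness, since $2^{2\cdot 1}+1 = 5$ and $2^{2\cdot 1+1} = 8$), and the potential bound.

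I would take $\Tc$ to be essentially a balanced binary tree. Writing $n = |V(G)|$, the cases $n \le 1$ are degenerate (a single node mapped to $e_{\bot}$, respectively an edge whose two endpoints are mapped to $e_{\bot}$ and $e_{v_1}$), with no internal nodes. For $n \ge 2$, let the root be the leaf $\lmap^{-1}(e_{\bot})$, let its unique child be an internal node $r$, and hang below $r$ a balanced binary tree whose $n$ leaves are bijectively mapped to $e_{v_1},\dots,e_{v_n}$, built recursively by always splitting the current set of $m$ hyperedges into two blocks of sizes $\lceil m/2\rceil$ and $\lfloor m/2\rfloor$. Every internal node then has two children and a parent, hence degree $3$, so this is a valid superbranch decomposition, and $\ddeg(T)\le 2 \le 5$ gives $1$-semigoodness. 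For $8$-balancedness: a leaf has no descendant at distance $\ge 1$; and for an internal node $t$, an easy induction on the recursive split shows that any descendant $s$ with $\depth(s) = \depth(t)+8$ satisfies $|\lmap[s]| \le |\lmap[t]|/2^{8}+1 < \tfrac{2}{3}|\lmap[t]|$, where the last inequality uses $|\lmap[t]| \ge 2$. Hence $\Tc$ is $1$-good (and in particular $\depth(T) = \OO(\log\|G\|)$ by \Cref{lem:gooddepth}, although we do not need this).

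The one step that requires care is the bound $\pot(\Tc) \le \OO(\|G\|)$, because replacing each term $\log|\lmap[t]|$ by $\log\|G\|$ only yields the weaker estimate $\OO(n\log n)$. Since every internal node $t$ has $\ddeg(t)-1 = 1$, we have $\pot(\Tc) = \sum_{t\in\vint(T)}\log|\lmap[t]|$. A node at distance $i$ below $r$ has $|\lmap[t]| \le n/2^{i}+1$, so it can be internal only for $i \le \log_2 n$, and there are at most $2^{i}$ nodes at distance $i$; bounding $\log|\lmap[t]| \le 1+\log_2 n - i$ on that range and reindexing by $j = \lfloor\log_2 n\rfloor - i$ turns the sum into at most $n\sum_{j\ge 0}(j+\OO(1))/2^{j} = \OO(n)$, a convergent series. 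Together with the trivial small-$n$ cases this gives $\pot(\Tc) \le \OO(\|G\|)$. Finally, the construction runs in $\OO(\|G\|)$ time: the recursion touches $\OO(n)$ tree nodes, all adhesions are empty, all torsos have constant size, and the leaf-descendant counters $|\lmap[t]|$ are filled in bottom-up along the recursion, so the full representation of $\Tc$ is assembled with $\OO(1)$ work per node.
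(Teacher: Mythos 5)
Your proposal is correct and follows essentially the same route as the paper: a balanced binary tree over the singleton hyperedges with $\lmap^{-1}(e_{\bot})$ as the root-leaf, downwards well-linkedness for free since all adhesions are empty, and the convergent series $\sum_j j/2^j$ to get $\pot(\Tc)=\OO(\|G\|)$ rather than the naive $\OO(n\log n)$. Your verification of $8$-balancedness and the degenerate small-$n$ cases are just more explicit versions of what the paper waves at as a ``standard construction.''
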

\begin{proof}
Recall that the set of hyperedges of $\su{G}$ consists of the singleton hyperedges $e_v$ for each $v \in V(G)$, and of the special hyperedge $e_{\bot}$.
We construct a superbranch decomposition $\Tc = (T, \lmap)$ by taking a balanced binary tree with $|V(G)|$ leaves, assigning the singleton hyperedges $e_v$ arbitrary with the leaves, and inserting a root node to which $e_{\bot}$ is assigned adjacent to the root.
A standard construction of a balanced binary tree ensures that all nodes of $\Tc$ are $3$-balanced.
Furthermore, because all hyperedges of $\su{G}$ have disjoint vertex sets, all adhesions of $\Tc$ are empty, implying that $\Tc$ is downwards well-linked.
It follows that $\Tc$ is $1$-good.

Furthermore, we have that 
\begin{align*}
\pot(\Tc) &\le \OO\left(\sum_{i=1}^{\lceil \log |V(G)| \rceil} \frac{|V(G)|}{2^i} \cdot i\right) \le \OO\left(|V(G)| \cdot \sum_{i=1}^{\infty} \frac{i}{2^i}\right) \le \OO(|V(G)|).
\end{align*}

Clearly, this construction can be implemented in $\OO(\|G\|)$ time.
\end{proof}

We then re-state and prove \Cref{lem:mainintlemma}.

\mainintlemma*
\begin{proof}
We assume without loss of generality that $k \ge 3$ (in order to apply \Cref{lem:insedge,lem:deledge}).
We will maintain a $k$-good $e_{\bot}$-rooted superbranch decomposition $\Tc$ of $\su{G}$, and analyze the amortized running time using the potential function $\pot(\Tc)$.

First, the $\mathsf{Init}(G,k)$ operation is implemented by applying \Cref{lem:initedgesless}.
This runs in $\OO(\|G\|)$ time, and results in $\Tc$ having initial potential $\pot(\Tc) \le \OO(\|G\|)$.

Then we consider the $\mathsf{AddEdge}(uv)$ operation.
We apply \Cref{lem:insedge}, which in time $2^{\OO(k)} \log \|G\|$ transforms $G$ into the graph $G'$ resulting from adding $uv$, and $\Tc$ into a $k$-semigood superbranch decomposition $\Tc_1$ of $\su{G'}$, with a sequence $\seq_1$ of basic rotations, so that
\begin{itemize}
\item $\pot(\Tc_1) \le \pot(\Tc) + 2^{\OO(k)} \log \|G\|$ and
\item $\|\seq_1\|_{\Tc} \le 2^{\OO(k)} \log \|G\|$.
\end{itemize}

It remains to turn $\Tc_1$ from $k$-semigood into $k$-good, which we do with the balancing procedure of \Cref{lem:mainbalancinglemma}.
Note that $\trace_{\Tc_1}(\seq_1)$ is a prefix of $\Tc_1$ that contains all of its $2^{2k+1}$-unbalanced nodes.
Furthermore, $|\trace_{\Tc_1}(\seq_1)| \le \|\seq_1\|_{\Tc} \le 2^{\OO(k)} \log \|G\|$, and we can compute $\trace_{\Tc_1}(\seq_1)$ from $\seq_1$ in $\OO(\|\seq_1\|_{\Tc}) = 2^{\OO(k)} \log \|G\|$ time.

We then apply the algorithm of \Cref{lem:mainbalancinglemma} with $\trace_{\Tc_1}(\seq_1)$ to transform $\Tc_1$ into a $k$-good $e_{\bot}$-rooted superbranch decomposition $\Tc_2$ of $\su{G'}$ via a sequence $\seq_2$ of basic rotations, so that
\begin{itemize}
\item $\pot(\Tc_2) \le \pot(\Tc_1) \le \pot(\Tc) + 2^{\OO(k)} \log \|G\|$ and
\item $\|\seq_2\|_{\Tc_1} \le 2^{\OO(k)} \cdot (|\trace_{\Tc_1}(\seq_1)| + \pot(\Tc_1) - \pot(\Tc_2)) \le 2^{\OO(k)} \log \|G\| + 2^{\OO(k)} \cdot (\pot(\Tc) - \pot(\Tc_2))$.
\end{itemize}

The running time of this is 
\begin{align*}
&2^{\OO(k)} \cdot (|\trace_{\Tc_1}(\seq_1)| + \pot(\Tc_1) - \pot(\Tc_2))\\
\le &2^{\OO(k)} \log \|G\| + 2^{\OO(k)} \cdot (\pot(\Tc) - \pot(\Tc_2)).
\end{align*}

We then append $\seq_2$ to $\seq_1$ to obtain a sequence $\seq$ of basic rotations that transforms $\Tc$ into $\Tc_2$.
We have that
\begin{align*}
\|\seq\|_{\Tc} &\le \|\seq_1\|_{\Tc} + \|\seq_2\|_{\Tc_2}\\
&\le 2^{\OO(k)} \log \|G\| + 2^{\OO(k)} \cdot (\pot(\Tc) - \pot(\Tc_2)).
\end{align*}

We then return $\seq$.
This concludes the description of the $\mathsf{AddEdge}(uv)$ operation.
The running time of the operation is $2^{\OO(k)} \log \|G\| + 2^{\OO(k)} \cdot (\pot(\Tc) - \pot(\Tc_2))$, and it increases the potential by at most $2^{\OO(k)} \log \|G\|$, i.e., we have $\pot(\Tc_2) \le \pot(\Tc) + 2^{\OO(k)} \log \|G\|$.


The $\mathsf{DeleteEdge}(uv)$ operation is implemented in the exact same way as the $\mathsf{AddEdge}(uv)$ operation, except using \Cref{lem:deledge} instead of \Cref{lem:insedge}.
In particular, it also has running time $2^{\OO(k)} \log \|G\| + 2^{\OO(k)} \cdot (\pot(\Tc) - \pot(\Tc_2))$, and increases the potential by at most $2^{\OO(k)} \log \|G\|$.

From the aforementioned running times and properties of the potential function $\pot(\Tc)$ it follows that both $\mathsf{AddEdge}(uv)$ and $\mathsf{DeleteEdge}(uv)$ have amortized running time $2^{\OO(k)} \log \|G\|$, and $\mathsf{Init}(G,k)$ has amortized running time $2^{\OO(k)} \|G\|$.
\end{proof}

\section{From superbranch decompositions to tree decompositions}
\label{sec:prds}
In \Cref{sec:struct,sec:balancing,sec:insdeledge,sec:mainlemmaproof} we gave the main technical contribution of this paper, namely, the proof of \Cref{lem:mainintlemma}.
In this section, we provide wrappers around \Cref{lem:mainintlemma} to finish the proof of \Cref{the:maintheorem}.
First, in \Cref{subsec:prds} we present our framework for formalizing the maintenance of dynamic programming schemes on the tree decomposition, which is based on~\cite{DBLP:conf/focs/KorhonenMNP023}, and then in \Cref{subsec:bwtotw} we translate the setting of superbranch decompositions to the setting of tree decompositions.

\subsection{Manipulating dynamic tree decompositions}
\label{subsec:prds}
We review the definitions of \emph{annotated tree decompositions}, \emph{prefix-rebuilding updates}, \emph{prefix-rebuilding data structures}, and \emph{tree decomposition automata}, which were introduced in~\cite{DBLP:conf/focs/KorhonenMNP023}.
Our definitions are not completely identical to the ones given in~\cite{DBLP:conf/focs/KorhonenMNP023}, but the results from therein still easily translate to our setting.

\paragraph{Annotated tree decompositions.}
We will manipulate \emph{annotated tree decompositions} of graphs.
An annotated tree decomposition of a graph $G$ is a triple $(T,\bag,\edges)$, so that
\begin{itemize}
\item $T$ is a binary tree, i.e., $T$ is rooted and $\ddeg(T) \le 2$,
\item $\bag \colon V(T) \to 2^{V(G)}$ is a function so that $(T,\bag)$ is a tree decomposition of $G$, and
\item $\edges \colon V(T) \to 2^{E(G)}$ is a function so that for all $t \in V(T)$, the set $\edges(t)$ contains the edges $uv$ of $G$ for which $t$ is the unique smallest-depth node with $u,v \in \bag(t)$.
\end{itemize}

Note that $G$ and $(T,\bag)$ define the annotated tree decomposition $(T,\bag,\edges)$ uniquely.
Also, $(T,\bag,\edges)$ defines $G$ uniquely.
Because $(T,\bag,\edges)$ defines $G$, updates to an annotated tree decomposition also encode updates to the graph $G$.
For a set $X \subseteq V(T)$, the \emph{restriction} of $(T,\bag,\edges)$ to $X$ is the tuple $\funrestriction{(T,\bag,\edges)}{X} = (T[X], \funrestriction{\bag}{X}, \funrestriction{\edges}{X})$.

\paragraph{Prefix-rebuilding updates.}
An update that changes an annotated tree decomposition $\Tc = (T,\bag,\edges)$ into an annotated tree decomposition $\Tc' = (T', \bag', \edges')$ is a \emph{prefix-rebuilding update} with prefixes $P$ and $P'$ if
\begin{itemize}
\item $P \subseteq V(T)$ is a prefix of $T$,
\item $P' \subseteq V(T')$ is a prefix of $T'$, and
\item $\funrestriction{(T,\bag,\edges)}{V(T)\setminus P} = \funrestriction{(T',\bag',\edges')}{V(T') \setminus P'}$.
\end{itemize}

In particular, a prefix-rebuilding update replaces the prefix $P$ by a new prefix $P'$.
A \emph{description} of a prefix-rebuilding update is a triple $\prud = (P, \Tc^{\star}, \pi)$, where $P$ is the prefix of $T$ as mentioned above, $\Tc^{\star} = (T^{\star}, \bag^{\star}, \edges^{\star})$ is an annotated tree decomposition so that
\begin{itemize}
\item $\Tc^{\star} = \funrestriction{\Tc'}{P'}$,
\end{itemize}
i.e., $\Tc^{\star}$ describes the new annotated tree decomposition for the nodes in $P'$, in particular, $V(T^{\star}) = P'$, and
\begin{itemize}
\item $\pi$ is a function that maps each node $t$ of $T$ that has a parent in $P$ into a node $\pi(t) \in V(T^{\star})$ that is the parent of $t$ in $T'$.
\end{itemize}

We observe that $\Tc'$ can be uniquely determined given $\Tc$ and $\prud$.
We define the \emph{size} of $\prud$ as $|\prud| = |P| + |P'|$.
We also observe that when both $\Tc$ and $\Tc'$ have width at most $k$, a representation of $\Tc$ can be turned into a representation of $\Tc'$ in time $k^{\OO(1)} \cdot |\prud|$.

\paragraph{Prefix-rebuilding data structures.}
A \emph{prefix-rebuilding data structure with overhead $\overheadtime$} is a dynamic data structure that stores an annotated tree decomposition $\Tc$, and supports at least the following two operations:

\begin{itemize}
\item $\mathsf{Init}(\Tc)$: Initializes the data structure with a given annotated tree decomposition $\Tc$. Runs in time $\overheadtime(\width(\Tc)) \cdot \|\Tc\|$.
\item $\mathsf{Update}(\prud)$: Updates the stored annotated tree decomposition $\Tc$ into a new annotated tree decomposition $\Tc'$ with a prefix-rebuilding update described by $\prud$. Runs in time $\overheadtime(\max(\width(\Tc), \width(\Tc'))) \cdot |\prud|$.
\end{itemize}

To be useful, a prefix-rebuilding data structure should also support some additional operations.
In the work of~\cite{DBLP:conf/focs/KorhonenMNP023}, prefix-rebuilding data structures were used for several internal aspects of their data structure.
In this work, we use them only for the application of maintaining tree decomposition automata.

\paragraph{Tree decomposition automata.}
There is a long history of formalizing dynamic programming on tree decompositions through automata, see for example~\cite{DBLP:journals/iandc/Courcelle90,Courcelle:2012book}, \cite[Chapter~12]{DowneyF13}, and~\cite[Chapters~10~and~11]{FlumGrohebook}.
In this paper, we use the definition of tree decomposition automaton given in~\cite{DBLP:conf/focs/KorhonenMNP023}.
Due to the length of this definition, we present it formally only in \Cref{sec:apptdautom}, but let us here give an informal definition.

A tree decomposition automaton $\autom$ processes annotated tree decompositions in a manner so that the state of $\autom$ on a node $t$, denoted by $\run_{\autom}(t)$, depends only on $\bag(t)$, $\edges(t)$, the states of $\autom$ on the child nodes of $t$, and the bags of the child nodes of $t$.
If computing the state of $\autom$ on a node $t$ on a tree decomposition of width $\le k$, based on this information, takes time at most $\evaltime(k)$, and furthermore the state can be represented in space $\evaltime(k)$, then we say that the \emph{evaluation time} of $\autom$ is $\evaltime$.
A \emph{run} of $\autom$ of an annotated tree decomposition $(T,\bag,\edges)$ is the mapping $\run_{\autom} \colon V(T) \to Q$, where $Q$ is the state set of $\autom$.

We do not assume that the state set $Q$ is small, only that each state can be represented in space $\evaltime(k)$ on tree decompositions of width $k$.
Note that therefore, a representation of a state may consist of $\Omega(\evaltime(k) \cdot \log n)$ bits.
As for the representation of $\autom$, we assume that it is given as a word RAM machine that implements the state transitions.

With these definitions, we can state the following lemma from~\cite{DBLP:conf/focs/KorhonenMNP023}, that states that runs of tree decomposition automata can be efficiently maintained under prefix-rebuilding updates.

\begin{lemma}[{\cite[Lemma~A.6]{DBLP:conf/focs/KorhonenMNP023}}]
\label{lem:prdsautomaton}
Given a tree decomposition automaton $\autom$ with evaluation time $\evaltime$, we can construct a prefix-rebuilding data structure with overhead $\overheadtime(k) = \evaltime(k) \cdot k^{\OO(1)}$, that in addition to the $\mathsf{Init}$ and $\mathsf{Update}$ operations implements the following operation:
\begin{itemize}
\item $\mathsf{Query}(t)$: Given a node $t$, return $\run_{\autom}(t)$. Runs in time $\OO(\evaltime(\width(\Tc)))$, where $\Tc$ is the current annotated tree decomposition.
\end{itemize}
\end{lemma}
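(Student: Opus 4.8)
The plan is to build the prefix-rebuilding data structure so that, in addition to a standard representation of the annotated tree decomposition $\Tc = (T,\bag,\edges)$, each node $t \in V(T)$ stores the state $\run_{\autom}(t)$; by assumption this costs $\OO(\evaltime(\width(\Tc)))$ extra space per node. For $\mathsf{Init}(\Tc)$ I would compute all the states by a single bottom-up traversal of $T$: when a node $t$ is visited (after all of its at most two children), read $\bag(t)$, $\edges(t)$, the already-computed states of the children of $t$, and the bags of those children, assemble them into the input expected by the transition function of $\autom$, and evaluate it to obtain $\run_{\autom}(t)$. By the definition of a tree decomposition automaton, one evaluation takes time $\evaltime(\width(\Tc))$, and the surrounding bookkeeping per node is $\width(\Tc)^{\OO(1)}$, so over all $|V(T)| \le \|\Tc\|$ nodes the total is $\evaltime(\width(\Tc)) \cdot \width(\Tc)^{\OO(1)} \cdot \|\Tc\| = \overheadtime(\width(\Tc)) \cdot \|\Tc\|$, as required.

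For $\mathsf{Update}(\prud)$ with $\prud = (P, \Tc^{\star}, \pi)$ turning $\Tc$ into $\Tc' = (T',\bag',\edges')$, write $P' = V(T^{\star})$ for the new prefix. The key observation is that $\run_{\autom}(t)$ changes only for $t \in P'$: by the defining identity of a prefix-rebuilding update, $\funrestriction{(T,\bag,\edges)}{V(T)\setminus P} = \funrestriction{(T',\bag',\edges')}{V(T')\setminus P'}$, so every node $t \in V(T')\setminus P'$ has in $T'$ exactly the same rooted subtree --- same nodes, bags, and edge assignments --- as it had in $T$, and since $\run_{\autom}(t)$ is determined by that subtree alone, it is unchanged and already stored. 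Hence I would first apply the underlying prefix-rebuilding update to the tree-decomposition representation in time $\width^{\OO(1)} \cdot |\prud|$, freeing the stored states of the removed nodes $P$ in the process, and then recompute $\run_{\autom}(t)$ for the nodes $t \in P'$ in order of non-increasing depth. When such a $t$ is processed, each child of $t$ in $T'$ is either in $P'$ and hence already processed, or lies in $V(T')\setminus P'$ --- i.e.\ it is one of the subtrees reattached via $\pi$ --- in which case its (valid) state and its bag are available from the stored data; together with $\bag'(t)$ and $\edges'(t)$, which are supplied by $\Tc^{\star}$, this is precisely the input the transition function requires. Each such recomputation costs $\evaltime(k') \cdot k'^{\OO(1)}$ where $k' = \max(\width(\Tc),\width(\Tc'))$, and there are $|P'| \le |\prud|$ of them, giving total time $\overheadtime(k') \cdot |\prud|$.

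Finally, $\mathsf{Query}(t)$ simply returns the stored state $\run_{\autom}(t)$, which takes $\OO(\evaltime(\width(\Tc)))$ time to produce. I do not expect a serious obstacle: the only point that needs care is the claim that the automaton's state at a node is a function of that node's rooted subtree, which is immediate by unrolling the recursive definition of $\run_{\autom}$ (the state at $t$ depends on $\bag(t)$, $\edges(t)$, the bags of the children of $t$, and, recursively, their states). Given this locality, correctness of $\mathsf{Update}$ is exactly the restriction identity above, and all running-time bounds are routine accounting. (If one wanted $\mathsf{Query}$ in constant time, one could instead hand back a pointer to the stored state, but the stated $\OO(\evaltime(\cdot))$ bound already suffices for the applications in this paper.)
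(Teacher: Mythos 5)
Your proposal is correct. Note that the paper does not prove this lemma at all --- it imports it verbatim from \cite[Lemma~A.6]{DBLP:conf/focs/KorhonenMNP023} --- and your argument (store $\run_{\autom}(t)$ at every node, observe via the restriction identity that states outside the new prefix are determined by unchanged rooted subtrees, and recompute only the at most $|P'|\le|\prud|$ states in $P'$ bottom-up) is exactly the standard proof that the cited lemma rests on.
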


\subsection{Proof of \Cref{the:maintheorem}}
\label{subsec:bwtotw}
We then provide a wrapper around the data structure of \Cref{lem:mainintlemma} to lift it from maintaining a downwards well-linked superbranch decomposition to maintaining an annotated tree decomposition.
The main idea is to compute for each node $t$ a tree decomposition of $\primal(\torso(t))$, and then stitch them together.
The following lemma provides the subroutine for computing a tree decomposition of $\primal(\torso(t))$.
Its proof is relegated to \Cref{sec:app:missingproofs} because it follows well-known techniques~\cite{RobertsonS-GMXIII}.
Note that the torso of any node of a superbranch decomposition is always a normal hypergraph.

\begin{restatable}[\inapp]{lemma}{computetdecomp}
\label{lem:computetorsotdecomp}
There is an algorithm that, given a normal hypergraph $G$ and a hyperedge $e \in E(G)$, in time $2^{\OO(\bdc(e) + \wl_e(G))} \cdot \|G\|^{\OO(1)}$ returns tree decomposition $\Tc = (T,\bag)$ of $\primal(G)$, and an injective mapping $q \colon E(G) \to \leaves(T)$ so that
\begin{itemize}
\item $\width(\Tc) \le 3 \cdot \max(\bdc(e), \wl_e(G)) - 1$,
\item $\|\Tc\| \le \|G\|^{\OO(1)}$,
\item the maximum degree of $T$ is $3$, and
\item for all $e \in E(G)$, $V(e) \subseteq \bag(q(e))$.
\end{itemize}
\end{restatable}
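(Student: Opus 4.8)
The plan is to use the standard "recursive separation along well-linked balanced separators" argument, as in~\cite{RobertsonS-GMXIII}, but phrased in the hyperedge/boundary language of this paper. Set $w = \max(\bdc(e), \wl_e(G))$. I would build the tree decomposition recursively: the recursion operates on pairs $(A, S)$ where $A \subseteq E(G)$ is a set of hyperedges (initially $A = E(G)$) and $S \subseteq V(G)$ is a "interface" set of vertices with $\bd(A) \subseteq S$ and $|S| \le \lfloor \tfrac{3}{2}(w+1) \rfloor$ or so; the invariant guarantees that $S$ will be contained in every bag produced for $A$, which is what allows the pieces to be glued. At a node handling $(A,S)$, if $|A|$ is small (say $|A| \le 1$, i.e.\ we are down to a single hyperedge) we output a leaf with bag $S \cup V(e')$ for the remaining hyperedge $e'$; this has size $\le |S| + \bdc(e') + \dots$, and here one uses $\bdc(e') \le \max(\bdc(e),\wl_e(G)) = w$ when $e' \ne e$, and $\bdc(e) \le w$ when $e' = e$. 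Otherwise we find a separation $(A_1, A_2)$ of $G[A]$ (in the induced sense) that is \emph{balanced} with respect to $S$ — meaning each side contains at most, say, $\tfrac{2}{3}|S|$ vertices of $S$ — and of order $\le w$. Such a separation exists by a standard argument: if no balanced separation of order $\le w$ existed, then $S$ together with the well-linkedness structure would force a well-linked set avoiding $e$ of boundary size $> w$, contradicting $w \ge \wl_e(G)$ (one needs to be slightly careful that the relevant well-linked set avoids $e$, which is where $\bdc(e) \le w$ and the placement of $e$ on one fixed side come in). We then recurse on $(A_1, S_1)$ and $(A_2, S_2)$ where $S_i = (S \cap V(A_i)) \cup \bd(A_i)$, each of size at most $\tfrac{2}{3}|S| + w$, which stays below $\tfrac{3}{2}(w+1)$ if we pick the balance constant correctly; the current node gets bag $S \cup \bd(A_1) \cup \bd(A_2)$, still of size $\le \tfrac{3}{2}(w+1) + O(1)$, giving width $\le 3w - 1 = 3\max(\bdc(e),\wl_e(G)) - 1$ after optimizing constants.

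After building this recursion tree I would verify it is a genuine tree decomposition of $\primal(G)$: every edge of $\primal(G)$ comes from some hyperedge $e'$, and $V(e')$ is contained in the leaf bag assigned to $e'$; connectivity of bags containing a fixed vertex $v$ follows from the interface invariant, because whenever $v$ is "separated away" it is retained in the $S$ of the side it stays on, so the set of bags containing $v$ forms a connected subtree. To get the injective map $q \colon E(G) \to \leaves(T)$ with $V(e') \subseteq \bag(q(e'))$, I would arrange the recursion so each hyperedge is eventually isolated at its own leaf (when $A$ shrinks to a singleton), and pad with dummy leaves if necessary so that distinct hyperedges map to distinct leaves. Bounding the maximum degree to $3$ is routine: the internal nodes of the recursion tree have two children (the two sides of the separation); to also handle the leaves-for-hyperedges and make every node have $\le 3$ neighbors, I would attach each hyperedge's leaf as an extra child, then binarize by inserting copies of the bag — a standard operation that preserves width and only increases $\|\Tc\|$ polynomially.

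For the running time and size bounds: each separation-finding step can be done in $2^{\OO(w)} \cdot \|G\|^{\OO(1)}$ time by essentially the algorithm behind \Cref{lem:testwl} (enumerate candidate boundary vertex sets of size $\le w$, test connectivity / flow feasibility), and the recursion has depth $\OO(\log \|G\|)$ with $\|G\|^{\OO(1)}$ nodes total since $|A|$ strictly decreases and splits are balanced in the $S$-measure (one must check the recursion does not stall when $|A|$ is large but all of $A$ shares its boundary — that case is handled by the well-linkedness of $E(A)\setminus\{e\}$, forcing progress or a small bag). Hence $\|\Tc\| \le \|G\|^{\OO(1)}$ and total time $2^{\OO(\bdc(e)+\wl_e(G))}\cdot\|G\|^{\OO(1)}$. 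I expect the main obstacle to be the careful bookkeeping in the existence proof for the balanced order-$\le w$ separation that \emph{avoids} the distinguished hyperedge $e$ — i.e., making the reduction to a well-linked set in $E(G)\setminus\{e\}$ precise, so that the bound $\wl_e(G) \le w$ (rather than $\wl(G) \le w$) is what is actually used, and simultaneously keeping the interface set $S$ from growing past $\tfrac{3}{2}(w+1)$; everything else is standard manipulation of tree decompositions and submodularity of $\bdc$.
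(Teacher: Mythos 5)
There is a genuine quantitative gap in your width bound. Your recursion maintains an interface set $S$ with the invariant $|S| \le \tfrac{3}{2}(w+1)$ and recurses on $S_i = (S \cap V(A_i)) \cup \bd(A_i)$ of size at most $\tfrac{2}{3}|S| + w$. For this to stay within the invariant you would need $\tfrac{2}{3}\cdot\tfrac{3}{2}(w+1) + w = 2w+1 \le \tfrac{3}{2}(w+1)$, which fails for every $w \ge 2$; and no choice of balance constant $\alpha$ rescues it, since $\alpha|S| + w \le |S|$ forces $|S| \ge w/(1-\alpha)$, and a separator that is $\alpha$-balanced with $\alpha < \tfrac12$ on both sides need not exist. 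The self-consistent fixed point of your recursion is $|S| \le 3w$ (from $|S| \le \tfrac{2}{3}|S| + w$), and then the bag $S \cup \bd(A_1) \cup \bd(A_2)$ has size up to $4w$ --- this is exactly why the classical Robertson--Seymour recursion on vertex interface sets yields a $4$-approximation, not a $3$-approximation. So ``optimizing constants'' cannot deliver $\width(\Tc) \le 3\max(\bdc(e),\wl_e(G)) - 1$ from this scheme, and since the paper's final ratio of $9$ is the product of the $3$ here with the $3$ from \Cref{lem:wltotwlink}, losing this constant degrades the main theorem. A secondary, self-acknowledged gap is the existence of the balanced separator of order $\le w$ for $S$ that avoids $e$: the duality between the absence of such separators and the presence of a well-linked \emph{hyperedge} set in $E(G)\setminus\{e\}$ of boundary size $>w$ is plausible but is precisely the part you leave unproved, and it is not a direct consequence of \Cref{lem:twbalsep} (which goes from treewidth to separators, the direction not available here).

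The paper avoids both issues by a different route: it first builds a \emph{branch decomposition} recursively (\Cref{lem:wltobwoutputdecomp}), repeatedly testing whether $E(G)\setminus\{e\}$ is well-linked via \Cref{lem:testwl} and either splitting along the witnessing bipartition (which strictly decreases the boundary) or peeling off a single hyperedge; this gives branchwidth $\le \max(\bdc(e), 2\wl_e(G))$ with no accumulating interface set. It then converts to a tree decomposition of $\primal(G)$ via the standard torso construction (\Cref{lem:twbwconn}), which loses only a factor $\tfrac{3}{2}$ because every internal node of a branch decomposition has exactly three incident adhesions and each torso vertex lies in at least two of them. The product $\tfrac{3}{2}\cdot 2 = 3$ is what yields the stated constant. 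If you want to salvage a direct tree-decomposition recursion, you would need this branch-decomposition detour or an equivalently sharper argument; the vertex-interface recursion as written proves a weaker statement.
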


Now we are ready to give our data structure in terms of treewidth.

\begin{lemma}
\label{lem:finaldyntwprefixrebuild}
Let $G$ be a dynamic graph and $k \ge 1$ an integer with a promise that $\tw(G) \le k$ at all times.
There is a data structure that maintains an annotated tree decomposition $\Tc$ of $G$ of width $\le 9 \cdot \tw(G) + 8$ and depth $\le 2^{\OO(k)} \log \|G\|$, and supports the following operations:

\begin{itemize}
\item $\mathsf{Init}(G,k)$: Given an edgeless graph $G$ and an integer $k \ge 1$, initialize the data structure with $G$ and $k$, and return $\Tc$. Runs in $2^{\OO(k)} \|G\|$ amortized time.
\item $\mathsf{AddEdge}(uv)$: Given a new edge $uv \in \binom{V(G)}{2} \setminus E(G)$, add $uv$ to $G$. Runs in $2^{\OO(k)} \log \|G\|$ amortized time.
\item $\mathsf{DeleteEdge}(uv)$: Given an edge $uv \in E(G)$, delete $uv$ from $G$. Runs in $2^{\OO(k)} \log \|G\|$ amortized time.
\end{itemize}

Furthermore, in the operations $\mathsf{AddEdge}$ and $\mathsf{DeleteEdge}$, $\Tc$ is updated by a prefix-rebuilding update, and a description $\prud$ of that update is returned.
The sizes $|\prud|$ of the descriptions have the same amortized upper bound as the running time.
\end{lemma}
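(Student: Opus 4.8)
The plan is to run the data structure of \Cref{lem:mainintlemma} on the support hypergraph $\su{G}$ and then put a ``torso-refinement'' wrapper around it. Concretely, set $k' = 3(k+1)$; by \Cref{lem:wltotwlink} the promise $\tw(G) \le k$ guarantees $\wl(\su{G}) \le k'$ at all times, so \Cref{lem:mainintlemma} (invoked with parameter $k'$) maintains a downwards well-linked $e_{\bot}$-rooted superbranch decomposition $\Tc = (T,\lmap)$ of $\su{G}$ with $\ddeg(T) \le 2^{\OO(k)}$ and $\depth(T) \le 2^{\OO(k)}\log\|G\|$, and reports each update as a sequence $\seq$ of basic rotations with $\|\seq\|_{\Tc} \le 2^{\OO(k)}\log\|G\|$ amortized. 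On top of this, for every internal node $t\in\vint(T)$ with parent $p$, I maintain a tree decomposition $(\Tc_t,q_t)$ of $\primal(\torso(t))$ produced by \Cref{lem:computetorsotdecomp} applied to the normal hypergraph $\torso(t)$ and its hyperedge $e_p$, and I root $\Tc_t$ at the leaf $q_t(e_p)$. Since $\Tc$ is downwards well-linked, $\bdc(e_p)=|\adh(pt)|\le\adhsize(\Tc)\le\wl(\su{G})$, and, as noted after \Cref{lem:wltransindecomp}, $\wl_{e_p}(\torso(t))\le\wl(\su{G})$; together with \Cref{lem:wltotwlink} this yields $\width(\Tc_t)\le 3\cdot 3(\tw(G)+1)-1 = 9\cdot\tw(G)+8$. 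Moreover $\|\torso(t)\| = 2^{\OO(k)}$ (there are $\ddeg(T)+1$ hyperedges, each of size at most $k'$), so each $\Tc_t$ has size $2^{\OO(k)}$, maximum degree $3$, and is computed in time $2^{\OO(k)}$.

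The annotated tree decomposition $\tilde\Tc = (\tilde T,\tilde\bag,\tilde\edges)$ of $G$ that I output is obtained by gluing the pieces $\Tc_t$ along $T$: take the disjoint union of all $\Tc_t$ ($t\in\vint(T)$) together with a one-node decomposition of bag $V(\lmap(\ell))$ for each leaf $\ell$ of $T$ (so the root $\lmap^{-1}(e_{\bot})$ gets bag $\emptyset$), and for each edge $st\in E(T)$ with $s$ the parent of $t$ add a tree edge between the node of $\Tc_s$ holding the adhesion hyperedge of $t$ (that is, $q_s(e_t)$, or the single node if $s$ is a leaf) and the root $q_t(e_s)$ of $\Tc_t$; then $\tilde\bag$ is inherited and $\tilde\edges$ is the labelling forced on $(\tilde T,\tilde\bag)$ by $G$. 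I root $\tilde\Tc$ at $\lmap^{-1}(e_{\bot})$, so descending in $\tilde\Tc$ refines descending in $T$. That $\tilde\Tc$ is an annotated tree decomposition of $\primal(\su{G})=G$ follows the pattern of \Cref{obs:superbdtotd}: every edge $uv$ lies in the leaf bag $\{u,v\}$; every adhesion $\adh(st)=V(e_s)$ is a clique of $\primal(\torso(t))$, hence contained in some bag of $\Tc_t$, and by \Cref{lem:computetorsotdecomp} in fact $V(e_s)\subseteq\bag(q_t(e_s))$; and connectivity of the $v$-bags in $\tilde\Tc$ follows from connectivity in the coarse tree decomposition $\bigl(T,\ t\mapsto V(\torso(t))\bigr)$ of \Cref{obs:superbdtotd} refined inside each $\Tc_t$. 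Width is $\le 9\tw(G)+8$ by the previous paragraph; the depth is $\le(\depth(T)+1)\cdot(2^{\OO(k)}+1)=2^{\OO(k)}\log\|G\|$ because a root--leaf path of $\tilde\Tc$ meets each $\Tc_t$ along a root--leaf path of $T$; and $\tilde\Tc$ is binary since each $\Tc_t$, rooted at the degree-$\le 1$ node $q_t(e_p)$, has at most two children per node, while injectivity of $q_s$ ensures the gluing edges attach at most one new child to each leaf $q_s(e_t)$ of $\Tc_s$.

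For the operations: $\mathsf{Init}$ runs $\mathsf{Init}$ of \Cref{lem:mainintlemma} and then builds all $\Tc_t$; for an edgeless graph the decomposition of \Cref{lem:initedgesless} has all adhesions empty, so each torso has constant size and each $\Tc_t$ costs $\OO(1)$, for a total of $2^{\OO(k)}\|G\|$ amortized. For $\mathsf{AddEdge}(uv)$ and $\mathsf{DeleteEdge}(uv)$, I call the corresponding operation of \Cref{lem:mainintlemma}, obtaining $\seq$ with $\|\seq\|_{\Tc}\le 2^{\OO(k)}\log\|G\|$ amortized. Each of the $\le\|\seq\|_{\Tc}$ basic rotations changes the torso of only the $\OO(1)$ internal nodes directly involved in it, so I recompute $\Tc_t$ for each such node in time $2^{\OO(k)}$, for $2^{\OO(k)}\log\|G\|$ amortized in total; all recomputed or renamed nodes lie in $V_{\Tc'}(\seq)$. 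I then package the resulting change to $\tilde\Tc$ as a single prefix-rebuilding update with prefixes $\tilde P=\bigcup_{t\in\trace_{\Tc}(\seq)}V(\Tc_t)$ and $\tilde P'=\bigcup_{t\in\trace_{\Tc'}(\seq)}V(\Tc'_t)$ (using the one-node sets for leaves of $T$); these are genuine prefixes of $\tilde\Tc$ and of the updated $\tilde\Tc'$ because $\trace_{\Tc}(\seq)=\anc_T(V_{\Tc}(\seq))$ is prefix-closed in $T$ and each $\Tc_t$ is rooted consistently at $q_t(e_p)$, and they have size $\le|\trace_{\Tc}(\seq)|\cdot 2^{\OO(k)}=2^{\OO(k)}\log\|G\|$. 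The resulting description $\prud$ thus has $|\prud|=2^{\OO(k)}\log\|G\|$, the representation of $\tilde\Tc$ can be modified accordingly in $2^{\OO(k)}\log\|G\|$ time, and I return $\prud$.

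The step I expect to require the most care is verifying that this is indeed a prefix-rebuilding update, i.e.\ that $\funrestriction{\tilde\Tc}{V(\tilde T)\setminus\tilde P} = \funrestriction{\tilde\Tc'}{V(\tilde T')\setminus\tilde P'}$. The main point is easy: a node $t\notin\trace_{\Tc}(\seq)$ has its entire subtree in $T$ untouched by $\seq$ (because $V_{\Tc}(\seq)\subseteq\trace_{\Tc}(\seq)$ and $\trace$ is prefix-closed), so $\torso(t)$, $\Tc_t$, $q_t$ and all adhesions inside that subtree are unchanged. The delicate part is that the $\tilde\edges$-labels assigned inside such an untouched subtree also stay fixed; this is because an edge whose endpoints first become simultaneously present inside the untouched part cannot have both endpoints reappear in a shallower bag, i.e.\ inside the rebuilt prefix, and it is handled exactly as in~\cite{DBLP:conf/focs/KorhonenMNP023}. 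Granting this, the amortized bounds are inherited directly from \Cref{lem:mainintlemma}, which finishes the plan.
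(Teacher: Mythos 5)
Your construction is essentially the paper's: run the superbranch-decomposition data structure of \Cref{lem:mainintlemma} with parameter $3k+3$, replace each torso by the tree decomposition of \Cref{lem:computetorsotdecomp} rooted at the leaf for the parent hyperedge, glue the pieces along $T$, and package each update as a single prefix-rebuilding update supported on the trace of the rotation sequence. The width, depth, binary-tree and prefix-size computations all match the paper. One small structural difference: the paper glues through an extra connector node $v_{st}$ with bag exactly $\adh(st)$ for each tree edge, which you omit. This is harmless for validity, but the connector is what makes it immediate that the bag of the \emph{parent} of the topmost node of an untouched subtree is unchanged; in your direct gluing that parent is a node of the freshly recomputed $\Tc_p$, so your ``delicate part'' genuinely needs the argument that any pair $u,v\in\bag(q_t(e_p))$ that was covered by the old parent bag must lie in $\adh(tp)\subseteq\bag'(q_p(e_t))$ and hence is still covered (and conversely that a vertex of $\bag(q_t(e_p))$ outside $\adh(tp)$ occurs in no bag above $\Tc_t$). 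Your one-sentence justification does not establish this.

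The concrete gap is the computation of the $\edges$-labels of the new prefix. An annotated tree decomposition --- and the description $\prud$ you must return --- includes $\edges$, and to assign $\edges(s)$ for $s\in V(\Tc_t)$ you must know which edges $uv$ of $G$ satisfy $e_{uv}\in\lmap[t]$ and $u,v\in V(\torso(t))$. This is information about the entire subtree below $t$ that cannot be read off $\torso(t)$ alone, and your proposal never says how to obtain it; the claim that each touched torso costs only $2^{\OO(k)}$ silently assumes it. The paper maintains an auxiliary function $\el(t)$ recording exactly this edge set for every internal node of the superbranch decomposition: it has size $2^{\OO(k)}$, satisfies a bottom-up recurrence over the children, and is recomputed over the trace in $2^{\OO(k)}$ per node, which is what makes the $\edges$-labelling of the rebuilt prefix computable within the stated budget. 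Your argument needs this ingredient, or an equivalent device such as collecting $\bigcup_{s\in\tilde P}\edges(s)$ from the old prefix and redistributing it over $\tilde P'$ via the local test ``both endpoints in my bag but not both in my parent's bag''; as written, this step is missing.
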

\begin{proof}
We use the data structure of \Cref{lem:mainintlemma}.
We relay all of the operations to it, and let it maintain a downwards well-linked $e_{\bot}$-rooted superbranch decomposition $\tilde{\Tc} = (\tilde{T}, \tilde{\lmap})$ of $\su{G}$, so that $\ddeg(\tilde{T}) \le 2^{\OO(k)}$ and $\depth(\tilde{T}) \le 2^{\OO(k)} \log \|G\|$.
Because $\wl(\su{G}) \le 3 \cdot (\tw(G) + 1)$ (by \Cref{lem:wltotwlink}), we can set the value of $k$ in \Cref{lem:mainintlemma} to be $3k+3$.

In order to maintain the $\edges$ function of an annotated tree decomposition, we maintain a function $\el \colon V(\tilde{T}) \to 2^{E(G)}$ on $\tilde{\Tc}$.
This stores, for each internal node $t \in \vint(\tilde{T})$, all edges $uv$ of $G$ so that (1)~$e_{uv} \in \tilde{\lmap}[t]$ and (2)~$u,v \in V(\torso(t))$.
Because $\|\torso(t)\| \le 2^{\OO(k)}$, we have that $|\el(t)| \le 2^{\OO(k)}$, and furthermore, given the values $\el(c_i)$ for all children $c_i$ of $t$ and the hypergraph $\torso(t)$, we can compute $\el(t)$ in time $2^{\OO(k)}$.

When $\tilde{\Tc}$ is updated by a sequence $\seq$ of basic rotations into a superbranch decomposition $\tilde{\Tc}'$, the function $\el$ can be recomputed by recomputing it bottom-up for all nodes in the prefix $\trace_{\tilde{\Tc}'}(\seq)$, in time $2^{\OO(k)} \cdot |\trace_{\tilde{\Tc}'}(\seq)| = 2^{\OO(k)} \cdot \|\seq\|_{\tilde{\Tc}}$.
In particular, the running time guarantees of \Cref{lem:mainintlemma} hold even while maintaining $\el$.

We maintain an annotated tree decomposition $\Tc = (T,\bag,\edges)$, that is obtained from $\tilde{\Tc}$ and $\el$ as follows.
For each internal node $t \in \vint(\tilde{T})$ with parent $p$, let $\Tc_t = (T_t, \bag_t)$ be the tree decomposition of $\primal(\torso(t))$ outputted by applying the algorithm of \Cref{lem:computetorsotdecomp} with $\torso(t)$ and the hyperedge $e_p \in E(\torso(t))$.
Let also $q_t$ be the mapping $q_t \colon E(\torso(t)) \to \leaves(T_t)$ outputted by it.
The tree $T$ is constructed as follows.
First, the nodes of $T$ are
\[V(T) = \{v_\ell \mid \ell \in \leaves(\tilde{T})\} \cup \{v_{st} \mid st \in E(\tilde{T})\} \cup \bigcup_{t \in \vint(\tilde{T})} V(T_t).\]
The edges of $T$ consist of the union of
\begin{itemize}
\item edges $v_{\ell} v_{\ell t}$, where $\ell \in \leaves(\tilde{T})$ and $\ell t \in E(\tilde{T})$ is the edge of $\tilde{T}$ incident to $\ell$,
\item the edges of $T_t$ for all $t \in \vint(\tilde{T})$, and
\item for each $t \in \vint(\tilde{T})$ and an incident edge $ts \in E(\tilde{T})$, the edge $v_{ts} q_t(e_s)$, where $e_s \in E(\torso(t))$ is the hyperedge corresponding to the edge $ts$ of $\tilde{T}$. Note that $\adh(st) = V(e_s) \subseteq \bag_t(q_t(e_s))$.
\end{itemize}

Then, the bags of $\Tc$ are constructed as
\begin{itemize}
\item for each $\ell \in \leaves(\tilde{T})$, $\bag(v_\ell) = V(\tilde{\lmap}(\ell))$,
\item for each $st \in E(\tilde{T})$, $\bag(v_{st}) = \adh(st)$, and
\item for each $t \in \vint(\tilde{T})$ and $t' \in V(T_t)$, $\bag(t') = \bag_t(t')$.
\end{itemize}

The $\edges$ function is constructed by letting $\edges(v_\ell) = \emptyset$ for all $\ell \in \leaves(\tilde{T})$, $\edges(v_{st}) = \emptyset$ for all $st \in E(\tilde{T})$, and for each $t \in \vint(\tilde{T})$ and $s \in V(T_t)$, assigning $\edges(s)$ to contain all edges $uv \in \el(t) \setminus \binom{\adh(tp)}{2}$, for which $s$ is the smallest depth node of $\Tc_t$ with $u,v \in \bag_t(s)$, when interpreting $\Tc_t$ as rooted on $q_t(e_p)$, where $p$ is the parent of $t$.

We let the root of $T$ be the node $v_{\ell}$ corresponding to the leaf $\ell$ with $\lmap(\ell) = e_{\bot}$.
It is not difficult to show that $(T,\bag)$ is indeed a tree decomposition of $G$ by applying \Cref{obs:superbdtotd}.
Furthermore, because each torso of $\tilde{\Tc}$ has size at most $2^{\OO(k)}$, we have that $\depth(T) \le 2^{\OO(k)} \cdot \depth(\tilde{T}) \le 2^{\OO(k)} \log \|G\|$.
Because each tree $T_t$ has maximum degree $3$, it follows that $T$ has maximum degree $3$, and as its root has degree $1$, it follows that $T$ is binary.
We also observe that the $\edges$ function is correctly constructed, so that $(T,\bag,\edges)$ is an annotated tree decomposition of $G$.

We then argue that the width of $\Tc$ is at most $9 \cdot \tw(G) + 8$.
By \Cref{lem:wltotwlink}, we have $\wl(\su{G}) \le 3 \cdot \tw(G) + 3$.
Because $\tilde{\Tc}$ is downwards well-linked, by \Cref{lem:wltransindecomp} we get that for each $t \in \vint(\tilde{T})$, we have $\wl_{e_p}(\torso(t)) \le \wl(\su{G})$ and $\bdc_{\torso(t)}(e_p) \le \wl(\su{G})$.
Therefore, the tree decompositions outputted by \Cref{lem:computetorsotdecomp} have width at most $3 \cdot \wl(\su{G}) - 1 \le 9 \cdot \tw(G) + 8$.

We observe that $\Tc$ can be maintained locally, in the sense that if $\tilde{\Tc}$ is transformed to $\tilde{\Tc'}$ by a sequence of basic rotations $\seq$, then the only parts of $\Tc$ that need to be recomputed are the nodes corresponding to the nodes in $\trace_{\tilde{\Tc'}}(\seq)$ and the edges between them.
As the torsos of $\tilde{\Tc}$ have size at most $2^{\OO(k)}$, and the algorithm of \Cref{lem:computetorsotdecomp} runs in time $2^{\OO(\bdc(e_p) + \wl_{e_p}(\torso(t)))} \cdot \|\torso(t)\|^{\OO(1)} = 2^{\OO(k)}$, this means that $\Tc$ can be updated in time $2^{\OO(k)} \cdot \|\seq\|_{\tilde{\Tc}}$ whenever $\tilde{\Tc}$ is updated by a sequence $\seq$ of basic rotations.
Furthermore, this update of $\Tc$ can be expressed as a prefix-rebuilding update with a description of size $2^{\OO(k)} \cdot \|\seq\|_{\tilde{\Tc}}$.

By the guarantees given by \Cref{lem:mainintlemma}, the values $2^{\OO(k)} \cdot \|\seq\|_{\tilde{\Tc}}$ over all the updates have the amortized upper bound of $2^{\OO(k)} \log \|G\|$ per update.
Therefore, this is also an amortized upper bound for the running time of this data structure and the sizes of the descriptions of the prefix-rebuilding updates used for maintaining it.
\end{proof}

We then combine \Cref{lem:prdsautomaton,lem:finaldyntwprefixrebuild} to conclude the proof of \Cref{the:maintheorem}.

\maintheorem*
\begin{proof}
The data structure of \Cref{lem:finaldyntwprefixrebuild} already gives all parts of this theorem except for maintaining a run of the automaton $\autom$.
For this, we use the prefix-rebuilding data structure of \Cref{lem:prdsautomaton}.
We use it so, that the data structure of \Cref{lem:prdsautomaton} always contains a copy of the tree decomposition $\Tc$ maintained by the data structure of \Cref{lem:finaldyntwprefixrebuild}.
In particular, after the initialization of the data structure of \Cref{lem:finaldyntwprefixrebuild}, we initialize the data structure of \Cref{lem:prdsautomaton} with the tree decomposition returned by \Cref{lem:finaldyntwprefixrebuild}.
Then, on each update, we pass the description of a prefix-rebuilding update returned by \Cref{lem:finaldyntwprefixrebuild} to update the tree decomposition stored by the data structure of \Cref{lem:prdsautomaton}.
Now, at all points the $\mathsf{Query}$ operation of can be used to query the states of $\autom$ on the current tree decomposition $\Tc$.
This causes an additional running time overhead of a factor of $\evaltime(9k+8) \cdot k^{\OO(1)}$.
\end{proof}

Let us also describe briefly how the corollaries mentioned in \Cref{sec:intro} are obtained.

\cordp*
\begin{proof}
For the first bullet point, we observe that the classical dynamic programming algorithms for computing the size of a maximum independent set, the size of a minimum dominating set, and $q$-colorability for constant $q$, in time $2^{\OO(k)} n$, (see e.g.~\cite[Chapter~7]{DBLP:books/sp/CyganFKLMPPS15}) can be interpreted as tree decomposition automata with evaluation time $2^{\OO(k)}$.
For the second bullet point, it follows from the work of Courcelle~\cite{DBLP:journals/iandc/Courcelle90,Courcelle:2012book} that for every graph property expressible in the counting monadic second-order logic, there is a tree decomposition automaton with evaluation time $\OO_k(1)$ that maintains whether $G$ satisfies the property.
See~\cite[Lemma~A.2]{DBLP:conf/focs/KorhonenMNP023} for a formalization of this in our framework.
\end{proof}

\corsubexp*
\begin{proof}
There exists an integer $h_k \le \OO(\sqrt{k})$ so that every planar graph with treewidth $> h_k$ has no dominating set of size $\le k$ and contains a path of length $\ge k$~\cite{DemaineFHT05jacm}.
As recalled in the proof of \Cref{cor:twdp}, there is a tree decomposition automaton for computing the size of a minimum dominating set with evaluation time $2^{\OO(k)}$.
Furthermore, algorithm of~\cite{BodlaenderCK12} can be interpreted as a tree decomposition automaton for computing the length of a longest path with evaluation time $2^{\OO(k)}$.
We use the data structure of \Cref{the:maintheorem} with these two automata and treewidth bound $9 \cdot h_k + 9$.

This works under the promise that the treewidth of the planar graph $G$ stays at most $9 \cdot h_k + 9$ at all times, but we have no such promise.
However, we can obtain this via applying the well-known ``delaying invariant-breaking updates'' technique~\cite{EppsteinGIS96}.
In our context, this works as follows.
If there is an edge insertion that increases the width of the maintained tree decomposition $\Tc$ to $> 9 \cdot h_k + 8$, then we immediately reverse it by an edge deletion, and instead move the edge to a queue $Q$ holding edges that need to be inserted.
Then, at every subsequent update, if the queue $Q$ is non-empty, we attempt to insert edges from it to the data structure, until an insertion is ``rejected'', i.e., it would increase the width of $\Tc$ to $> 9 \cdot h_k + 8$.
Furthermore, if the width of $\Tc$ is already $> 9 \cdot h_k + 8$, we do not even attempt the insertion, but directly insert the edge to $Q$.
For edge deletions, if the edge is in $Q$, it is removed from $Q$, and if it is in the data structure, it is removed from it.
This ensures that we insert edges to the data structure only when it contains a tree decomposition of width $\le 9 \cdot h_k + 8$, which ensures the promise that treewidth never increases to $> 9 \cdot h_k + 9$.
Furthermore, this still keeps the amortized running times of the updates $2^{\OO(h_k)} \log n$

Now, whenever $Q$ is non-empty, we have that $\tw(G) > h_k$ and therefore $G$ has no dominating set of size $\le k$ and contains a path of length $\ge k$.
Whenever $Q$ is empty, the data structure holds the entire graph $G$, and the automata maintain the required information.
\end{proof}

\section{Conclusions}
\label{sec:conclusion}
We have given a dynamic treewidth data structure with logarithmic amortized update time for graphs of bounded treewidth.
We discuss here extensions of our result, its applications, and future directions.

\paragraph{Extensions.}
First, we note that the initialization procedure of our data structure assumes that the initial graph is edgeless.
Of course, an initialization operation with $2^{\OO(k)} n \log n$ amortized time for any $n$-vertex graph of treewidth $k$ can be obtained via inserting edges one by one, but perhaps sometimes it could be useful to initialize in $2^{\OO(k)} n$ amortized time with a given graph.
We believe that this can be done via constant-approximating treewidth~\cite{Korhonen21}, turning tree decompositions into logarithmic depth~\cite{DBLP:journals/siamcomp/BodlaenderH98}, and turning superbranch decompositions into downwards well-linked~\cite{DBLP:journals/corr/abs-2411-02658}.
However, this could get quite technical.

Second, the graph $G$ could be decorated with various labels that could be taken into account by the tree decomposition automata.
For example, \Cref{the:maintheorem} extends to maintaining the weight of a maximum independent set on vertex-weighted graphs or supporting shortest path queries on directed graphs with $2^{\OO(k)} \log n$ amortized update time.

We also recall that by applying the well-known ``delaying invariant-breaking updates'' technique~\cite{EppsteinGIS96}, as in the proof of \Cref{cor:corsubexp} (see~\cite{DBLP:conf/focs/KorhonenMNP023} for its previous application to dynamic treewidth), the data structure of \Cref{the:maintheorem} can be made resilient to the treewidth of $G$ increasing to more than $k$, in this case holding a marker \ttl instead of any other information while the treewidth of $G$ is larger than $k$.

Another direction would be to not have a pre-set treewidth bound $k$ at all, but instead let the running time of the data structure depend on the current treewidth $\tw(G)$.
We believe that with minor modifications, our data structure already achieves something along these lines, but phrasing it formally would get technical because of the amortization.

\paragraph{(Potential) applications.}
Perhaps the most significant application of the dynamic treewidth data structure of~\cite{DBLP:conf/focs/KorhonenMNP023} has been the parameterized almost-linear time algorithm for $H$-minor containment and $k$-disjoint paths by Korhonen, Pilipczuk, and Stamoulis~\cite{DBLP:conf/focs/KorhonenPS24}.
Together with the authors, we believe that by applying the logarithmic-time dynamic treewidth of \Cref{the:maintheorem} and the algorithm of~\cite{DBLP:journals/corr/abs-2411-02658}, the running time of the algorithm of~\cite{DBLP:conf/focs/KorhonenPS24} can be improved from almost-linear $\OO_{k}(m^{1+o(1)})$ to near-linear $\OO_{k}(m \polylog m)$.
The details of this remain to be written down in future work.

Another, less direct, application of the dynamic treewidth data structure of~\cite{DBLP:conf/focs/KorhonenMNP023} was its adaptation to \emph{dynamic rankwidth} by Korhonen and Sokołowski~\cite{DBLP:conf/stoc/Korhonen024}, which resulted in an $\OO_k(n^{1+o(1)}) + \OO(m)$ time algorithm for computing rankwidth, improving upon previous $\OO_k(n^2)$ time~\cite{DBLP:conf/stoc/FominK22}.
We believe that further improvements could be possible by extending the techniques of this paper to the setting of rankwidth.

Another application of dynamic treewidth in the literature is the adaptation of the Baker's technique~\cite{Baker94} for approximation schemes on planar graphs to the dynamic setting by Korhonen, Nadara, Pilipczuk, and Sokołowski~\cite{DBLP:conf/soda/KorhonenNPS24}.
They did not use a generic dynamic treewidth data structure, but a problem-specific method of using treewidth in the dynamic setting.
They achieved an update time $\OO_{\varepsilon}(n^{o(1)})$, so it would be interesting if our dynamic treewidth data structure could be used to improve this to $\OO_{\varepsilon}(\log n)$.

Potential future applications of dynamic treewidth include obtaining dynamic versions and improving the running times of the known applications of treewidth.
In addition to the ones already mentioned, this includes topics such as model checking for first-order logic~\cite{FrickG01-dec}, kernelization~\cite{BodlaenderFLPST16}, and various applications of the irrelevant vertex technique~\cite{DBLP:conf/stoc/GroheKMW11,DBLP:conf/soda/SauST25}.
Even more interesting would be applications of dynamic treewidth to settings where treewidth has not been applied before.

\paragraph{Future directions.}
The update time $2^{\OO(k)} \log n$ of our algorithm is optimal in the following sense: Dynamic forests require $\Omega(\log n)$ update time~\cite{DBLP:journals/siamcomp/PatrascuD06}, and no constant-approximation algorithms for treewidth running in time $2^{o(k)} n^{\OO(1)}$, or even in time $2^{o(n)}$, are known.
However, we could still ask if the running time could be improved to $2^{\OO(k)} + \OO(\log n)$, or to $f(k) + k^{\OO(1)} \log n$ for some function $f$.
Another natural question is whether our data structure can be de-amortized.
As the dynamic treewidth data structure of~\cite{DBLP:conf/focs/KorhonenMNP023} is also amortized, currently it is not known whether even an $\OO_k(n^{o(1)})$ worst-case update time can be achieved for maintaining tree decompositions with approximation ratio a function of $k$.

Another direction is about improving the approximation ratio.
The current ratio of $9$ comes from the factors of $3$ in both \Cref{lem:wltotwlink} and \Cref{lem:computetorsotdecomp}.
We believe that it can be shown that explicitly maintaining a tree decomposition with approximation ratio less than $3$ is not possible in $\OO_k(\log n)$ (amortized) update time, simply due to the tree decomposition requiring to change too much, and plan to write down this argument in the future.
In this light, an interesting goal would be to attain the ratio of $3$ within $\OO_k(\log n)$ amortized update time.

\appendix

\section{Missing proofs}
\label{sec:app:missingproofs}
We now give proofs of some lemmas that were previously omitted due to them being standard.

\paragraph{Algorithm for well-linkedness.}
We give an algorithm for testing if a set is well-linked.
For this and the following lemma, we use the following definition of a \emph{separation} of a graph.
A separation of a graph $G$ is a pair $(A,B)$ with $A,B \subseteq V(G)$, so that $A \cup B = V(G)$ and there are no edges between $A \setminus B$ and $B \setminus A$.
The \emph{order} of a separation $(A,B)$ is $|A \cap B|$.

\lemtestwl*
\begin{proof}
For a bipartition $(C_1, C_2)$ of $A$, we call the pair $(\bd(A) \cap \bd(C_1), \bd(A) \cap \bd(C_2))$ the \emph{signature} of $(C_1, C_2)$.
There are $2^{\OO(\bdc(A))}$ different signatures, so it suffices to design a polynomial-time algorithm that, given a signature $(S_1,S_2)$, either concludes that there are no bipartitions $(C_1,C_2)$ of $A$ with $\bdc(C_i) < \bdc(A)$ with signature $(S_1,S_2)$, or returns a bipartition $(C_1,C_2)$ of $A$ with $\bdc(C_i) < \bdc(A)$ for both $i \in [2]$, with any signature.

Let $G_A$ denote the hypergraph induced by the set $A$, i.e., having $V(G_A) = V(A)$ and $E(G_A) = A$, and denote $G' = \primal(A)$.
We observe that if there is a bipartition $(C_1,C_2)$ of $A$ with $\bdc(C_i) < \bdc(A)$ with signature $(S_1,S_2)$, then there is a separation $(X,Y)$ of $G'$ so that $S_1 \subseteq X$, $S_2 \subseteq Y$, and $|X \cap Y| + \max(|S_1 \setminus S_2|, |S_2 \setminus S_1|) < \bdc(A)$.
Furthermore, such a separation can be found in polynomial-time, using for example the Ford-Fulkerson maximum flow algorithm.
Also, such a separation can be turned into a desired bipartition $(C_1,C_2)$ by assigning every hyperedge $e$ with $V(e) \subseteq X$ into $C_1$, and other hyperedges, for which it holds that $V(e) \subseteq Y$, into $C_2$.
\end{proof}

\paragraph{From treewidth to well-linked-number.}
We then consider bounding the well-linked-number in terms of treewidth.
For this we use the following lemma, which is presented explicitly in~\cite[Lemma~7.20]{DBLP:books/sp/CyganFKLMPPS15}.

\begin{lemma}[{\cite[Lemma~7.20]{DBLP:books/sp/CyganFKLMPPS15}}]
\label{lem:twbalsep}
Let $G$ be a graph and $X \subseteq V(G)$.
There exists a separation $(A,B)$ of $G$ of order $\tw(G)+1$ so that $|(A \setminus B) \cap X| \le \frac{2}{3} \cdot |X|$ and $|(B \setminus A) \cap X| \le \frac{2}{3} \cdot |X|$.
\end{lemma}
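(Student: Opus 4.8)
The plan is to run the classical heavy-side orientation argument on an optimal tree decomposition. Fix a tree decomposition $(T,\bag)$ of $G$ with $\width((T,\bag))=\tw(G)$, so every bag has size at most $\tw(G)+1$. For an edge $e=uv\in E(T)$, removing $e$ splits $T$ into two subtrees $T_u\ni u$ and $T_v\ni v$; write $A_u=\bigcup_{s\in V(T_u)}\bag(s)$ and $A_v=\bigcup_{s\in V(T_v)}\bag(s)$. A standard application of the connectivity axiom of tree decompositions gives $A_u\cap A_v=\bag(u)\cap\bag(v)$, so $(A_u,A_v)$ is a separation of $G$ of order at most $\tw(G)+1$, and the sets $(A_u\setminus A_v)\cap X$, $(A_v\setminus A_u)\cap X$, $(A_u\cap A_v)\cap X$ partition $X$. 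First I would check whether some edge already satisfies both $|(A_u\setminus A_v)\cap X|\le\tfrac23|X|$ and $|(A_v\setminus A_u)\cap X|\le\tfrac23|X|$; if so, $(A_u,A_v)$ is the desired separation and we are done. Otherwise every edge has exactly one \emph{heavy} side, i.e.\ one whose intersection with $X$ has size more than $\tfrac23|X|$ (both sides cannot be heavy, since $(A_u\setminus A_v)\cap X$ and $(A_v\setminus A_u)\cap X$ are disjoint and so have total size at most $|X|<\tfrac43|X|$). Orient each edge towards its heavy side; since $T$ is a finite tree, following the orientations reaches a sink node $t$, all of whose incident edges point towards it.

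Next I would analyze the components $T_1,\dots,T_m$ of $T-t$. For each $i$, let $u_i$ be the neighbour of $t$ in $T_i$ and set $P_i:=\bigl(\bigcup_{s\in V(T_i)}\bag(s)\bigr)\setminus\bag(t)$. The same connectivity argument shows that $P_i$ equals the ``$T_i$ side minus the $t$ side'' of the separation induced by the edge $tu_i$; since that edge points to $t$, the $t$ side is heavy, hence $|P_i\cap X|<\tfrac13|X|$. Moreover the $P_i$ are pairwise disjoint, each disjoint from $\bag(t)$, together with $\bag(t)$ they cover $V(G)$, and there is no edge of $G$ between $P_i$ and $P_j$ for $i\ne j$.

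It then remains to split $\{1,\dots,m\}$ into two groups $G_1,G_2$ with $\sum_{i\in G_\ell}|P_i\cap X|\le\tfrac23|X|$ for $\ell\in\{1,2\}$. This is an elementary greedy fact: writing $S=\sum_i|P_i\cap X|\le|X|$, if $S\le\tfrac23|X|$ put everything in $G_1$; otherwise add items to $G_1$ until the running sum first reaches the window $[\,S-\tfrac23|X|,\ \tfrac23|X|\,]$, which is nonempty of length $\tfrac43|X|-S\ge\tfrac13|X|$, so the last item (of size $<\tfrac13|X|$) cannot overshoot it, and put the remaining items in $G_2$ (whose sum is then $S$ minus something $\ge S-\tfrac23|X|$). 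Finally set $A:=\bag(t)\cup\bigcup_{i\in G_1}P_i$ and $B:=\bag(t)\cup\bigcup_{i\in G_2}P_i$. Then $A\cup B=V(G)$, there are no edges between $A\setminus B=\bigcup_{i\in G_1}P_i$ and $B\setminus A=\bigcup_{i\in G_2}P_i$, $A\cap B=\bag(t)$ has order at most $\tw(G)+1$, and $|(A\setminus B)\cap X|,\ |(B\setminus A)\cap X|\le\tfrac23|X|$, as required; the degenerate case $E(T)=\emptyset$ is subsumed ($t$ is the unique node, $m=0$, and $A=B=V(G)=\bag(t)$).

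I do not expect a genuine obstacle here: the content is exactly the well-known derivation of a balanced separator from a tree decomposition. The only care points, all routine, are (i) extracting $A_u\cap A_v=\bag(u)\cap\bag(v)$ and the identification of $P_i$ from the connectivity axiom, (ii) noting that at most one side of each edge is heavy so the orientation is well defined and a sink exists, and (iii) the greedy partition of the $P_i$'s, which is the fiddliest bookkeeping step but entirely elementary.
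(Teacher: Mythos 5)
Your proof is correct, and it is essentially the standard argument that the paper relies on: the paper does not prove this lemma itself but cites it from Cygan et al.\ (Lemma~7.20), whose proof is exactly this heavy-side orientation on an optimal tree decomposition to find a central bag, followed by the greedy grouping of the components (your version even gets the slightly stronger bound $<\frac{1}{3}|X|$ per component). All the care points you flag (connectivity axiom identifications, uniqueness of the heavy side and existence of a sink, and the greedy split) check out.
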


\lemwltotwlink*
\begin{proof}
Suppose that there is a well-linked set $W \subseteq E(\su{G})$ with $\bdc(W) > 3 \cdot (\tw(G)+1)$.
Let $(A,B)$ be a separation of $G$ of order $\tw(G)+1$ with $|(A \setminus B) \cap \bd(W)| \le \frac{2}{3} \cdot \bdc(W)$ and $|(B \setminus A) \cap \bd(W)| \le \frac{2}{3} \cdot \bdc(W)$, which is guaranteed to exist by \Cref{lem:twbalsep}.
Let $(C_A, C_B)$ be the bipartition of $W$ constructed by putting a hyperedge $e \in W$ to $C_A$ if $V(e) \subseteq A$ and to $C_B$ otherwise (in which case $V(e) \subseteq B$).
We have that $\bd(C_A) \subseteq (A \cap B) \cup ((A \setminus B) \cap \bd(W))$ and $\bd(C_B) \subseteq (A \cap B) \cup ((B \setminus A) \cap \bd(W))$.
It follows that
\begin{align*}
\bdc(C_A) &\le \tw(G)+1 + \frac{2}{3} \cdot \bdc(W)\\
&< \frac{1}{3} \cdot \bdc(W) + \frac{2}{3} \cdot \bdc(W)\\
&< \bdc(W).
\end{align*}
By a similar argument we conclude that $\bdc(C_B) < \bdc(W)$, contradicting that $W$ is well-linked.
\end{proof}

\paragraph{From well-linked-number to treewidth.}
We prove \Cref{lem:computetorsotdecomp} via two intermediate lemmas.

A \emph{branch decomposition} of a hypergraph is a superbranch decomposition where every non-leaf node has degree three.
The width of a branch decomposition $\Tc$ is $\width(\Tc) = \adhsize(\Tc)$.
We start by giving a version of \Cref{lem:computetorsotdecomp} that outputs a branch decomposition instead of a tree decomposition.
This follows the techniques of~\cite{RobertsonS-GMXIII}.

\begin{lemma}
\label{lem:wltobwoutputdecomp}
There is an algorithm that, given a hypergraph $G$ and a hyperedge $e \in E(G)$, in time $2^{\OO(\bdc(e)+\wl_e(G))} \cdot \|G\|^{\OO(1)}$ returns a branch decomposition of $G$ of width $\le \max(\bdc(e), 2 \cdot \wl_e(G))$.
\end{lemma}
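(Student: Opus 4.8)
The plan is to build the branch decomposition recursively, following the classical Robertson--Seymour approach of recursively splitting along a most-balanced separation of low order. The key invariant to maintain is that at each recursive call we work with a set $X \subseteq E(G)$ together with a ``boundary budget'': we will recurse on pairs $(X, S)$ where $S = \bd(X)$ and we know $\bdc(X) \le \max(\bdc(e), 2\cdot\wl_e(G))$. The root call is on $X = E(G)$, with the leaf $\lmap^{-1}(e)$ singled out so that $e$ always stays on one side, and the first split separates $\{e\}$ from $E(G) \setminus \{e\}$, giving a node with the hyperedge $e$ as one of its three subtrees.

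First I would establish the core recursive step: given $X$ with $|X| \ge 3$ and $\bdc(X) \le W := \max(\bdc(e), 2\cdot\wl_e(G))$ (and such that $X$ avoids $e$, or handles $e$ as a distinguished singleton), find a bipartition $(X_1, X_2)$ of $X$ with $|X_1|, |X_2| \ge 1$, both parts nonempty and strictly smaller than $X$, such that $\bdc(X_i) \le W$ for both $i$. This is where \Cref{lem:testwl} comes in: if $X$ is not well-linked, \Cref{lem:testwl} hands us a bipartition with $\bdc(X_i) < \bdc(X) \le W$; if $X$ \emph{is} well-linked, then by definition of $\wl_e(G)$ we have $\bdc(X) \le \wl_e(G) \le W$ (using that $X$ avoids $e$), and now I would invoke a balanced-separator argument analogous to \Cref{lem:twbalsep}, or directly argue via submodularity: pick any bipartition $(X_1,X_2)$ into nonempty parts; since $X$ is well-linked, $\max(\bdc(X_1),\bdc(X_2)) \ge \bdc(X)$, but we need an \emph{upper} bound, so instead I would take $X_1 = \{f\}$ for a single hyperedge $f \in X$ and $X_2 = X \setminus \{f\}$; then $\bdc(X_1) = |V(f)| \le$ (the size bound, which needs care), and $\bdc(X_2) \le \bdc(X) + \bdc(X_1)$ by submodularity, giving $\bdc(X_2) \le \wl_e(G) + \bdc(\{f\})$. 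To keep this under $2\cdot\wl_e(G)$ I would choose $f$ so that $V(f) \subseteq \bd(X)$, which is possible because some hyperedge must be incident only to the boundary when $X$ is well-linked of order $\wl_e(G)$ --- actually the clean way is: peel off hyperedges one at a time, each peeled set being a singleton (width $\le \wl_e(G)$ if $V(f) \subseteq \bd(X)$, which holds as $\bd(\{f\}) \subseteq \bd(X)$ when $X\setminus\{f\}$ still covers all of $V(f)\cap\bd(X)$... this requires $G$ normal), and the remainder has order $\le \wl_e(G) + \wl_e(G) = 2\wl_e(G)$ by submodularity. The normality of $G$ (noted before the lemma: every torso is normal) ensures $\bd(\{f\}) = V(f)$, and a well-linked set of order $\wl_e(G)$ has all its ``private'' structure bounded, so each singleton peeled has boundary $\subseteq \bd(X)$ hence $\le \wl_e(G)$.

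Then I would assemble the recursion into the branch decomposition: create an internal node whose three incident edges lead to the subtree for $X_1$, the subtree for $X_2$, and the subtree for $\overline{X}$ (the complement, which is $E(G)\setminus X$, reached from the parent); the adhesion on the edge toward the $X_i$-subtree has size $\bdc(X_i) \le W$, so the width bound is maintained. Recurse on $X_1$ and $X_2$; base case is $|X| \le 2$ where we make a node with leaves directly, or $|X|=1$ which is already a leaf. Each call runs \Cref{lem:testwl} once in time $2^{\OO(\bdc(X))}\|G\|^{\OO(1)} = 2^{\OO(W)}\|G\|^{\OO(1)} = 2^{\OO(\bdc(e)+\wl_e(G))}\|G\|^{\OO(1)}$, and there are $\OO(|E(G)|) \le \|G\|^{\OO(1)}$ calls since each call reduces $|X|$ and the recursion tree has $\OO(|E(G)|)$ leaves, so the total running time is $2^{\OO(\bdc(e)+\wl_e(G))}\cdot\|G\|^{\OO(1)}$ as required.

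The main obstacle I anticipate is the precise bookkeeping in the well-linked case to guarantee width exactly $\le \max(\bdc(e), 2\cdot\wl_e(G))$ rather than something weaker like $3\wl_e(G)$ --- i.e., making the ``peel off one hyperedge at a time'' argument actually yield the factor $2$ and not accumulate error across the $|X|$ peeling steps. The resolution is that we do not peel iteratively at one node; rather, when $X$ is well-linked we still only do a \emph{single} bipartition into $\{f\}$ and $X \setminus \{f\}$, recursing on $X \setminus \{f\}$, and crucially $X \setminus \{f\}$ \emph{remains} well-linked-bounded: $\bdc(X \setminus \{f\}) \le \bdc(X) + \bdc(\{f\}) - \bdc(X \cap \{f\})$... one must verify $\bdc(X \setminus \{f\}) \le \wl_e(G) + |V(f)\cap\bd(X)|$ and that this telescopes correctly, but since at every subsequent step the set is still a subset of the original well-linked $X$ avoiding $e$, its own boundary when it is itself well-linked is again $\le \wl_e(G)$, and when it is not well-linked \Cref{lem:testwl} brings the order strictly down --- so the bound $2\wl_e(G)$ is a one-step slack that never compounds. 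I would state this as a clean claim and prove it by the submodularity inequality already displayed in the proof of \Cref{lem:wltransitive}.
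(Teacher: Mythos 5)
Your proposal follows essentially the same route as the paper's proof: recursively test the current piece for well-linkedness with \Cref{lem:testwl}, split with a strict decrease in boundary size when it is not well-linked, and peel off a single hyperedge when it is, accepting a one-step slack of $2\cdot\wl_e(G)$ that does not compound; the assembly into a ternary tree and the $\OO(|E(G)|)$ bound on the number of recursive calls also match. The one place where your justification goes astray is the bound on $\bdc(\{f\})$ in the well-linked case: the claim that one can choose $f$ with $V(f)\subseteq\bd(X)$ (and the appeal to normality) is neither needed nor true in general. The correct one-line argument is that every singleton $\{f\}$ is well-linked and $f\neq e$, so $\bdc(\{f\})\le\wl_e(G)$ directly from the definition of $\wl_e(G)$; combined with your submodularity step $\bdc(X\setminus\{f\})\le\bdc(X)+\bdc(\{f\})\le 2\cdot\wl_e(G)$, this closes the argument exactly as in the paper.
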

\begin{proof}
We implement a recursive algorithm, that takes $G$ and $e$ as input, returns a branch decomposition $\Tc$ of $G$ of width $\le \max(\bdc(e), 2 \cdot \wl_e(G))$, and runs in time $2^{\OO(w)} \cdot \|G\|$, where $w$ is the width of the returned tree decomposition.

The base case is when $|E(G)| \le 2$, in which case the unique branch decomposition has width $\bdc(e)$, and we can easily construct it in $\|G\|^{\OO(1)}$ time.

When $|E(G)| \ge 3$, we first apply the algorithm of \Cref{lem:testwl} to test if $E^{-} = E(G) \setminus \{e\}$ is well-linked in time $2^{\OO(\bdc(e))} \cdot \|G\|$.

Suppose first that the algorithm concludes that $E^{-}$ is not well-linked and returns a bipartition $(C_1,C_2)$ of $E^{-}$ so that $\bdc(C_i) < \bdc(E^{-})$ for both $i \in [2]$.
Let $G_i = G \rescliqs \co{C_i}$, and denote by $e_i$ the hyperedge of $G_i$ corresponding to $\co{C_i}$.
We have that $\wl_{e_i}(G_i) \le \wl_{e}(G)$ because any well-linked set $W$ in $G_i$ not containing $e_i$ is also a well-linked set in $G$ not containing $e$.
We also have that $\bdc_{G_i}(e_i) < \bdc_G(e)$ because $\bdc(C_i) < \bdc(E^{-})$.
Therefore, we apply the algorithm recursively to compute, for both $G_1$ and $G_2$, a branch decomposition $\Tc_i = (T_i, \lmap_i)$ of $G_i$ of width at most $\max(\bdc_{G_i}(e_i), 2 \cdot \wl_{e_i}(G_i)) \le \max(\bdc_G(e), 2 \cdot \wl_e(G))$.

We construct a branch decomposition $\Tc = (T,\lmap)$ of $G$ by taking the disjoint union of $\Tc_1$ and $\Tc_2$, identifying the leaves of $\Tc_1$ and $\Tc_2$ corresponding to $e_1$ and $e_2$ into a node $t$, and adding a leaf adjacent to $t$ corresponding to $e$.
Clearly, the width of $\Tc$ is at most the maximum of the widths of $\Tc_1$ and $\Tc_2$, and $\bdc(e)$.

Suppose then that the algorithm of \Cref{lem:testwl} concluded that $E^{-}$ is well-linked.
In that case, let $C_1 \subseteq E^{-}$ be an arbitrary subset of $E^{-}$ of size $|C_1| = 1$, and $C_2 = E^{-} \setminus C_1$.
Now, define $G_1$, $G_2$, $e_1$, and $e_2$ similarly as in the previous case.
Because $C_1$ is well-linked and $|C_1| = 1$, we trivially obtain a branch decomposition $\Tc_1$ of $G_1$ of width $\wl_e(G)$.
Because $C_1$ and $E^{-}$ are well-linked, we have
\[\bdc(C_2) \le \bdc(C_1) + \bdc(e) \le \bdc(C_1) + \bdc(E^{-}) \le 2 \cdot \wl_e(G).\]
We also have that $\wl_{e_2}(G_2) \le \wl_{e}(G)$, because any well-linked set in $G_2$ not containing $e_2$ corresponds to a well-linked set in $G$ not containing $e$.
We therefore construct recursively a branch decomposition $\Tc_2 = (T_2, \lmap_2)$ of $G_2$, of width $\le \max(\bdc_{G_2}(e_2), 2 \cdot \wl_{e_2}(G_2)) \le 2 \cdot \wl_e(G)$.
By combining $\Tc_1$ and $\Tc_2$ as in the previous case, we obtain a branch decomposition of $G$ of width at most $2 \cdot \wl_e(G)$.

Clearly, both of the recursion steps can be implemented in time $2^{\OO(w)} \cdot \|G\|^{\OO(1)}$, where $w$ is the width of the resulting decomposition.
Because in both cases we have that $|E(G_i)| \le |E(G)|-1$ and $|E(G_1)|+|E(G_2)| \le |E(G)|+1$, it follows that there are at most $\OO(|E(G)|)$ recursion steps, so the overall running time is $2^{\OO(w)} \cdot \|G\|^{\OO(1)}$.
\end{proof}

We then recall the well-known fact that branch decompositions can be converted into tree decompositions.

\begin{lemma}[\cite{RobertsonS91}]
\label{lem:twbwconn}
There is an algorithm, that given a normal hypergraph $G$ and a branch decomposition $\Tc$ of $G$, in time $\|G\|^{\OO(1)}$ outputs a tree decomposition $\Tc' = (T',\bag')$ of $\primal(G)$, and an injective mapping $q \colon E(G) \to \leaves(T')$ so that
\begin{itemize}
\item $\width(\Tc') \le \frac{3}{2} \cdot \width(\Tc)-1$,
\item $\|\Tc'\| \le \|G\|^{\OO(1)}$,
\item the maximum degree of $T'$ is $3$, and
\item for all $e \in E(G)$, $V(e) \subseteq \bag(q(e))$.
\end{itemize}
\end{lemma}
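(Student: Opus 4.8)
The plan is to reuse the tree of the branch decomposition verbatim. Write $\Tc=(T,\lmap)$ and let $w=\width(\Tc)=\adhsize(\Tc)$. I set $T'=T$, take $q=\lmap^{-1}$ (a bijection $E(G)\to\leaves(T')=\leaves(T)$, hence injective), and define the bags by $\bag'(\ell)=V(\lmap(\ell))$ for each leaf $\ell$ and $\bag'(t)=\adh(ts_1)\cup\adh(ts_2)\cup\adh(ts_3)$ for each internal node $t$, where $ts_1,ts_2,ts_3$ are the three tree-edges incident to $t$ (recall that every internal node of a branch decomposition has degree exactly $3$). Then $V(e)\subseteq\bag'(q(e))=V(e)$ holds by construction, $T'$ has maximum degree $3$, and since $T$ has $\OO(|E(G)|)$ nodes and every bag has size $\le\|G\|$ we get $\|\Tc'\|\le\|G\|^{\OO(1)}$; everything is computed in polynomial time from the adhesions already stored in the representation of $\Tc$. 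The degenerate cases $|E(G)|\le 2$, where $T$ has no internal node, are handled by hand.

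Next I verify that $(T',\bag')$ is a tree decomposition of $\primal(G)$. The vertex- and edge-covering axioms are immediate: every $v\in V(G)$ lies in some hyperedge $e$ and hence in $\bag'(\lmap^{-1}(e))$, and every edge $uv$ of $\primal(G)$ arises from a hyperedge $e$ with $u,v\in V(e)=\bag'(\lmap^{-1}(e))$. For connectivity, fix $v\in V(G)$, let $L_v=\{\lmap^{-1}(e): v\in V(e)\}$, and let $T_v$ be the minimal subtree of $T$ connecting $L_v$ (equivalently, the union of all $T$-paths between pairs of leaves of $L_v$); this $T_v$ is connected, and its leaves are exactly $L_v$. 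I claim $\{t: v\in\bag'(t)\}=V(T_v)$. For a leaf this is immediate from $\bag'(\ell)=V(\lmap(\ell))$. For an internal node $t$, writing $E_i=\lmap(\vec{s_i t})$ for the three parts of $E(G)$ cut off by removing $t$, we have $\adh(ts_i)=\bd(E_i)=V(E_i)\cap V(E(G)\setminus E_i)$, so $v\in\bag'(t)$ iff $v$ lies in two hyperedges from different parts $E_i$, which is precisely the condition that $t$ lies on a $T$-path between two leaves of $L_v$, i.e.\ $t\in V(T_v)$.

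The crux is the width bound, and in particular the bound on internal-node bags. For a leaf $\ell$ with incident edge $\ell t$, normality of $G$ gives $\bd(\{\lmap(\ell)\})=V(\lmap(\ell))$, so $|\bag'(\ell)|=|\adh(\ell t)|\le w$. For an internal node $t$ with $A_i=\adh(ts_i)$, I claim every $v\in\bag'(t)$ lies in at least two of $A_1,A_2,A_3$: if $v\in A_1=V(E_1)\cap(V(E_2)\cup V(E_3))$ then $v\in V(E_1)$ and, say, $v\in V(E_2)$, whence also $v\in V(E_2)\cap(V(E_1)\cup V(E_3))=A_2$. Therefore $2\,|\bag'(t)|\le|A_1|+|A_2|+|A_3|\le 3w$, giving $|\bag'(t)|\le\lfloor\tfrac{3}{2}w\rfloor$. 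Combining the two estimates, $\width(\Tc')=\max_t|\bag'(t)|-1\le\lfloor\tfrac{3}{2}w\rfloor-1\le\tfrac{3}{2}\,\width(\Tc)-1$, as required. The only ingredients with genuine content are this ``two-out-of-three'' counting and the identification of $\{t: v\in\bag'(t)\}$ with $T_v$; both are classical (this is the Robertson--Seymour argument), and I would just write out the boundary bookkeeping for them carefully, everything else being routine.
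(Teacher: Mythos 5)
Your proposal is correct and follows essentially the same route as the paper: reuse the tree of the branch decomposition, set $q=\lmap^{-1}$, take leaf bags $V(\lmap(\ell))$ and internal bags equal to the union of the three incident adhesions, and bound internal bags by the same ``two-out-of-three'' counting, with normality handling the leaf bags. The only difference is that you verify the tree-decomposition axioms by hand where the paper invokes \Cref{obs:superbdtotd}; this is immaterial.
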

\begin{proof}
Let $\Tc = (T,\lmap)$ be the given branch decomposition.
We define $\bag \colon V(T) \to 2^{V(G)}$ as $\bag(t) = V(\torso(t))$ when $t \in \vint(T)$ and $\bag(t) = V(\lmap(t))$ when $t \in \leaves(T)$.
We observe that $(T,\bag)$ is a tree decomposition of $\primal(G)$.
Furthermore, it holds that $\|(T,\bag)\| \le \OO(\|G\|^2)$, the maximum degree of $T$ is $3$, and if we set $q(e) = \lmap^{-1}(e)$, then $q \colon E(G) \to \leaves(T)$ is an injective mapping so that $V(e) \subseteq \bag(q(e))$.

Clearly, $(T,\bag)$ can be constructed from a representation of $\Tc$ in polynomial time.
It remains to bound the width of $(T,\bag)$.

Because $G$ is normal, we have that when $t \in \leaves(T)$, it holds that $|\bag(t)| = |V(\lmap(t))| = \bdc(\lmap(t)) \le \width(\Tc)$.
When $t \in \vint(T)$, we have that $|\bag(t)| = |\adh(at) \cup \adh(bt) \cup \adh(ct)|$, where $a,b,c$ are the nodes adjacent to $t$ in $T$.
If a vertex of $G$ is in one of the sets $\adh(at)$, $\adh(bt)$, and $\adh(ct)$, then it is in at least two of them, so it follows that $|\bag(t)| \le \frac{3}{2} \cdot \width(\Tc)$.
Therefore, in both cases $|\bag(t)| \le \frac{3}{2} \cdot \width(\Tc)$, so $\width((T,\bag)) \le \frac{3}{2} \cdot \width(\Tc)-1$.
\end{proof}

Now \Cref{lem:computetorsotdecomp} is a straightforward consequence.

\computetdecomp*
\begin{proof}
Follows by combining \Cref{lem:wltobwoutputdecomp} with \Cref{lem:twbwconn}.
\end{proof}

\section{Tree decomposition automata}
\label{sec:apptdautom}
We then given a more formal definition of tree decomposition automata, which is based on the definition in~\cite{DBLP:conf/focs/KorhonenMNP023}.
Assume that the vertices of the graphs we process come from a countable, totally ordered universe $\Omega$, which could be assumed to equal $\mathbb{N}$.
A tree decomposition automaton is a tuple $\autom = (Q, \iota, \delta)$, where
\begin{itemize}
\item $Q$ is a (possibly infinite) set of states,
\item $\iota \colon 2^{\Omega} \to Q$ is an \emph{initial mapping} that maps bags of leaf nodes to states, and
\item $\delta \colon 2^{\Omega} \times 2^{\Omega} \times 2^{\Omega} \times 2^{\binom{\Omega}{2}} \times Q \times Q \to Q$ is a \emph{transition mapping} that describes the transitions.
\end{itemize}

We assume that the state set $Q$ contains a ``null state'' $\bot$.
The run of a tree decomposition automaton $\autom$ on an annotated tree decomposition $(T,\bag,\edges)$ is the unique labeling $\run_{\autom} \colon V(T) \to Q$ satisfying,
\begin{itemize}
\item for each node $\ell$ with no children,
\[\run_{\autom}(\ell) = \iota(\bag(\ell)),\]
\item for each node $t$ with one child $x$, 
\[\run_{\autom}(t) = \delta(\bag(t), \bag(x), \emptyset, \edges(t), \run_{\autom}(x), \bot), \text{ and}\]
\item for each node $t$ with two children $x$ and $y$,
\[\run_{\autom}(t) = \delta(\bag(t), \bag(x), \bag(y), \edges(t), \run_{\autom}(x), \run_{\autom}(y)).\]
\end{itemize}

On the algorithmic level, $\autom$ is represented as a pair of word RAM machines, one implementing the function $\iota$ and other the function $\delta$.
If $\autom$ has the property that the functions $\iota$ and $\delta$ run in time at most $\evaltime(k)$ when computing runs on tree decompositions of width at most $k$, then $\autom$ has evaluation time $\evaltime(k)$.
This also implies that the states can be represented in space of at most $\evaltime(k)$ words, i.e. $\OO(\evaltime(k) \log n)$ bits, as they are assumed to be explicitly output by these word RAM machines.

\bibliographystyle{alpha}
\bibliography{book_kernels_fvf}

\end{document}